\tikzset{>={Latex[width=1.5mm,length=2mm]}}
\renewcommand{\_}{\kern-1.5pt\textunderscore\kern-1.5pt}
		\setlist[enumerate,1]{label=\arabic*)}
		\setlist[enumerate,2]{label=\alph*)}
		\setlist[enumerate,3]{label=(\roman*)}
		\setlist[enumerate,4]{label=(\arabic*)}
		\setlist[enumerate,5]{label=(\Alph*)}
		\setlist[enumerate,6]{label=(\Roman*)}
		\setlist[enumerate,7]{label=\arabic*}
		\setlist[enumerate,8]{label=\alph*}
		\setlist[enumerate,9]{label=\roman*}
		\setlist[itemize]{label=$\cdot$}
		\setlist[itemize,1]{label=\textbullet}
		\setlist[itemize,2]{label=$\circ$}
		\setlist[itemize,3]{label=$\ast$}
		\setlist[itemize,4]{label=$\dagger$}
		\setlist[itemize,5]{label=$\triangleright$}
		\setlist[itemize,6]{label=$\bigstar$}
		\setlist[itemize,7]{label=$\blacklozenge$}
		\setlist[itemize,8]{label=$\prime$}
\author{Siddhartha Jayanti}
\date{January, 2016}
\def\BState{\State\hskip-\ALG@thistlm}
\algnewcommand\Not{\textbf{not }}
\newcommand{\floor}[1]{\left\lfloor #1 \right\rfloor}
\newcommand{\Find}[1]{\textit{find}(#1)} 
\newcommand{\Unite}[1]{\textit{unite}(#1)}
\newcommand{\CAS}[1]{CAS(#1)}
\renewcommand{\ast}[0]{{}^*} 
\begin{document}

\newtheorem{theorem}{Theorem}
\newtheorem{example}{Example}
\newtheorem{lemma}{Lemma}
\newtheorem{corollary}{Corollary}
\newtheorem{remark}{Remark}
\newtheorem{conjecture}{Conjecture}
\newtheorem*{thm}{Theorem}

\begin{Center}
\textbf{{\large Concurrent Disjoint Set Union}\footnote{This paper is a combination and revision of \cite{JT16} and part of \cite{JayantiTarjanBoix}.  Research at Princeton University partially supported by the Princeton University Innovation Fund for Industrial Collaborations and gifts from Microsoft.}}
\end{Center}\par

\vspace{\baselineskip}
\begin{Center}
Siddhartha Jayanti\footnote{Computer Science and Artificial Intelligence Laboratory (CSAIL), MIT} and Robert E. Tarjan\footnote{Department of Computer Science, Princeton University; and Intertrust Technologies}
\end{Center}\par

\begin{Center}
March 2020 
\end{Center}\par

\vspace{1in}
\begin{abstract}

We develop and analyze concurrent algorithms for the disjoint set union (``union-find" ) problem in the shared memory, asynchronous multiprocessor model of computation,\ with CAS (compare and swap) or DCAS (double compare and swap) as the synchronization primitive.  \  We give a deterministic bounded wait-free algorithm that uses DCAS and has a total work bound of
$O\biggl( m \cdot \left( \log{\left(\frac{np}{m} + 1 \right)} + \alpha{\left(n, \frac{m}{np} \right)} \right) \biggr)$ 
for a problem with $n$ elements and $m$ operations solved by $p$ processes, where $\alpha$ is a functional inverse of Ackermann’s function.\  We give two randomized algorithms that use only CAS and have\ the same work bound in expectation.  The analysis of the second randomized algorithm is valid even if the scheduler is adversarial.\  Our DCAS and randomized algorithms take O(log\textit{n}) steps per operation, worst-case for the DCAS algorithm, high-probability for the randomized algorithms.\  Our work and step bounds grow only logarithmically with $p$, making our algorithms truly scalable.\  We prove that for a class of symmetric algorithms that includes ours, no better step or work bound is possible.\  \  \par
\end{abstract}

\newpage
\section{Introduction}

\vspace{\baselineskip}
As data sets get bigger and bigger, it becomes more and more important to harness the potential of parallelism to solve computational problems -\ even linear time is too slow.  In the late twentieth century, many beautiful and efficient algorithms were developed in the PRAM (parallel random access machine)\ model, which assumes a memory shared among many synchronized processors.  In practice, however, synchronization is expensive or may not be possible.\  A weaker model that has attracted much attention in the distributed systems community is the APRAM (asynchronous parallel random access\ machine) model, in which a common memory is shared among many unsynchronized processors.  In the most general version of this model, any processor can be arbitrarily slow compared to any other.\  \par

\vspace{\baselineskip}
Obtaining efficiency bounds in the APRAM model is extremely challenging: the use of locks, for example, seems to make it impossible to guarantee efficiency, since one process could set a lock and then go to sleep indefinitely, blocking progress by any other process that needs access to the same resource. To overcome this problem, systems researchers have invented synchronization primitives that do not use locks, notably CAS (compare and swap) \cite{H}, transactional memory \cite{Herlihy:1993:TMA:173682.165164}, and others.\  These primitives allow at least\ the possibility of obtaining good efficiency bounds for asynchronous concurrent algorithms.  Yet, except for $``$embarrassingly parallel$"$  computations, this possibility is almost unrealized.\  Indeed, we know of only one example of a concurrent data structure (other than our work, to be described) for which a work bound without a term at least linear in the number of processes has been obtained.\  This is an implementation by Ellen and Woefel \cite{EW} of a fetch-and-increment object.\par

\vspace{\baselineskip}
An important problem in data structures that could benefit from an efficient concurrent algorithm is \textit{disjoint set union}, also known as the \textit{union-find} problem.\  The simplest version of this problem requires maintaining a collection of disjoint sets, each containing a unique element called its \textit{representative}, under two operations:\par

\vspace{\baselineskip}
\textit{find}(\textit{x}): return the representative of the set containing element \textit{x.}\par

\vspace{\baselineskip}
\textit{unite}(\textit{x}, \textit{y}): if elements\textit{ x }and \textit{y} are in different sets, unite these sets into a single set and designate some element in the new set to be its representative; otherwise, do nothing.\par

\vspace{\baselineskip}
Each initial set is a singleton, whose representative is its only element.\  Note that the implementation is free to choose the representative of each new set produced by a \textit{unite}. This freedom simplifies concurrent implementation, as we discuss in Section 4. Other versions of the problem add operations for initializing singleton sets and for maintaining and retrieving information about the sets such as names or sizes.\  We study the simplest version but comment on extensions in Section 9.\par

\vspace{\baselineskip}
Applications of sequential disjoint set union include storage allocation in compilers \cite{LA}, finding minimum spanning trees using Kruskal’s algorithm \cite{K56}, maintaining the connected components of an undirected graph under edge additions \cite{T75}, testing percolation \cite{SW}, finding loops and dominators in flow graphs \cite{tarjan1973testing, tarjan1974finding, FGMT13}, and finding strong components in directed graphs.  
Some of these applications, notably finding connected components \cite{TarjanDFS, SHILOACH198257, Johnson1997connected, HALPERIN20011, Rastogi12, LiuTarjan} and finding strong components, are on immense graphs and could potentially benefit from the use of concurrency to speed up the computation.
For example, model checking requires finding strong components in huge, implicitly defined directed graphs \cite{W02, Blo15, BLvP16}.\  There are sequential linear-time strong components algorithms \cite{TarjanDFS, SHARIR198167}, but these may not be fast enough for this application.
The sequential algorithms use depth-first search \cite{TarjanDFS}, which apparently cannot be efficiently parallelized \cite{Reif}.
If one had an efficient concurrent disjoint set union algorithm one could use it in combination with breadth-first search to potentially speed up model checking. 
This application, described to the second author by Leslie Lamport, was the original motivation for our work.  \par

\vspace{\baselineskip}
The classical sequential solution to the disjoint set union problem is the \textit{compressed tree} data structure \cite{galler1964improved, fischer1972efficiency, hopcroft1973set, Tar75, TvL84}.\  With appropriate tree linking and path compaction rules, \textit{m} operations on sets containing a total of \textit{n} elements take $O(\alpha(n, m/n))$ time \cite{Tar75, TvL84, GKLT14}, where $\alpha$  is a functional inverse of Ackermann’s function, defined in Section 3.\  Three linking rules that suffice are linking by size \cite{Tar75}, linking by rank \cite{TvL84}, and linking by random\ index \cite{GKLT14}; three compaction rules that suffice are compression \cite{Tar75, TvL84, GKLT14}, splitting \cite{TvL84, GKLT14}, and halving \cite{TvL84, GKLT14}.  \par

\vspace{\baselineskip}
Perhaps surprisingly, there has been almost no previous research on wait-free concurrent disjoint set\ union.  We have found only one such effort, that of Anderson and Woll \cite{AW91}.  Their work contains a number of significant ideas that are the genesis of our results, but it has many flaws that reveal the subtlety\ of the problem.  We use their concurrency model.\  In one of our linking algorithms we use DCAS (double compare and swap), as a synchronization primitive, whereas they used only the weaker CAS (compare and swap) primitive.\par

\vspace{\baselineskip}
Anderson and Woll considered an alternative formulation of the problem in which sets do not have representatives and the two operations are \textit{same-set}(\textit{x}, \textit{y}), which returns true if \textit{x} and \textit{y} are in the same set and false otherwise, and \textit{unite}(\textit{x}, \textit{y}), which combines the sets containing \textit{x} and \textit{y}\ into a single set if these sets are different.  (We discuss \textit{same-set} further in Section 4.)\  They attempted to develop an efficient concurrent solution that combines linking by rank with a concurrent version of path halving.\  They claimed a bound of $O(m \cdot (p + \alpha(m,1))$ on the total work, where $p$\ is the number of processors.  (They did not treat \textit{n} as a separate parameter.). Their linking method can produce paths of $\Omega(p)$ nodes of equal rank.\  The $O(mp)$ term in their work bound accounts for such paths.\  Their proof of their upper bound is not correct, because they did not consider\ interference among different processes doing halving on intersecting paths.  Whether or not their bound is correct, it is easy to show that their algorithm can take $\Omega(np)$ work to do $n - 1$ unite operations, compared to the $O(n \alpha(n, 1))$ time required by one process.  Thus in the worst case their work bound gives essentially no speedup.\par

\vspace{\baselineskip}
Anderson and Woll also claimed a work bound of $O(m\cdot(\alpha(m, 1) + \log^* p))$ for a synchronous PRAM algorithm that uses deterministic coin tossing\ \cite{CV86} to break up long paths of equal-rank nodes.  They provided no details of this algorithm and no proof of the work bound.  We think that their bound is incorrect and that the work bound of their algorithm is $\Omega(n \log p)$, since it is easy to construct sets of operations that do linking by rank exactly but such that concurrent finds with halving take $\Omega(\log p)$ steps per find, even on a PRAM.  See Section 8.\  Deterministic coin tossing does seem to be a good idea, however: we conjecture that it can give an efficient (but complicated) deterministic set union algorithm in the APRAM model using only CAS for synchronization, at the cost of a multiplicative $\log^* p$ factor in the work bound.

\vspace{\baselineskip}
In this paper we apply the ideas of Anderson and Woll and some additional ones to develop\ several efficient concurrent algorithms for disjoint set union.  We give three concurrent implementations of \textit{unite},\ one deterministic and the other two randomized.  The deterministic method uses DCAS to do linking by rank.\  The randomized methods use only CAS: one does linking\ by random index, the other does randomized linking by rank.  We also give two concurrent implementations of path splitting, \textit{one-try} and \textit{two-try} \textit{splitting}.\  The former is simpler, but we are able to prove slightly better bounds for the latter, bounds that we think are tight for the problem.\par

\vspace{\baselineskip}
We prove that any of our linking methods in combination with one-try splitting does set union in $O\biggl( m \cdot \left( \log{\left(\frac{np^2}{m} + 1 \right)} + \alpha{\left(n, \frac{m}{np^2} \right)} \right) \biggr)$ work, and in combination with two-try splitting in $O\biggl( m \cdot \left( \log{\left(\frac{np}{m} + 1 \right)} + \alpha{\left(n, \frac{m}{np} \right)} \right) \biggr)$ work.  Each set operation takes $O(\log n)$ steps.  These bounds are worst-case for deterministic linking and high-probability for randomized linking.  The $O(\log n)$ step bound per operation holds even without path splitting; without splitting, the work bound is $O(m \log n)$.  The work and step bounds for randomized linking by rank hold even for an adversarial scheduler, provided that scheduling is based only on information sent to the scheduler, or we allow a form of CAS that writes a random bit.\  The work and step bounds for linking by random index hold provided that the randomization is independent of the order in which\ the unite operations are executed, or, more precisely, independent of the $``$linearization order$"$  of the unite operations.  (We define linearization order in Section 2.)\  We also show that $\Omega\biggl( m \cdot \left( \log{\left(\frac{np}{m} + 1 \right)} + \alpha{\left(n, \frac{m}{np} \right)} \right) \biggr)$ work is needed in the worst case for any algorithm satisfying a symmetry assumption, which implies that our work bound for two-try splitting is best possible for such algorithms.\ \  \par

\vspace{\baselineskip}
The remainder of our paper contains 8 sections.\  Section\ 2\ describes\ our concurrency model.  Section 3 describes the compressed tree data structure and sequential algorithms for disjoint set union.  Section 4 presents concurrent linking by index, a special case of which is concurrent linking by random index, and one-try and two-try splitting.  Section 5 presents preliminary versions of deterministic and randomized linking by rank.\ \ These\ versions\ rely on some simplifying assumptions that we eliminate in Section 6.  Section 7 gives upper bounds on the total work of our algorithms.  Section 8 presents lower bounds.  Section 9 contains some final remarks and open problems.

\vspace{\baselineskip}
\section{Concurrency Model}

\vspace{\baselineskip}
Our concurrency model is the same as that of Anderson and Woll: a shared memory\ multiprocessor, otherwise known as an asynchronous random-access machine (APRAM).  We assume that $p$ processes run concurrently but asynchronously, each doing a different set operation.\ \ Each process has a private memory.  In addition, all processes have access to a shared memory that supports concurrent reads but not concurrent writes.\par

\vspace{\baselineskip}
To provide synchronization of writes to shared memory, we use the \textit{compare and swap} primitive CAS(\textit{x}, \textit{y}, \textit{z}).\  Given the address \textit{x} of a block of shared memory and two values \textit{y} and\textit{ z}, this operation tests whether block \textit{x }holds value \textit{y}; if so, it stores value \textit{z }in block \textit{x} (overwriting \textit{y}) and returns true; if not, it\ returns false.  We also consider the two-block extension DCAS(\textit{u}, \textit{v}, \textit{w}, \textit{x}, \textit{y}, \textit{z}).\  Given the addresses \textit{u} and \textit{x }of two blocks of shared memory and four values \textit{v}, \textit{w}, \textit{y}, and\textit{ z}, this operation tests whether block \textit{u} holds value \textit{v} and block \textit{x }holds value \textit{y}; if both are true, it stores value \textit{w} in block \textit{u} and value \textit{z} in block \textit{x} and returns true;\ if not, it returns false.  These operations are atomic: once one starts, it completes before any other operation can read, write, CAS, or DCAS the affected block or blocks.\  Although both CAS and DCAS return a value indicating success or failure, many of our algorithms do not actually use these values.\par

\vspace{\baselineskip}
In one version of our randomized linking algorithm we use the following randomized version of CAS: atomic operation CAS(\textit{x}, null, flip) tests whether the value of \textit{x }is null and, if so, sets the value of \textit{x}\ equal to true or false, each with probability $\sfrac{1}{2}$.  Such a randomized atomic write operation has been used in algorithms for achieving consensus \cite{Chor87}.\par

\vspace{\baselineskip}
Many current hardware designs include CAS as an instruction; DCAS was supported on the Motorola 68030 \cite{DCAS} but not on any current hardware, as far as we know.\  As we demonstrate in Section 5, it is straightforward to implement linking by rank using DCAS, but much harder using only CAS.\par

\vspace{\baselineskip}
We study concurrent algorithms for disjoint set union that are \textit{linearizable} \cite{HW90} and \textit{bounded} \textit{wait-free} \cite{H91}. To be linearizable means that (i) the outcome of a concurrent execution is the same as if each set operation were executed instantaneously at some distinct time (its linearization time) during its actual execution and (ii) the sequential execution sequence given by the linearization times is correct; that is, all find operations produce answers that are correct at their linearization times.\  The linearization times define a total order of the operations, called the \textit{linearization order}.\ \ Although we focus on linearizable algorithms, some applications of disjoint set union may not require linearizability for correctness.  
We briefly discuss this issue in Section 9, and leave further investigation as an open problem.\par

\vspace{\baselineskip}
To be bounded\ wait-free means that every operation finishes in a bounded number of its own steps.  The \textit{total work} done by a concurrent solution is the total number of steps done by all processes to complete all operations.\  \par

\vspace{\baselineskip}
Two weaker progress properties than bounded wait-freedom are\textit{ wait-freedom} and \textit{lock-freedom} \cite{HerlihyShavitBook}.
A concurrent solution is wait-free if every process\ is guaranteed to finish.  It is\ lock-free if every operation can execute its next step when it chooses to do so, and at least one process is guaranteed to finish its operation.  In general\ a\ lock-free\ solution need not be wait-free, and a wait-free solution need not be bounded wait-free.  In our version of disjoint set union, the number of elements is fixed, which makes it easy to guarantee bounded wait-freedom.  This remains true if we add an operation that allows the creation of singleton sets containing new elements, as long as the total number of set operations is bounded.  If we allow an unbounded number of singleton sets to be created, then our solutions are no longer wait-free, but they remain lock-free.\  In this case there are no meaningful work bounds.\  \par

\section{Data Structure and Sequential Algorithms}

\vspace{\baselineskip}
Our concurrent disjoint set union algorithms use the same data structure as the best sequential algorithms: a \textit{compressed forest}.\  This forest contains one rooted tree per set, whose nodes are the elements of the set and whose root is the set representative.\  Each node \textit{x} has a pointer \textit{x.p}, to its parent if it has a parent or to itself if it is a root.\  The root of the tree is the representative of the set.\par

\vspace{\baselineskip}
The sequential algorithm for \textit{find}(\textit{x}) follows parent pointers from \textit{x }until reaching a node \textit{v} that points to itself, optionally \textit{compacts} the find path (the path of ancestors from \textit{x} to \textit{v}) by replacing the parent of one or more nodes on the find path by a proper ancestor of its parent, and returning \textit{v}.\  \textit{Naïve find} does no compaction.\  Three good compaction rules are \textit{compression}, \textit{splitting}, and \textit{halving}.\  Compression replaces the parent of every node on the find path by the root \textit{v}.\  Splitting replaces the parent of every node on the\ find path by its grandparent.  Halving replaces the parent of every other node on the find path by its grandparent, starting with \textit{x}.\  See Figure 1.\par

\begin{figure}[h]
    \centering
    \includegraphics[scale=0.5]{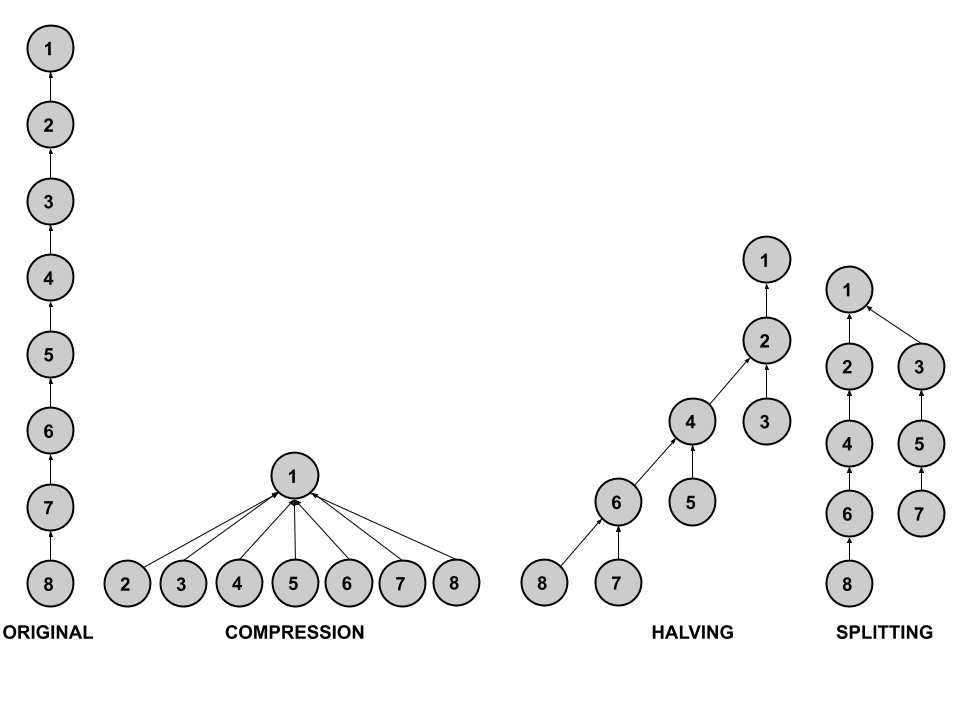}
    \caption{The results of running \textit{find}(8) on the original path with the three different types of compaction:
    compression links all the nodes on the find path directly to the root, thereby ``compressing'' the path;
    halving links alternating nodes on the find path to their grandparents, thereby creating a path of ``half'' the length with nodes hanging off;
    splitting links every node on the find path to its grandparent, thereby ``splitting'' one path into two.}
    \label{fig:1}
\end{figure}

\vspace{\baselineskip}
The sequential implementation of \textit{unite}(\textit{x}, \textit{y}) does \textit{find}(\textit{x}) and \textit{find}(\textit{y}), returning the roots \textit{u} and \textit{v} of the trees containing \textit{x} and \textit{y}, respectively, and tests whether \textit{u} = \textit{v}.\  If $u \ne v$, it \textit{links} \textit{u }and \textit{v }by making one the parent of the other.\  Three good linking rules are \textit{linking by size}, \textit{linking by rank}, and \textit{linking by random index}.\  \textit{Linking by size} maintains the size (number of nodes) of each tree in its root, and makes the root of the tree of larger size the parent of the other, breaking a tie arbitrarily.\  \textit{Linking by rank} maintains a non-negative integer \textit{rank} for each root, initially zero, and makes the root of larger rank the parent of the other, breaking\ a tie by adding one to the rank of one of the roots.  \textit{Linking by index} chooses a fixed total order of the nodes and makes the root of larger index the parent of the other.\  \textit{Linking by random index} is the special case of linking by index that chooses the total order of nodes uniformly at random.\par

\vspace{\baselineskip}
Linking by size, rank, or random index combined with naïve find, compression, splitting or halving gives an algorithm that takes O(log\textit{n}) time for an operation on a set or sets containing $n$\ elements, worst-case for deterministic linking, high-probability for linking by random index.  Use of compaction improves the amortized time per operation: any combination of compression, splitting, or halving with linking by size, rank, or random index gives an algorithm that takes $O(m \cdot \alpha(n, m/n))$ time to do $m$ operations on sets containing a total of \textit{n} elements.\  The bound is worst-case for linking by size or rank, average-case for linking by randomized index.\  Here  is a functional inverse of Ackermann’s\ function defined as follows.  We recursively define \textit{A\textsubscript{k}}(\textit{n}) for non-negative integers \textit{k} and \textit{n} as follows:\par

\vspace{\baselineskip}
\textit{A}\textsubscript{0}(\textit{n}) = \textit{n }+ 1; \textit{A\textsubscript{k}}(0) = A\textit{\textsubscript{k}}\textsubscript{–1}(1) if \textit{k} > 0; \textit{A\textsubscript{k}}(\textit{n}) = \textit{A\textsubscript{k}}\textsubscript{ – 1}(\textit{A\textsubscript{k}}(\textit{n} – 1)) if \textit{k} > 0 and \textit{n} > 0. \par

\vspace{\baselineskip}
For a non-negative integer \textit{n} and non-negative real-valued \textit{d}, 
$$\alpha(n,d) = \min\{k > 0 \mid A_k(\floor{d}) > n \}$$

\vspace{\baselineskip}
\begin{lemma} 
$A_k(n) < \min\{A_{k+1}(n), A_k(n+1)\}$, i.e., $A_k(n)$ is strictly increasing in $k$ and $n$.
\end{lemma}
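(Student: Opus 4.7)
My plan is to prove both strict monotonicity statements by an interleaved induction, after first establishing the auxiliary inequality $A_k(n) \geq n+1$ for all $k, n \geq 0$, which is the key ingredient that makes the recursion $A_k(n) = A_{k-1}(A_k(n-1))$ actually grow.

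First I would prove the auxiliary claim that $A_k(n) \geq n+1$ (equivalently $A_k(n) > n$) for all $k, n \geq 0$, by induction on $k$. The base case $k = 0$ is immediate from $A_0(n) = n+1$. For the inductive step, I fix $k \geq 1$ and induct on $n$. At $n=0$, one has $A_k(0) = A_{k-1}(1) \geq 2 > 0$ using the outer induction hypothesis. For $n \geq 1$, one writes $A_k(n) = A_{k-1}(A_k(n-1))$, applies the outer induction hypothesis to get $A_{k-1}(A_k(n-1)) \geq A_k(n-1) + 1$, and then applies the inner induction hypothesis $A_k(n-1) \geq n$ to conclude $A_k(n) \geq n+1$.

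With the auxiliary claim in hand, monotonicity in $n$ is essentially free: for $k = 0$, $A_0(n+1) = n+2 > n+1 = A_0(n)$; and for $k \geq 1$,
\[
A_k(n+1) = A_{k-1}(A_k(n)) \geq A_k(n) + 1 > A_k(n),
\]
where the first inequality uses $A_{k-1}(m) \geq m+1$. Monotonicity in $k$ is then proved by induction on $n$. At $n = 0$, $A_{k+1}(0) = A_k(1) > A_k(0)$ by the $n$-monotonicity just established. For $n \geq 1$, write $A_{k+1}(n) = A_k(A_{k+1}(n-1))$; by the induction hypothesis $A_{k+1}(n-1) > A_k(n-1) \geq n$, hence $A_{k+1}(n-1) \geq n+1$, and $n$-monotonicity of $A_k$ yields $A_k(A_{k+1}(n-1)) > A_k(n)$.

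The main obstacle I expect is bookkeeping the order of the inductions: the recursion $A_k(n) = A_{k-1}(A_k(n-1))$ mixes both arguments, and a naive single induction on $k$ or on $n$ collapses because one needs monotonicity of $A_{k-1}$ applied at the large argument $A_k(n-1)$. Establishing the auxiliary bound $A_k(n) \geq n+1$ first is what decouples these dependencies, since it lets each step of the induction convert an application of $A_{k-1}$ into a strict increase without needing to re-invoke $n$-monotonicity at the exact moment it is being proved.
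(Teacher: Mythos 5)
Your proof is correct. It rests on the same double-induction idea as the paper, but with a different decomposition: you first isolate the bound $A_k(n) \ge n+1$ as a standalone lemma, after which monotonicity in $n$ holds at every argument at once, and monotonicity in $k$ follows by a single induction on $n$. The paper instead proves both inequalities simultaneously by one double induction on $k$ and $n$: in the inductive step it gets your auxiliary bound only implicitly, via the chain $A_k(n) < A_k(n)+1 = A_0(A_k(n)) \le A_{k-1}(A_k(n)) = A_k(n+1)$, where $A_0 \le A_{k-1}$ pointwise comes from chaining the $k$-monotonicity hypothesis for $k' < k$; and for the $k$-step it writes $A_k(n) = A_k(A_0(n-1)) < A_k(A_{k+1}(n-1)) = A_{k+1}(n)$, which needs $n < A_{k+1}(n-1)$ --- again your auxiliary inequality in disguise. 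Your organization buys cleaner bookkeeping: since $A_k(n) \ge n+1$ and the strict $n$-monotonicity of $A_k$ are available at all arguments before the $k$-induction begins, you never have to apply monotonicity of $A_k$ at the large argument $A_{k+1}(n-1)$ while that monotonicity is still in the course of being established, which is precisely the delicate ordering point the paper's terser simultaneous induction leaves implicit; the cost is one extra, easy lemma.
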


\vspace{\baselineskip}
\begin{proof} 
The proof is by double induction on \textit{k} and \textit{n.}\  \textit{A}\textsubscript{0}(\textit{n}) = \textit{n }+ 1 < \textit{n }+ 2 =\textit{ A}\textsubscript{0}(\textit{n} + 1), and \textit{A}\textsubscript{0}(0) = 1 < 2 = \textit{A}\textsubscript{1}(0).\  Let\textit{ k} > 0.\  Suppose the lemma holds for \textit{k’} < \textit{k} and all \textit{n}.\  Then \textit{A\textsubscript{k}}(0) < \textit{A\textsubscript{k}}(0) + 1 = \textit{A}\textsubscript{0}(\textit{A\textsubscript{k}}(0))  \textit{A\textsubscript{k}}\textsubscript{–1}(\textit{A\textsubscript{k}}(0)) = \textit{A\textsubscript{k}}(1) = \textit{A\textsubscript{k}}\textsubscript{+1}(0).\  Thus the lemma also holds for \textit{k} and \textit{n }=\ 0.  Let \textit{k }> 0 and \textit{n}\ > 0.  Suppose the lemma holds for \textit{k’} < \textit{k} and all \textit{n}, and for \textit{k} and \textit{n}\ – 1.  Then \textit{A\textsubscript{k}}(\textit{n}) < \textit{A\textsubscript{k}}(\textit{n}) + 1 = \textit{A\textsubscript{0}}(\textit{A\textsubscript{k}}(\textit{n}))  \textit{A\textsubscript{k}}\textsubscript{–1}(\textit{A\textsubscript{k}}(\textit{n})) = \textit{A\textsubscript{k}}(\textit{n} + 1), and \textit{A\textsubscript{k}}(\textit{n}) = \textit{A\textsubscript{k}}(\textit{A\textsubscript{0}}(\textit{n} – 1)) < \textit{A\textsubscript{k}}(\textit{A\textsubscript{k}}\textsubscript{+1}(\textit{n} – 1)) = \textit{A\textsubscript{k}}\textsubscript{+1}(\textit{n}).
\end{proof}

\vspace{\baselineskip}
\begin{corollary}
$\alpha(n, d)$ is non-decreasing in $n$ and non-increasing in $d$. \par
\end{corollary}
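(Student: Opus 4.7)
The plan is to deduce the corollary directly from the definition $\alpha(n,d) = \min\{k > 0 \mid A_k(\lfloor d \rfloor) > n\}$ together with the monotonicity of $A_k$ established in Lemma 1. The key observation is that $\alpha$ is defined as the minimum of a set that depends monotonically on $n$ and on $d$, so the two claims reduce to showing how that set grows or shrinks under changes of its parameters.

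For monotonicity in $n$, I would fix $d$ and take $n_1 \le n_2$. Any $k > 0$ satisfying $A_k(\lfloor d \rfloor) > n_2$ also satisfies $A_k(\lfloor d \rfloor) > n_1$, so the set whose minimum defines $\alpha(n_2, d)$ is contained in the set whose minimum defines $\alpha(n_1, d)$. Taking the minimum of a smaller set can only make it larger (or leave it unchanged), hence $\alpha(n_1, d) \le \alpha(n_2, d)$.

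For monotonicity in $d$, I would fix $n$ and take $d_1 \le d_2$, so $\lfloor d_1 \rfloor \le \lfloor d_2 \rfloor$. By Lemma 1, $A_k$ is non-decreasing in its argument, so $A_k(\lfloor d_1 \rfloor) \le A_k(\lfloor d_2 \rfloor)$ for every $k > 0$. Consequently any $k$ with $A_k(\lfloor d_1 \rfloor) > n$ also satisfies $A_k(\lfloor d_2 \rfloor) > n$, so the set used to define $\alpha(d_2)$ contains the set used to define $\alpha(d_1)$, and the minimum can only shrink: $\alpha(n, d_2) \le \alpha(n, d_1)$.

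Both arguments are essentially bookkeeping once Lemma 1 is in hand; the only subtle point is that the second argument of $A_k$ is an integer while $d$ is real, so I would be careful to record that the dependence on $d$ enters only through the floor $\lfloor d \rfloor$, which is itself a non-decreasing function of $d$. No step in this plan looks like a serious obstacle — the proof is a short set-theoretic consequence of Lemma 1.
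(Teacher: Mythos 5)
Your proof is correct and coincides with what the paper intends: the paper states this corollary without proof as an immediate consequence of Lemma 1 and the definition of $\alpha$, and your set-containment argument (monotonicity in $n$ directly from the definition, monotonicity in $d$ via $\lfloor d \rfloor$ and the monotonicity of $A_k$ from Lemma 1) is exactly the argument left implicit there.
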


\vspace{\baselineskip}
Our goal is to extend at least one sequential set union algorithm to the concurrent model of Section 2 and to obtain an almost-linear work bound that grows sublinearly with $p$, the number of processes.
For convenience in stating bounds, we assume that $2 \le p \le n \le m$, and that there is at least one unite of different elements.  
We denote the base-two logarithm by lg.\par

\section{Concurrent Linking and Splitting}

\vspace{\baselineskip}
Concurrency significantly complicates the implementation of the set operations.  One complication is that processes can interfere with each other by trying to update the same field at the same time, requiring our algorithms to be robust to such interference. Consider doing\ unites concurrently.  To do \textit{unite}(\textit{x},\textit{y}), we can start as in the sequential case by finding the roots \textit{u }and \textit{v} of the trees containing \textit{x} and \textit{y},\ respectively.  Then we can try to link \textit{u} and\textit{ v} by doing a CAS to make \textit{v} the parent of \textit{u} or vice-versa.\  But we must allow for the possibility that the CAS can fail, for example if it tries to make \textit{v }the parent of \textit{u} but in the meantime some other process makes another node the parent of \textit{u}.\  If this happens we must retry the unite.\  When retrying, we start the new finds at \textit{u} and \textit{v} rather than at \textit{x} and \textit{y}, to avoid revisiting nodes.\  Anderson and Woll \cite{AW91} proposed this method; the following pseudocode implements it.\  Method \textit{link}(\textit{u}, \textit{v}), to be defined, tries to make one of two roots \textit{u} and \textit{v }the parent of the other.\par

\vspace{\baselineskip}
\textit{unite}(\textit{x}, \textit{y}):\par

\ \ \  \textit{u }= \textit{find}(\textit{x}); \textit{v} = \textit{find}(\textit{y}); while $u \ne v^*$  do $ \{ $ \textit{link}(\textit{u}, \textit{v})$\ast$ ; \textit{u }= \textit{find}(\textit{u}); \textit{v} = \textit{find}(\textit{v})$ \} $ \par

\vspace{\baselineskip}
In\ this and subsequent implementations, asterisks denote linearization points.  The linearization point of a unite is the linearization point of the successful link if there is one, or the inequality test that fails if no link is successful.\par

\vspace{\baselineskip}
Concurrency also imposes constraints on the linking rule.\  We need to prevent concurrent links from creating a\ cycle of parent pointers other than a loop at a root.  For example, three concurrent links might make \textit{v} the parent of \textit{u}, \textit{w} the parent of \textit{v}, and \textit{u} the parent of \textit{w.}\  The simplest way to prevent such cycles is to do\ linking by index, which we can implement using CAS.  We denote the total order of nodes by $``$<$"$ .\  The following pseudocode implements linking by index:\par

\vspace{\baselineskip}
\textit{link}(\textit{u},\textit{ v}):\par

\ \ \  if \textit{u} < \textit{v} then CAS(\textit{u.p}, \textit{u},\textit{ v})${}^*$  else CAS(\textit{v.p}, \textit{v}, \textit{u})${}^*$ \par

\vspace{\baselineskip}
The\ linearization point of the link is its CAS.  A link is \textit{successful} if its CAS returns true.\  For\ any total order, linking by index guarantees acyclicity.  \textit{Linking by random index} is the special case of linking by index that chooses the total order uniformly at random.\par

\vspace{\baselineskip}
With this implementation of link, a link can succeed even though the new parent itself becomes a child of another node at the same time.\  Fortunately this affects neither correctness nor efficiency.\  We could prevent this anomaly by using DCAS to do links, which allows us to guarantee\ that\ the new parent remains a root.  But this has two drawbacks.  First, it uses DCAS, whereas our goal is to use only CAS if possible.\  Second, if all links are done using DCAS, the total work can be linear in \textit{p}, as we discuss in Section 5.1.\par

\vspace{\baselineskip}
Next we consider finds.\  Concurrent naïve finds do not interfere with each other, since such finds do not change the data structure.\  Thus\ we can do such finds exactly as in the sequential case.  The following pseudocode implements concurrent naïve find:\par

\vspace{\baselineskip}
\textit{find}(\textit{x}):\par

\ \ \  \textit{u} = \textit{x}; \textit{v} = \textit{u.p}${}^*$ ; while $v \ne u$ do \textit{u} = \textit{v}; \textit{v} = \textit{u.p}${}^*$ ; return \textit{u}\par

\vspace{\baselineskip}
The linearization point of a find is the last update of \textit{v}.\par

\vspace{\baselineskip}
Concurrent finds with compaction \textit{can} interfere with each other.\ \ Consider a sequential find with splitting.  Let \textit{u} be the current node visited by the find.  One step of the find consists of setting \textit{v} =\textit{ u.p}; setting \textit{w} =\textit{ v.p}; and, if $v \ne w$, replacing \textit{u.p }by \textit{w} and then setting \textit{u} = \textit{v}.\  Steps continue until \textit{v }=\textit{ w,} when the find finishes by returning \textit{v.}\  The only update to the data structure in a step is the replacement of \textit{u.p} by \textit{w.\  }We obtain a concurrent version of splitting by using CAS(\textit{u.p}, \textit{v}, \textit{w}) to do the\ update.  The following pseudocode implements this method, which we originally presented in \cite{JT16} and which is based on Anderson and Woll’s version of find with halving:\par

\vspace{\baselineskip}
\textit{find}(\textit{x}):\par

\ \ \  \textit{u} = \textit{x}; \textit{v =} \textit{u.p}; \textit{w} = \textit{v.p}${}^*$ ;\par

\ \ \  while $v \ne w$ do $ \{ $ CAS(\textit{u.p}, \textit{v}, \textit{w}); \textit{u} = \textit{v}; \textit{v} = \textit{u.p}; \textit{w} =\textit{ v.p}${}^*$ $ \} $ ;\par

\ \ \  return \textit{v}\ \    \  \par

\vspace{\baselineskip}
The linearization point of a find is the last update of \textit{w}.\  We call this method \textit{one-try splitting} because it tries once to update \textit{u.p} and then changes the current node from \textit{u} to \textit{v}, whether or not the update of \textit{u.p} has succeeded.\par

\vspace{\baselineskip}
Concurrent splits can produce anomalies that are not possible if splits are sequential, as a simple example shows.\  (See Figure 2.)\  Suppose \textit{a}, \textit{b}, \textit{c}, \textit{d},\textit{ e} is a path in a tree built by linking by index, and that four processes, 1, 2, 3, and 4 begin concurrent finds with one-try splitting starting at \textit{a,} \textit{a,} \textit{b}, and \textit{b},\ respectively.  We denote the local variables of process \textit{i} by \textit{u\textsubscript{i}}, \textit{v\textsubscript{i}, w\textsubscript{i}}.\  First, process 1 sets \textit{u}\textsubscript{1} = \textit{a}, \textit{v}\textsubscript{1} = \textit{a.p }= \textit{b}, and\textit{ w}\textsubscript{1}\textit{ =} \textit{b.p }= \textit{c}.\  Second, process 3 sets \textit{u}\textsubscript{3} = \textit{b}, \textit{v}\textsubscript{3} = \textit{c}, \textit{w}\textsubscript{3} = \textit{d}, and replaces \textit{b.p} by \textit{d.}\  Third, process 4 sets \textit{u}\textsubscript{4} = \textit{b}, \textit{v}\textsubscript{4} = \textit{d}, \textit{w}\textsubscript{4} = \textit{e}, and replaces \textit{b.p} by \textit{e.} Fourth, process 2 sets \textit{u}\textsubscript{2} = \textit{a}, \textit{v}\textsubscript{2} = \textit{b}, and \textit{w}\textsubscript{2 }= \textit{d}.\  Fifth, process 1 replaces \textit{a.p} by \textit{c}.\  Sixth, process 2 attempts to replace \textit{a.p} by \textit{d} but fails, because process 1 changed \textit{a.p} after process 2 read it.\  Observe that just before process 1 replaces \textit{a.p} by \textit{c}, \textit{c }is not an ancestor of \textit{a}, even though it was when process 1 read it.\ \ This threatens correctness.  Furthermore, even though the failure of process 2 to update \textit{a.p} guarantees that \textit{a.p} has changed since process 2 read it, the new value of \textit{a.p, }namely \textit{c}, is not an ancestor of the current grandparent of \textit{a}, namely \textit{e}, violating a property used in the analysis of sequential splitting.\  Finally, even though the new parent \textit{c} of \textit{a} is higher in index than the old parent \textit{b} of \textit{a} (as we prove in Theorem 1), the new grandparent \textit{d} of \textit{a} is \textit{lower} than the old grandparent \textit{e} of \textit{a}. \par

\begin{figure}[h]
    \centering
    \includegraphics[scale=0.35]{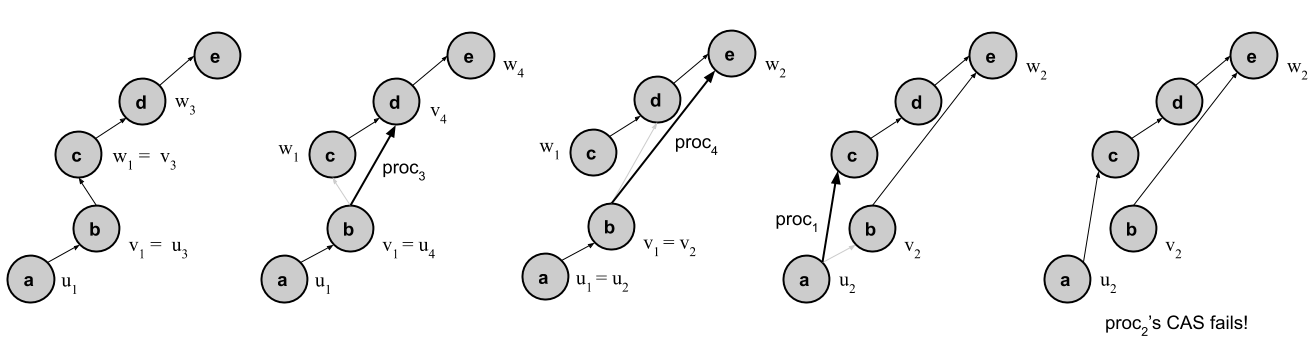}
    \caption{Interference in concurrent splitting: process 1 updates $a$'s parent to a node that is not its ancestor; process 2's CAS fails. These difficulties do not occur in the sequential setting.}
    \label{fig:2}
\end{figure}

\vspace{\baselineskip}
Fortunately, correctness requires only a weak property of compaction, one that holds for one-try splitting and many other methods.\  Suppose we do linking\ by index.  We define the \textit{union forest }to be the set of trees such that the parent of a node \textit{x} is the \textit{first }value of \textit{x.p} other than \textit{x}; if \textit{x.p }=\textit{ x} throughout the computation, \textit{x }is\ a root in the union forest.  We call a compaction method \textit{valid} if it visits nodes on a single path in the union forest, each vertex visit takes $O(1)$ steps, and each replacement of a parent \textit{y} by another node \textit{z} (of which there may be none)\textit{ }is such that \textit{z} is a proper ancestor of \textit{y} in the union forest. \  \par

\vspace{\baselineskip}
\begin{theorem}
Linking by index in combination with any valid compaction method maintains the invariant that the parents define a set of trees that partition the nodes into the correct disjoint sets, and the parent of any non-root node has higher index than the node.
Furthermore each set operation stops in $O(h)$ steps, where $h$ is the height (maximum number of edges on a path) of the union tree, so the algorithm is bounded wait-free.
\end{theorem}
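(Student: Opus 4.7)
The plan is to prove all three claims together by induction on the sequence of atomic steps performed in the concurrent execution, using the definition of the union forest as the scaffolding.

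First I would observe the one structural fact about links: the CAS in \textit{link}(\textit{u}, \textit{v}) that changes $x.p$ from $x$ to something else only succeeds when $x$ is a root, and in that case the new value is some $v$ with $v > x$ by the linking-by-index rule. Since a compaction step only replaces a \emph{non-self} parent, links are the only operations that can change $x.p$ away from $x$. Consequently, the union forest parent of any non-root $x$ (i.e., the first non-self value written to $x.p$) has higher index than $x$. Iterating, every proper ancestor of $x$ in the union forest has strictly higher index than $x$, so the union forest is acyclic and indeed a forest.

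Next I would prove the main invariant: at every point in the execution, for every non-root $x$, the current value of $x.p$ is a proper ancestor of $x$ in the union forest, and in particular $x.p > x$. The base case is trivial since every node starts as its own root. For the inductive step, the only atomic events that modify parent pointers are (i) a successful link CAS($x.p$, $x$, $v$), which makes $v$ the union-forest parent of $x$, so $v$ is trivially an ancestor of $x$ and $v > x$; and (ii) a compaction step that replaces $x.p = y$ by some $z$. By validity of compaction, $z$ is a proper ancestor of $y$ in the union forest; by the inductive hypothesis $y$ is a proper ancestor of $x$, so $z$ is also a proper ancestor of $x$, and $z > y > x$ by the first paragraph. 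This establishes claim~(2) and shows that the current parent pointers always form a sub-relation of the union forest, hence a forest.

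For claim~(1), that this forest partitions nodes into the correct disjoint sets, I would induct on linearization points. Between linearization points no set composition changes, and the invariant above says current trees are subtrees of the union forest, whose components are exactly the correct sets (each successful link joins two previously disjoint union-forest trees, in linearization order, corresponding to a \textit{unite} that actually merges; failed links correspond to unites that find their arguments already united). Compactions do not alter components because they only re-wire a child to an existing ancestor in the same union-forest component. Finds return the root of the current tree containing $x$, which is the root of $x$'s component in the union forest, i.e., the representative.

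Finally, for claim~(3), I would use the invariant to bound the step count. Any set operation, after initialization, repeatedly advances a local pointer up parent edges (with possibly an intervening CAS, which is $O(1)$ work). By the invariant, each such advance moves from a node $u$ to a proper ancestor of $u$ in the union forest, so the union-forest depth of the current node strictly decreases each iteration. Starting at some node of depth at most $h$, the loop terminates in at most $h+1$ iterations, giving $O(h)$ steps per operation. Since $h \le n-1$, the algorithm is bounded wait-free.

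The main obstacle is the interaction between concurrent links and compactions: one must establish, simultaneously and by a single induction, both that every compaction step has a target that is an ancestor in the union forest (needed for correctness of the partition) and that the higher-index invariant is preserved (needed to conclude acyclicity and termination). The validity hypothesis on compaction, together with the fact that links only fire on self-loops, is exactly what makes this simultaneous induction go through, and the subtle anomalies illustrated in Figure~2 do not violate either part of the invariant.
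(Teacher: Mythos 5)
Your treatment of the invariant is essentially the paper's own argument: an induction over parent changes using transitivity of the index order, with the validity hypothesis on compaction guaranteeing that every rewrite points to a proper union-forest ancestor; the correctness of the partition and of the linearization points then follows as you say. That part is fine.

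The gap is in the step bound, specifically for \textit{unite}. You argue that ``each set operation, after initialization, repeatedly advances a local pointer up parent edges,'' so the depth strictly decreases and the operation ends within $h+1$ iterations. That is true of a single \textit{find}, but a \textit{unite} is not a single upward pointer chase: it is a retry loop that repeatedly calls \textit{link}$(u,v)$ followed by fresh finds from $u$ and $v$, and it maintains \emph{two} current nodes, one of which (the larger-index root) can remain unchanged across many iterations; moreover the same root can be revisited at the start of several finds. So ``the current node strictly decreases in depth each iteration'' does not hold as stated, and nothing in your argument bounds the number of retries after failed link CAS operations. The missing idea is the progress argument the paper makes: when \textit{link}$(u,v)$ attempts CAS on the smaller root, then whether or not the CAS succeeds, that node is a child afterwards (a failed CAS means its parent had already changed), so each iteration of the unite loop forces at least one of the two nodes to become a child and hence to be replaced by a proper union-forest ancestor at the next find. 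Since the nodes visited lie on two paths of the union forest, there are at most about $2h$ such events, and charging the repeated root visits and retries to them yields the $O(h)$ bound. Without some version of this charging (or an explicit potential on the pair of depths), your proof of bounded wait-freedom for \textit{unite} does not go through.
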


\begin{proof}
An\ induction on the number of parent changes using the transitivity of $``$<$"$  shows that the parent of any node never has smaller index than the node.  This implies that the only cycles are loops at roots.  Parent changes done during compactions do not change the node partition defined\ by the trees.  A link that makes \textit{v} the parent of \textit{u }must be such that \textit{u} is a root before the link, \textit{u} < \textit{v}, and \textit{u} and\textit{ v} are in the trees containing the two nodes \textit{x} and\textit{ y }that\ are the inputs to the unite that does the link.  It follows by induction on the number of parent changes that at all times the trees correctly partition the nodes.\  Correctness of the linearization points follows in a straightforward way by induction on the number of parent changes.

\vspace{\baselineskip}
Since the nodes visited during a find are on a single path in the union forest, and each node visit takes $O(1)$ steps, each find stops in $O(h)$ steps. (Our assumption that there is at least one unite of different elements implies $h > 0$.)  The nodes visited during a unite are on two paths in the union forest.\  Consider the node\ visits in the order they occur.  Each node visit takes $O(1)$ steps, but a node can be visited many times.\ \ This can only happen while it is a root; once it becomes a child, it can only be visited once more (as the input to a find).  Consider the nodes \textit{u} and \textit{v} just before an execution of the test ``$u \ne v$" \ in unite.  Each of \textit{u} and \textit{v} was a root at some time\ during the find that computed it.  If the test ``$u \ne v$"  succeeds, whichever of \textit{u} and \textit{v} is smaller in the total order will be a child after the next link (whether or not the link succeeds).\  Suppose without loss of generality it is \textit{u}.\  We charge the next visits to \textit{u} and \textit{v} to \textit{u} becoming a child.\  There are at most $2h$ such events.\  It follows that the total number of node visits during the unite, and hence the total number of steps, is $O(h)$.
\end{proof}

\vspace{\baselineskip}
Having dealt with correctness, we discuss concurrent compaction in more detail.\  The monotonicity of parents (each new parent is higher in index than the old one) allows us to extend the analysis of sequential splitting to one-try splitting, although the extension is not straightforward.\  On the other hand, the analysis of sequential halving relies on monotonicity of grandparents, which fails in the concurrent setting, as our example above shows.\  Anderson and Woll \cite{AW91} claimed a good work bound for their concurrent version of halving, but they overlooked the problem of non-monotonicity.\  We see no way to get a good work bound for their method.\par

\vspace{\baselineskip}
Even though we can prove good efficiency bounds for one-try splitting, we can prove slightly better bounds for a related compaction method that tries to change each parent pointer twice instead of once.\  We call this method \textit{two-try splitting}.\  The following pseudocode implements find with two-try splitting:\par

\vspace{\baselineskip}
\textit{find}(\textit{x}):\par

\ \ \  \textit{u} = \textit{x}; \textit{v =} \textit{u.p}; \textit{w} = \textit{v.p}${}^*$ ;\par

\ \ \  while $v \ne w$ do $ \{ $ CAS(\textit{u.p}, \textit{v}, \textit{w}); \textit{v} = \textit{u.p}; \textit{w} =\textit{ v.p};\textit{ }CAS(\textit{u.p}, \textit{v}, \textit{w}); \textit{u} = \textit{v}; \textit{v} = \textit{u.p}; \textit{w} =\textit{ v.p}${}^*$ $ \} $ ;\par

\ \ \  return \textit{v}\ \ \ \ \ \  \par

\vspace{\baselineskip}
The linearization point of a find is the last assignment to \textit{w}.\  If every attempted parent change succeeds, the effect of a single two-try split is to replace the parent of every other node on the find path by its great-grandparent.\  This splits the original path into two paths, each containing half the nodes on the original path, but the split is different from that produced by one-try splitting: if the nodes on the original path are numbered consecutively from 1, the latter produces a path of nodes 1, 3, 5, 7$ \ldots $  and another path of nodes 2, 4, 6, 8$ \ldots $ ; the former produces a path of nodes 1, 4, 5, 8, 9$ \ldots $  and another path of nodes 2, 3, 6, 7, 10, 11$ \ldots $ \  \par

\vspace{\baselineskip}
A variant that has the same work bounds as two-try splitting is \textit{conditional two-try splitting}, in which the second try occurs only if the first one fails.\  We omit a detailed discussion of this variant, since its pseudocode is a bit longer and it is unclear whether avoiding extra parent changes improves efficiency.\par

\vspace{\baselineskip}
Both one-try and two-try splitting are valid compaction methods, so Theorem 1 holds for both of them. \par

\vspace{\baselineskip}
We conclude this section by presenting Anderson and Woll’s concurrent implementation of \textit{same-set}, which gives an extension of our algorithms\ to their formulation of the problem.  It is easy to do \textit{same-set}(\textit{x}, \textit{y}) in the sequential setting: find the root \textit{u} of the tree containing \textit{x}, find the root \textit{v} of the tree containing \textit{y}, and test whether \textit{u} = \textit{v.\  }As Anderson and Woll observed, this does not suffice in the concurrent setting, because \textit{u}\ might\ no longer be a root when the equality test occurs, possibly invalidating the test.  Their solution has three cases.  If \textit{u} = \textit{v}, return true: \textit{x} and \textit{y} are in the same tree when the test occurs, and remain in the same tree.\  If $u \ne v$, test whether \textit{u} is still a root.\  If so, return false: \textit{x} and \textit{y} were in different trees when \textit{v} was computed, since \textit{u} and \textit{v }were different roots.\  If not, redo the computation: do new finds from \textit{u} and\textit{ v}, and repeat the test or tests.\  The following pseudocode implements this method:\par

\vspace{\baselineskip}
\textit{same-set}(\textit{x}, \textit{y}):\par

\ \ \  \textit{u }= \textit{find}(\textit{x}); \textit{v} = \textit{find}(\textit{y})${}^*$ ;\par

\ \ \  while $u \ne v$ do $ \{ $ w = \textit{u.p}; if \textit{u} = \textit{w} then return false; \textit{u }= \textit{find}(\textit{u}); \textit{v} = \textit{find}(\textit{v})${}^*$ $ \} $ ;\par

\ \ \  return true\par

\vspace{\baselineskip}
The linearization point of a same-set is the last assignment to \textit{v}.\  All our analyses of find and unite extend to include \textit{same-set} as an allowed operation.\par

\section{Concurrent Linking by Rank}

\vspace{\baselineskip}
To obtain a good work bound, we combine\ one-try or two-try splitting with a good linking method.  Linking by random index is one such method, but our analysis of it assumes that the scheduling of CAS instructions is independent of the random node order.\  This assumption is questionable; if it fails, the work bound becomes much worse as a function of $p$, as we show in Section 8.\  To overcome this, we develop two concurrent versions of linking by rank, one deterministic and one randomized, both of which have good work bounds.\  To simplify our descriptions, we assume for the moment that the rank and parent of a node can be stored in a single block of memory that is updatable by one\ CAS instruction.  In Section 6 we show how to eliminate this assumption.\par

\vspace{\baselineskip}
Both\ of our versions of linking by rank are refinements of a generic method.  The generic method links roots of different ranks using CAS, and links roots of the same rank using method \textit{elink}, to be defined. The rank of node \textit{u} is \textit{u.r}, initially zero.\  The following pseudocode implements the generic method: \par

\vspace{\baselineskip}
\textit{link}(\textit{u}, \textit{v}):\par

\textit{\ \ \  r} =\textit{ u.r}; \textit{s} = \textit{v.r};\par

\ \ \  if \textit{r} < \textit{s} then CAS((\textit{u.p}, \textit{u.r}), (\textit{u}, \textit{r}), (\textit{v}, \textit{r}))${}^*$ \par

\ \ \  else if \textit{r }> \textit{s }then CAS((\textit{v.p}, \textit{v.r}), (\textit{v}, \textit{s}), (\textit{u}, \textit{s}))${}^*$ \par

\ \ \  else \textit{elink}(\textit{u}, \textit{v}, \textit{r})$\ast$ ;\par

\vspace{\baselineskip}
Given two roots \textit{u }and \textit{v} with ranks \textit{r} and \textit{s}, respectively, this method compares \textit{r} to \textit{s.\  If r }< \textit{s, }it uses a CAS to make \textit{v }the parent of \textit{u} while guaranteeing that neither the parent nor the rank of \textit{u }changes\ in the meantime.  If \textit{r }>\textit{ s},\ it proceeds symmetrically.  If \textit{r =} \textit{s,} it does an elink to link \textit{u }and \textit{v}.  A link is successful if its CAS returns true or its \textit{elink} is successful, in which case the linearization point of the link is its CAS or that of its \textit{elink}.\  Our two versions of linking by rank differ only in their implementation of \textit{elink.}\par

\subsection{Linking by Rank via DCAS}

\vspace{\baselineskip}
A simple way to do \textit{elink}(\textit{u}, \textit{v, r})  is to use a DCAS to make \textit{v} the parent of \textit{u} and increment the rank of \textit{v} while guaranteeing that the ranks and parents of \textit{u} and \textit{v} do not change in the meantime.. The following pseudocode implements this idea:\  \par

\vspace{\baselineskip}
\textit{elink}(\textit{u}, \textit{v}, \textit{r}):\par

\ \ \  DCAS((\textit{u.p}, \textit{u.r}), (\textit{u}, \textit{r}), (\textit{v}, \textit{r}), (\textit{v.p}, \textit{v.r}), (\textit{v, r}), (\textit{v}, \textit{r} + 1))$\ast$ ;\par

\vspace{\baselineskip}
An elink is successful if its DCAS returns true, in which case the linearization point of the elink is its DCAS.\ \  \par

\vspace{\baselineskip}
Our first version of linking by rank uses this implementation of elink.\  The rank of a node can never decrease, and\ can increase only while the node is a root.  It follows that the rank of a child is always strictly less than that of its parent\textit{.}\  Linking by rank is an implicit form of linking by index: the successful links respect any total order consistent with the final ranks of nodes.  Thus Theorem 1 holds for this method.\par

\vspace{\baselineskip}
The following lemma and theorem extend known bounds on sequential linking by rank \cite{TvL84} to linking by rank via DCAS:\par

\vspace{\baselineskip}
\begin{lemma}
With linking by rank via DCAS, the sum of ranks is at most $n-1$, the number of nodes of rank $k$ is at most $n/2^k$, and the maximum rank and the height of the union forest are at most $\lg n$.
\end{lemma}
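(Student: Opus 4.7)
The plan is to establish all three bounds from two invariants that the DCAS-based elink enforces: (i) once a node acquires a proper parent, its rank is frozen, and (ii) at the linearization point of any successful elink on $(u,v,r)$, both $u$ and $v$ were simultaneously roots of rank exactly $r$, and immediately afterward $u$ is a child of $v$ and $v$ has rank $r+1$. Together these imply the global invariant that the rank of a non-root is strictly less than the rank of its parent: non-elink successful links make a lower-ranked root into a child of a higher-ranked root without touching ranks, and successful elinks bump the new parent's rank above the new child's.

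For the sum-of-ranks bound, I would observe that the only operation that ever changes any rank is a successful elink, and each such elink increases the total rank by exactly one. Every successful link (elink or otherwise) unites two distinct roots into one, so decreases the number of roots by one; since we begin with $n$ roots and end with at least one, there can be at most $n-1$ successful links in total, and hence at most $n-1$ elinks, giving $\sum_x x.r \le n-1$.

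For the rank-count bound I would prove by induction on $k$ the stronger statement that at every moment in the execution, every node that currently has rank $k$ has, at that moment, a union-forest subtree of size at least $2^k$. The base case $k=0$ is immediate. For the inductive step, fix a node $v$ and the moment it first attains rank $k$: by invariant (ii) this is the linearization point of a successful elink whose DCAS atomically witnessed that some other root $u$ and $v$ itself both had rank $k-1$ at that instant; by the inductive hypothesis each of their union-forest subtrees contained at least $2^{k-1}$ nodes, and since $u \ne v$ were distinct roots these subtrees were disjoint, so $v$'s new subtree has size $\ge 2^k$. Union-forest subtrees only grow (the union forest is defined by first-parent pointers, so valid compactions from Section 4 do not shrink any subtree), and $v$'s rank never drops below $k$, so the invariant persists. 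Finally, at any fixed time two distinct rank-$k$ nodes cannot lie on a common root-to-leaf path because ranks strictly increase up such paths, so their subtrees are disjoint, yielding at most $n/2^k$ rank-$k$ nodes. The maximum-rank bound is then immediate: $k > \lg n$ forces $n/2^k < 1$, hence no such nodes exist. The height bound follows because ranks strictly increase along any ancestor chain, so any path has length at most the maximum rank.

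The main obstacle is the inductive step for the count bound: the argument requires that at one single instant both $u$ and $v$ are roots of rank $k-1$ with disjoint, already-sized subtrees. It is exactly the atomicity of the DCAS that provides this instant; a pair of independent CAS operations could be interleaved with a concurrent modification to one of the two blocks and invalidate the simultaneity, which is why this simple version of linking by rank essentially needs DCAS rather than CAS.
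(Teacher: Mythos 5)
Your proof is correct and takes essentially the same route as the paper's: the sum-of-ranks bound comes from pairing each rank increment with a root becoming a child (at most $n-1$ such events), the $n/2^k$ bound is the paper's terse ``induction on $k$'' spelled out as the standard disjoint-subtrees-of-size-$2^k$ argument enabled by the atomicity of the DCAS, and the maximum-rank and height bounds follow as in the paper from rank monotonicity along union-forest paths. (Only a cosmetic remark: in the unequal-rank CAS case the new parent $v$ need not still be a root, but since its rank can only have grown the child-rank $<$ parent-rank invariant you actually use is unaffected.)
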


\begin{proof}
For a node to increase in rank by 1, it must be a root, and another root must become its child at the same time.\  It follows that the number of rank increments, and hence the sum of ranks, is at most \textit{n}\ – 1, one per root that becomes a child.  An induction on \textit{k} shows that at most \textit{n}/2\textit{\textsuperscript{k}} nodes can ever attain rank \textit{k}.\  The bounds on the maximum rank and the height of the union forest follow, since no node can have rank exceeding lg\textit{n}.
\end{proof}

\vspace{\baselineskip}
\begin{theorem}
Linking by rank via DCAS in combination with any valid compaction method maintains the invariant that the parents define a set of trees that partition the nodes into the correct disjoint sets, and the rank of a child is less than that of its parent.\  Furthermore each set operation stops in $O(\log n)$ steps, so the algorithm is bounded wait-free.
\end{theorem}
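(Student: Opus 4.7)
The plan is to derive Theorem 2 by combining Theorem 1 with Lemma 2, after first verifying that linking by rank via DCAS fits the framework of Theorem 1. The main observation is that linking by rank via DCAS can be viewed as an instance of linking by index for a suitable total order, so the structural invariants of Theorem 1 transfer to the new setting; once that is established, the step bound is essentially a substitution of Lemma 2 into the $O(h)$ bound of Theorem 1.

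First I would establish the rank invariant (rank of a child is strictly less than rank of its parent) as a separate lemma-style observation, since it is needed to justify applying Theorem 1. The key facts are: (i) a rank is set to $0$ at initialization and can only be modified by the DCAS in \textit{elink} or by the rank-unequal CAS in \textit{link}; (ii) neither of these changes the rank of a non-root, because each tests that the node's parent still points to itself; (iii) when \textit{elink}(u,v,r) succeeds, the DCAS simultaneously makes $v$ the parent of $u$ \emph{and} increments $v.r$ from $r$ to $r+1$, so at that moment $u.r = r < r+1 = v.r$; (iv) when the unequal-rank branch succeeds, say $r < s$, the CAS makes $v$ the parent of $u$ only if $u.r$ is still $r$, so $u.r = r < s \le v.r$. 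Since ranks of non-roots are frozen and ranks of roots can only increase, this child-parent rank inequality is preserved by every subsequent step.

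Next I would invoke Theorem 1 for the partition/correctness claim. To do so I define a total order ``$\prec$'' on nodes consistent with the final ranks: break ties in rank using any fixed tie-breaker among the nodes. Every successful link makes the smaller-ranked root a child of the larger-ranked root (or, in the equal-rank case, one whose rank then becomes strictly larger); either way, by the rank invariant just proved, the new child has strictly smaller rank than the new parent at the linearization point, hence is smaller in $\prec$. Thus every successful link is a link ``by index'' with respect to $\prec$, and Theorem 1 applies directly: the parents always define a forest that correctly partitions the nodes into the disjoint sets, and parents have strictly higher index than their children. Because valid compaction methods only redirect a parent pointer to a proper ancestor in the union forest, the rank inequality between child and parent is preserved under compactions as well.

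Finally, for the step bound, Theorem 1 gives that each set operation stops in $O(h)$ steps, where $h$ is the height of the union forest, and Lemma 2 gives $h \le \lg n$. Combining these yields $O(\log n)$ steps per operation, which is the bounded wait-free guarantee. The only mildly subtle step in the whole argument is the reduction to linking by index, because ranks are decided on the fly rather than fixed in advance; I expect this to be the main spot where a careless reader could object, and so I would take care to argue that the \emph{final} ranks (together with a tie-breaker) provide a total order with which every successful link is consistent at its linearization point. Everything else is a direct appeal to Theorem 1 and Lemma 2.
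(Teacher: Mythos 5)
Your rank invariant and the ``implicit linking by index'' reduction are fine (the paper makes the same observation: successful links respect any total order consistent with final ranks, so the correctness half of Theorem 1 transfers). The gap is in the step bound, where you appeal to Theorem 1 wholesale: ``Theorem 1 gives that each set operation stops in $O(h)$ steps.'' The $O(h)$ bound for \emph{unites} in Theorem 1 is not a consequence of the structural invariants alone; its proof uses a property of the specific CAS implementation of linking by index, namely that after every link attempt between $u$ and $v$, the smaller of the two is a child whether or not the CAS succeeds (a failed \textit{CAS}(\textit{u.p}, \textit{u}, \textit{v}) means $u.p \ne u$, i.e.\ $u$ already became a child). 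With linking by rank via DCAS this property fails: the CAS in the unequal-rank case, and the DCAS in \textit{elink}, test the rank as well as the parent, so an attempt can fail merely because some other process incremented $u.r$ or $v.r$ while both nodes remain roots. Hence the unite loop can iterate without any node on its two paths becoming a child, and the charging scheme of Theorem 1 does not bound the number of iterations.

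The paper closes exactly this hole by extending the charge: each repeated visit to a root is charged either to that root becoming a child or to that root increasing in rank, and Lemma 2 is used twice --- once for the at most $2\lg n$ nodes on the two paths that become children, and once to bound the total rank increases along each path by $\lg n$ (ranks strictly increase along a union-forest path and never exceed $\lg n$). Your conclusion is still true (indeed rank increases along a path are also bounded by $h$), but as written your derivation does not establish the unite bound; you need this extra charging argument, not just the substitution of $h \le \lg n$ into Theorem 1. The bound for finds, by contrast, does follow as you say, since that part of Theorem 1's proof is independent of how links are implemented.
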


\begin{proof}
The first half of the theorem follows by induction on the number of steps as in the proof of the first half of Theorem 1.\  A find takes $O(\log n)$ steps by the argument in the proof of Theorem 1, since the height of the union forest is $O(\log n)$ by Lemma 2.\  We prove the bound for unites by an extension of the argument in the proof of Theorem 1.\  The nodes visited during a unite are on two paths in the union forest, and on each path they are visited in increasing\ order by rank.  Each node visit takes O(1) steps, but roots can be visited many times.\  We charge each repeated visit to a root either to a root becoming a child or to a root increasing in rank.\  Consider the nodes \textit{u} and \textit{v} just before an execution of the test ``$u \ne v$" \ in unite.  Each of \textit{u} and \textit{v}\ was a root at some time during the find that computed it.  Suppose the test  ``$u \ne v$" \ succeeds.  The next execution of link sets \textit{r} to the rank of \textit{u} and\textit{ s} to the rank of \textit{v}.\  If \textit{r }< \textit{s}, then after the CAS either the rank of \textit{u} has increased or \textit{u }has\ become a child, whether or not the CAS succeeds.  We charge the next visits to \textit{u} and \textit{v} to the rank increase of \textit{u} or to \textit{u}\ becoming a child.  The symmetric argument applies if \textit{r} > \textit{s}.\  If \textit{r} =\textit{ s}, at least one of \textit{u} and\textit{ v} has increased in rank or become a child after the elink.\  We charge the next visits to \textit{u} and \textit{v}\ to whichever of these events has occurred.  There are at most $2\lg n$ roots that become children and at most $2\lg n$ rank increases by Lemma 2, since for each of the two paths in the union forest the rank increases sum to at most $\lg n$.\  It follows that the total number of node visits during the unite, and hence the total number of steps, is $O(\log n)$.
\end{proof}

\vspace{\baselineskip}
The efficiency of this linking\ method (though not its correctness) depends critically on using CAS to link nodes of different ranks, reserving DCAS for the equal-rank case.  An attempted link of equal-rank nodes \textit{u }and \textit{v} using DCAS fails only if some other process makes \textit{u} or \textit{v} a non-root, or increases the rank of \textit{u} or \textit{v}.\  In the proof of Theorem 2 we charge extra node visits resulting from\ the failure of the DCAS to whichever of these events occurs.  If we were to use DCAS to try to make a node \textit{v} the parent of a node \textit{u} of lower rank, the DCAS could fail because another process made \textit{v} the parent of another node \textit{w}.\textit{\  }This changes neither the parent nor rank of \textit{u}, nor of \textit{v}, leaving us with no event to charge for extra node visits.\  In the worst case, $O(n)$ such links could produce $\Omega(pn)$ failures, resulting in total work linear in $p$.\ \ Using CAS to link nodes of different ranks eliminates these failures.  Although we can avoid such interference in the disjoint set union problem as we have defined it, this is much harder to do in some extensions of the problem, as we discuss in Section 9.\par

\subsection{Randomized Linking by Rank}

\ \   \  \par

To\ link\ equal-rank nodes using CAS, we need to do the parent change and the rank increment separately.  The question is which one to do first.  Making\ this decision randomly gives an approximation to linking by rank that produces few enough rank ties that we are able to get good work bounds.  Since this method allows rank ties, we use linking by index to break such ties, in order to prevent the creation of non-trivial\ cycles of parent pointers.  Assume that $``$<$"$  is an arbitrary total order\ of the nodes.  To link two equal-rank roots \textit{u} and \textit{v} such that \textit{u} <\textit{ v},\ we flip a fair coin.  If it comes up heads, we attempt to make \textit{v} the parent of \textit{u}; if it comes up tails, we attempt to increase the rank of \textit{u}.\  The following pseudocode implements this idea. Random Boolean method \textit{flip} returns true with probability $\sfrac{1}{2}$ and false otherwise, independent of all other flips.\par

\vspace{\baselineskip}
\textit{elink}(\textit{u}, \textit{v},\textit{ r}):\par

\ \ \   if \textit{u} < \textit{v} then\par

\ \ \ \ \ \ \  $ \{ $ if \textit{flip} then CAS((\textit{u.p}, \textit{u.r}), (\textit{u}, \textit{r}), (\textit{v}, \textit{r}))$\ast$  else CAS((\textit{u.p}, \textit{u.r}), (\textit{u}, \textit{r}), (\textit{u}, \textit{r} + 1))$ \} $ \par

\ \ \  else\par

\ \ \ \ \ \ \  $ \{ $ if \textit{flip} then CAS((\textit{v.p}, \textit{v.r}), (\textit{v}, \textit{r}), (\textit{u}, \textit{r}))$\ast$  else CAS((\textit{v.p}, \textit{v.r}), (\textit{v}, \textit{r}), (\textit{v}, \textit{r} + 1))$ \} $ \par

\vspace{\baselineskip}
An elink is successful if it does a CAS that changes a parent pointer, in which case the linearization point of the elink is its CAS.\  \par

\vspace{\baselineskip}
Our second version of linking by rank uses this implementation of elink.\  Observe that the CAS done after a flip is almost the same whether the flip returns true or false, the only difference being the updated field (parent or rank, respectively).\  In our analysis we shall assume that the success or failure of the CAS following a flip is independent of the outcome\ of the flip.  In Section 6 we describe how to modify the implementation to eliminate the need for this independence assumption.\  Randomized linking by rank is an implicit form of linking by index: links respect the total order defined by final node ranks with ties broken by $``$<$"$ .\par

\vspace{\baselineskip}
\begin{lemma}
With randomized linking by rank, 
\emph{(i)} any node \textit{x} has $O(1)$ ancestors of the same rank, in expectation; 
\emph{(ii)} the sum of ranks is at most $n$ in expectation and $n + O(n^{\sfrac{1}{2}})$ with probability $1 - 1/n^c$ for any constant $c > 0$, where the constant factor in the $``$O$"$  depends on $c$; 
\emph{(iii)} the expected number of nodes of rank at least $k$ is at most $n/2^k$, and with probability at least $1 - n/2^k$, all nodes have rank less than $k$; 
\emph{(iv)} the maximum rank is at most $\lg n + 3$ in expectation and is at most $(c + 1)\lg n$ with probability at least $1 - 1/n^c$ for any positive constant $c$; 
\emph{(v)} the depth of the union forest is at most $3\lg n + 9$ in expectation and $O(\log n)$ with probability at least $1 - 1/n^c$ for any constant $c > 0$, where the constant factor inside the $``$O$"$  depends on $c$; and 
\emph{(vi)} for a large enough constant $c$ and any $k > 0$, the expected number of nodes of rank less than $k$ and height at least $ck$ in the union forest is at most $n/2^k$.
\end{lemma}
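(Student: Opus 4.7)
\textbf{The plan is} to prove the six claims in sequence, exploiting the fact that every rank-relevant coin flip in \textit{elink} is independent and fair. For (i), fix a node $x$. For $x$ to have even one same-rank ancestor, the \textit{elink} that first made $x$ a child must have flipped heads (so $x$'s new parent $y_1$ has the same rank $r$). $y_1$ remains at rank $r$ in the final forest only if its next participation as the smaller root in a rank-$r$ \textit{elink} is also heads (otherwise $y_1$'s rank would increase). Iterating, a chain of $\ell$ same-rank ancestors requires $\ell$ mutually independent heads, giving $\Pr[\geq \ell \text{ same-rank ancestors}] \le 2^{-\ell}$; summing yields expectation $\le 1 = O(1)$.

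For (ii)--(iv), let $n_r$ denote the total number of nodes ever attaining rank $r$ and $T_r$ the number of tails flips in rank-$r$ \textit{elink}s. Each rank-$r$ \textit{elink} decrements the rank-$r$ root count by one (either the smaller root becomes a child, heads, or its rank jumps to $r+1$, tails), so at most $n_r$ rank-$r$ \textit{elink}s occur; since each is a fair independent flip, $E[T_r \mid n_r] \le n_r/2$. Because $n_{r+1} = T_r$ and $n_0 \le n$, induction yields $E[n_r] \le n/2^r$. Total rank equals $\sum_r T_r$, with expectation $\le \sum_r n/2^{r+1} \le n$, proving the first half of (ii); the high-probability $n + O(\sqrt n)$ bound follows from Azuma's inequality on the martingale $(\text{tails seen}) - (\text{flips seen})/2$, once one notes that the total number of \textit{elink}s is $O(n)$ with high probability. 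Claim (iii) is then immediate from Markov applied to $n_k$. For (iv), $\Pr[M \ge k] = \Pr[n_k \ge 1] \le n/2^k$; the bound $E[M] = \sum_{k \ge 1}\Pr[M \ge k] \le \lg n + 3$ follows by splitting the sum at $\lceil \lg n \rceil$, and the w.h.p.\ statement by plugging $k = (c+1)\lg n$ into the Markov bound.

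For (v), decompose $\text{depth}(v) = \sum_{r \ge r(v)} L_r(v)$, where $L_r(v)$ is the number of rank-$r$ proper ancestors of $v$. If $v$ has a lowest rank-$r$ ancestor $y$, then $L_r(v) = 1 + (\text{same-rank ancestors of } y)$, which has expectation $O(1)$ by (i); otherwise $L_r(v) = 0$. Summing over the at most $M+1$ ranks along $v$'s root path and applying (iv) yields an expected depth of $O(\log n)$, with careful constant chasing giving $3\lg n + 9$. The high-probability bound comes from sub-geometric tails of each $L_r$, a union bound over nodes, and the w.h.p.\ ceiling on $M$.

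The main obstacle is (vi). Given a bad $v$ with rank $r < k$ and height $\ge ck$, fix a downward path of length $ck$ from $v$. Its $ck+1$ nodes have ranks in $\{0, \ldots, r\} \subseteq \{0, \ldots, k-1\}$ (non-increasing downward), so since only $k$ rank values are available, pigeonhole gives some rank $s < k$ taken by $\ge c+1$ consecutive path nodes, whose bottom has $\ge c$ rank-$s$ ancestors. By (i) combined with the rank-$s$ count from (iii), the expected number of rank-$s$ nodes with $\ge c$ same-rank ancestors is at most $(n/2^s) \cdot 2^{-c}$. The delicate step is charging bad $v$'s to such witnesses, since a naive one-to-one charging loses a depth factor. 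The plan is to exploit that the downward path from a bad $v$ in fact contains $\ge (c-1)k+1$ same-rank parent-child edges (because the non-increasing rank sequence over $k$ values can have at most $k-1$ drops), giving $\Omega(k)$ witnesses per bad $v$; combining this with a careful amortized charge against the witness set, and using the sub-geometric tail of same-rank chain lengths to sharpen the per-witness contribution by an extra $2^{-\Omega(k)}$ factor, delivers the required $n/2^k$ expectation for $c$ a large enough constant.
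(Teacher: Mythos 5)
Parts (i)--(v) of your proposal follow essentially the paper's route: same-rank ancestors cost independent heads-flips (giving the $2^{-\ell}$ tail for (i)), the sum of ranks is the number of rank-increment flips interleaved with at most $n-1$ root-to-child flips (your per-rank bookkeeping $E[n_r]\le n/2^r$ is a slightly more structured way to say the same thing, and your Azuma step plays the role of the paper's Chernoff bound), (iii) and (iv) are the same Markov/union-bound computations, and (v) has the same decomposition (different-rank steps bounded by the maximum rank, same-rank steps geometric) and the same level of looseness about the random number of summands as the paper itself.

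The genuine gap is in (vi). Your charging scheme is not carried out, and as described it does not reach the stated bound: the witnesses you define (nodes with at least $c$ same-rank ancestors, $c$ a constant) occur with expected count about $n\,2^{-c}$, which is exponentially weaker than the required $n/2^{k}$, and the claimed rescue --- ``$\Omega(k)$ witnesses per bad node'' plus an ``extra $2^{-\Omega(k)}$ factor from sub-geometric tails'' --- is only asserted, not proved; moreover witnesses can be shared among many bad nodes (a long path of bad nodes reuses the same same-rank runs below it), so dividing by the per-node witness count is not a valid amortization, and the pigeonhole run of length $c+1$ you extract need not coincide with the edges whose tails you later want to exploit. The observation you already made --- that a downward path of $ck$ edges from a node of rank less than $k$ has at most $k-1$ rank-decreasing edges, hence at least $(c-1)k+1$ equal-rank edges --- finishes the proof directly without any charging: each edge of the union forest is an equal-rank edge with probability at most $\sfrac{1}{2}$, independently across edges (the same fact underlying your part (i)), so the probability that a fixed path of $ck$ edges contains at least $(c-1)k$ equal-rank edges is at most $\binom{ck}{k}2^{-(c-1)k}\le 2^{-k}$ for $c$ large enough (a Chernoff bound); fixing, for each node $x$, its ancestor $ck$ levels up and summing over the $n$ choices of $x$ bounds the expected number of rank-$<k$ nodes of height at least $ck$ by $n/2^{k}$. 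This per-edge, per-descendant argument is exactly the paper's proof of (vi); to repair your write-up, replace the witness-charging plan with it.
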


\begin{proof}
Consider only flips that result in successful CAS operations.  Each such flip produces a rank increment with probability $\frac{1}{2}$; otherwise, it makes a root into a child.\par

\vspace{\baselineskip}
(i) The probability that a node has \textit{k }ancestors of the same rank is at most 1/2\textit{\textsuperscript{k}}\textsuperscript{–1}.\  Summing gives the bound.\par

\vspace{\baselineskip}
(ii) There are at most \textit{n }– 1 flips that make a root\ into a child.  The sum of ranks, which is the number of rank increments, is thus at most the number of heads in a sequence of coin flips containing at most \textit{n }tails, which is at most \textit{n }in expectation, and at most \textit{n }+ O(\textit{n}\textsuperscript{1/2}) with probability 1 – \textit{n}\textsuperscript{c} for any constant \textit{c} > 0\textit{ }by a Chernoff bound \cite{CL05}, with the constant factor in the $``$O$"$  depending on \textit{c}.\par

\vspace{\baselineskip}
(iii) The rank of a given node is at least \textit{k} with probability at most 1/2\textit{\textsuperscript{k}}.\  The expected number of nodes of rank at least \textit{k} is thus at most \textit{n}/2\textit{\textsuperscript{k}}.\  By a union bound, all nodes have rank less than \textit{k} with probability at least 1 – \textit{n}/2\textit{\textsuperscript{k}}.\ \  \par

\vspace{\baselineskip}
(iv) For \textit{c }> 0, the probability that the maximum rank is at least lg\textit{n }+ \textit{c} is at most \textit{n}/2\textsuperscript{lg\textit{n} + \textit{c}} = 1/2\textit{\textsuperscript{c}}.\  It follows that the expected maximum rank is at most lg\textit{n} +  \(  \sum _{i=1}^{\infty}i/2^{i} \)   lg\textit{n} + 2  lg\textit{n} + 3, and the probability that the maximum rank exceeds (\textit{c} + 1)lg\textit{n} is at most 1/2\textit{\textsuperscript{c}}\textsuperscript{lg\textit{n}} = 1/\textit{n\textsuperscript{c}}.\par

\vspace{\baselineskip}
(v)\ A node gains a proper ancestor of the same rank with probability at most $\sfrac{1}{2}$.  Thus the expected number of proper ancestors of the same rank as that of a given node is at most  \(  \sum _{i=1}^{\infty}i/2^{i} \) \textbf{ = }2, which implies by (ii) that the expected depth of the union forest is at most 3lg\textit{n}\ +\ 9.  Let c > 0.  By (ii) the maximum node rank is at most (\textit{c }+ 3)lg\textit{n} with probability at least 1 – 1/\textit{n\textsuperscript{c}}\textsuperscript{+2}.\  For any node, the probability that it has at least (\textit{b} + \textit{c} + 3)lg\textit{n} proper ancestors in the union forest is at most the probability that a sequence of fair coin flips containing at most $(c + 3)\lg n$ heads contains at least \textit{b}lg\textit{n}\ tails.  By a Chernoff bound, for \textit{b} sufficiently large, this probability is at most 1 – 1/\textit{n\textsuperscript{c}}\textsuperscript{+2}.\  The probability that at least one of the \textit{n} nodes has more than (\textit{b} + \textit{c} + 3)lg\textit{n }proper ancestors in the union forest is at most 
$2n/n^{c+2} \le 1/n^c$.

\vspace{\baselineskip}
(vi) Let \textit{x}\ be any node.  We claim that for some \textit{c} >\textit{ }0 the probability that \textit{x} has an ancestor \textit{y }of rank less than \textit{k} such that the path from \textit{x} to \textit{y} contains \textit{ck} edges is at most 1/2\textit{\textsuperscript{k}}.\  Part (vi)\ follows from the claim by a union bound.  To prove the claim, consider the edges on the path from \textit{x }to \textit{y}\ in the union forest.  Call such an edge\textit{ good} if its ends have different ranks and \textit{bad} otherwise.\  Each edge has probability\ at least $\sfrac{1}{2}$ of being good, independent of the status of all other edges.  The claim follows by a Chernoff bound.
\end{proof}

\begin{theorem}
Randomized linking by rank in combination with any valid compaction method maintains the invariant that the parents define a set of trees that partition the nodes into the\ correct disjoint sets.  The parent of any non-root node has rank\ no less than that of the node, and if the ranks are equal, the parent has larger index.  Each set operation stops in $O(\log n)$ steps with probability $1 - 1/n^c$ for any $c > 0$, where the constant factor in the $``$O$"$  depends on $c$, so the algorithm is bounded wait-free with probability $1 - 1/n^c$ for any $c > 0$.
\end{theorem}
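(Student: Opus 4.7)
The plan is to follow the template of Theorems 1 and 2, replacing deterministic structural bounds with the probabilistic bounds from Lemma 3. First I would prove the invariant by induction on the number of shared-memory steps. The non-elink branch of link is unchanged from the DCAS algorithm and updates $(u.p, u.r)$ only if both fields still match their read values, so if it succeeds, $u$ was still a root of rank $r$ with $r < s = v.r$, and the parent/rank invariant is preserved. In the elink branch with $u<v$, the ``heads'' CAS sets $u.p \gets v$ only when $u$ is still a root of rank $r = v.r$, giving a parent of equal rank and larger index as required; the ``tails'' CAS only increments $u.r$ and never touches a parent pointer. The symmetric case $v < u$ is identical. Because ranks only change on roots and never decrease, the parent of any non-root retains a rank $\ge$ that of the node, and ranks of equal-rank parent/child pairs are broken by index. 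Linearization of the unite follows because the atomicity of each CAS makes the first successful parent-changing CAS inside link/elink the unambiguous linearization point.

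Next I would bound the step count per operation. For a find, Theorem 1 already shows that the number of steps is $O(h)$ where $h$ is the height of the union forest; by Lemma 3(v), $h = O(\log n)$ with probability at least $1 - 1/n^c$, giving the claimed bound. For a unite I would reuse the charging argument from Theorem 2 almost verbatim. The nodes visited during a unite lie on two paths in the union forest and, along each path, they are visited in non-decreasing rank order; each non-root is visited $O(1)$ times, and each revisit to a root is charged to either a rank increment of that root or to that root becoming a child. Along each of the two relevant paths, the total number of rank increments is at most the maximum rank, and the number of roots that become children is at most the length of the path. By Lemma 3(iv) and (v), both quantities are $O(\log n)$ with probability at least $1 - 1/n^c$, so the unite completes in $O(\log n)$ steps with the same probability.

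Finally, I would combine the find and unite bounds with a union bound. A single set operation performs at most three finds and $O(1)$ links, all inside the same random instance of the forest, so a single application of Lemma 3(v) with an inflated constant suffices to make all of them $O(\log n)$ simultaneously with probability $1-1/n^c$. This also yields bounded wait-freedom with the same high probability.

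The main obstacle is making sure that the probabilistic statements of Lemma 3 actually apply to the quantities arising in the unite charging argument, rather than only to global quantities. Specifically, one has to verify that the ``rank increases along a path'' are bounded by the maximum rank (which is immediate from rank monotonicity along parent edges), and that the height bound of Lemma 3(v) governs path lengths throughout the execution, not just at the end. Both follow because the union forest is the correct structural analog of the final forest for these local-tree quantities, but stating this precisely is where the proof must be careful; everything else is a straightforward reduction to Lemma 3 and the argument of Theorem 2.
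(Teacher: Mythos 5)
Your proposal is correct and follows essentially the same route as the paper, whose proof is simply the remark that the theorem ``follows from parts (iv) and (v) of Lemma 3 by a proof like those of Theorems 1 and 2'': you carry out exactly that plan, with the invariant established by induction as in Theorems 1 and 2 and the $O(\log n)$ step bound obtained by substituting the high-probability rank and depth bounds of Lemma 3(iv) and (v) into the charging argument of Theorem 2. (Only trivial slips: after a successful ``heads'' CAS the parent may by then have strictly larger rank, which still satisfies the invariant, and a unite may perform more than three finds, which the charging argument already handles.)
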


\begin{proof}
The theorem follows from parts (iv) and (v) of Lemma 3 by a proof like those of Theorems 1 and 2.
\end{proof}

\begin{remark}
We can modify randomized linking by rank to guarantee that it is bounded wait-free without affecting any of our bounds by placing an upper bound of $n-1$ on\ the maximum rank: when linking two nodes of maximum rank, we do not do a flip but always try to make the smaller-index node a child of the larger.
\end{remark}

\subsection{Linking by Random Index}

\vspace{\baselineskip}
With an appropriate definition of rank, Lemma 3 and Theorem 3 hold for linking by random index, under a strong independence assumption.\  We define the rank of node \textit{x} to be lg\textit{n}\ –  lg(\textit{n} –\textit{ x}\  + 1).\  Thus node \textit{n} has rank lg\textit{n}, nodes \textit{n} – 1 and \textit{n} – 2 have rank lg\textit{n}\ \ – 1, and so on.  The rank of a child is no greater than that of its parent.\  We use these ranks only in the analysis; the implementation of the algorithm does not use them.\par

\vspace{\baselineskip}
We assume that the random node order is independent of the linearization of the unite operations.\ \ More precisely, we assume that the node order and linearization are generated together in the following way.  The implementation maintains a set \textit{U} of unordered pairs $\{u,v\}$  that are candidates for linking, initially empty, and a partial order \textit{P} of the nodes, initially empty, that is a total order on the nodes of any set defined by the links done so far, and that leaves any two nodes in different sets unordered.\  To do \textit{link}(\textit{u,} \textit{v}), a process adds the unordered pair $\{u,v\}$ to \textit{U}.\par

\vspace{\baselineskip}
The scheduler sequentially removes pairs from \textit{U }in\ arbitrary order.  When removing a pair $\{u,v\}$  from \textit{U},\ the scheduler performs three actions.  First, it modifies \textit{P} by merging the total orders of the sets containing \textit{u} and \textit{v}, with each possible merged order equally likely.\  Second, if \textit{u} < \textit{v} it sets \textit{u.p} = \textit{v}, if \textit{v} < \textit{u} it sets \textit{v.p} = \textit{u}.  This unites the sets containing \textit{u} and \textit{v}.\  The link corresponding to the pair $\{u,v\}$ \ succeeds.  Third, the scheduler deletes from \textit{U} all other pairs containing \textit{u} or \textit{v}.\  Each link corresponding to such\ a pair fails.  When a pair is deleted from \textit{U}, the process that added the pair to \textit{U} proceeds with its next operations, which are the recomputing of its \textit{u} and \textit{v}.\par

\vspace{\baselineskip}
The updating of \textit{P} maintains the invariant that the total order of the nodes in any set\ defined by the links done so far is uniformly random.  If there is more than one set after all unites have been done, we can extend the final partial order to a total order by merging the total orders on the final sets,\ with each possible merged order equally likely.  The result is a uniformly random permutation of the nodes, equivalent to a uniformly random numbering of the nodes from 1 to \textit{n}.\  Thus this implementation does linking by random index, subject to the restriction imposed on the scheduler.\  The execution does not change if we initially number the nodes uniformly at random but reveal to the scheduler only the total order within each set formed so far.\par

\vspace{\baselineskip}
This implementation restricts the behavior of linking by random index in at least two different ways: in the actual implementation, the CAS operation for a link is defined by an ordered pair, not an unordered pair, so the scheduler gets information about the node order before it needs to decide which such CAS to do next.\  Also, in the actual implementation a link can make a root the child of a non-root, which cannot happen with the scheduler constraint.\  We think the latter restriction\ is\ inconsequential, but the former is significant.  We thus view our analysis of linking by random index as suggestive, not definitive.  \ \    \ \  \par

\vspace{\baselineskip}
\begin{lemma}
With linking by random index, if the scheduler restriction holds, then parts \emph{(i)}, \emph{(v)}, and \emph{(vi)} of Lemma 3 are true, as well as the following strengthened versions of parts \emph{(ii)}, \emph{(iii)}, and \emph{(iv)}: \emph{(ii)} the sum of ranks is at most $n$, \emph{(iii)} the number of nodes of rank at least $k$ is at most $n/2^k$, and \emph{(iv)} the maximum node rank is at most $\lg n$.
\end{lemma}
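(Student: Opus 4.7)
The plan is to split the claims into deterministic parts (the strengthened (ii), (iii), (iv)) and probabilistic parts ((i), (v), (vi)). Since rank is defined purely as a function of the node's index, the strengthened claims should fall out of the rank formula without any use of expectation.

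For (iv), the maximum index is $n$, giving maximum rank $\lg n$; for (iii), $\mathrm{rank}(x) \ge k$ forces $n - x + 1 \le n/2^k$, so at most $\lfloor n/2^k \rfloor \le n/2^k$ labels qualify; for (ii), writing the sum of ranks as
\[ \sum_{x=1}^n \mathrm{rank}(x) \;=\; \sum_{k \ge 1} \bigl| \{x : \mathrm{rank}(x) \ge k\} \bigr| \;\le\; \sum_{k \ge 1} n/2^k \;=\; n \]
gives the bound. All three hold deterministically once the labels are drawn.

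For (i), (v), and (vi), the strategy is to mirror the proof of Lemma 3, replacing fair coin flips with the uniform interleaving of set orders the scheduler performs. Under the scheduler restriction, the final labeling is a uniformly random permutation revealed lazily: each time a pair $\{u,v\}$ is processed, the two sets merge with a uniformly random interleaving of their within-set orders, and the new root is the maximum of the merged order. Following the path from any fixed node $x$ upward, the sequence of labels is the sequence of these merged-maxima. I would show that the probability of $j$ successive same-rank ancestors decays at least geometrically, which yields (i) in expectation; combining with the strengthened (iv) then bounds the depth of the union forest, giving (v) and (vi) via the same Chernoff-style bounds used in the proof of Lemma 3.

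The main obstacle is the geometric-decay estimate for same-rank ancestors. In Lemma 3 each equal-rank merge comes with an independent fair coin flip, but here the probability that the next ancestor lies in the same dyadic label-block as $x$ depends on the history of merges and the sizes of the subtrees involved. I expect the cleanest route is a symmetry/coupling argument: conditional on the multiset of labels already above $x$, the labels contributed by the next merged subtree are inserted uniformly at random into the existing order, so the new maximum is at least as likely to fall outside the current dyadic block as inside. This should let me transfer the fair-coin bound of Lemma 3 with at most a constant-factor loss, after which parts (i), (v), and (vi) follow by essentially the same calculations as in the previous lemma.
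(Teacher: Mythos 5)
Your treatment of the strengthened parts (ii)--(iv) coincides with the paper's: they are immediate, deterministic consequences of the rank formula, and your counting argument is fine. For parts (i), (v), (vi) you are also on the paper's route---everything reduces to showing that each successive ancestor of $x$ in the union forest has probability at least $1/2$, conditioned on the past, of having strictly larger rank than the previous one, after which the calculations of Lemma 3 transfer. But that is exactly the step you yourself flag as ``the main obstacle,'' and the coupling you sketch does not close it. Two concrete problems: first, ``the labels contributed by the next merged subtree are inserted uniformly at random into the existing order'' is a statement about \emph{relative} orders, whereas membership in a dyadic rank block is a statement about \emph{absolute} labels; in the raw linearization, merges not involving $x$'s set have already consumed labels and revealed order information, so the conditional distribution of the new root's absolute label given the history is not obviously uniform over anything usable---this is precisely the conditioning difficulty you name but do not resolve. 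Second, the bound you need is conditional on the merge actually adding an ancestor of $x$ (the other set's root $v$ having larger index than the current root $u$); an unconditional claim about ``the new maximum'' is too weak, since the new maximum may simply be $u$ itself, in which case $x$ gains no ancestor.

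The paper closes this gap with a specific device you would need (or an equivalent): since links of disjoint sets commute, it re-linearizes the execution so that all links forming the other set $S'$ are delayed until just before $S'$ merges with the set $S$ containing $x$, and it assigns absolute numbers lazily, so that at the moment of the merge only the nodes of $S$ are numbered and its root $u$ holds the largest assigned number. Then the nodes of $S'$ receive fresh labels uniform over the unassigned pool; every label above $u$'s is unassigned, at least half of those lie in strictly higher rank blocks (rank is nondecreasing in index and block sizes halve as rank grows), and every same-rank label above $u$ is smaller than every higher-rank label, so conditioned on $v > u$ the maximum of $S'$'s labels has probability at least $1/2$ of having larger rank; induction over merges gives the per-ancestor independence needed for the Chernoff-style bounds in (v) and (vi). With this device there is no ``constant-factor loss'' at all, so the constants of Lemma 3 carry over; without it, your geometric-decay estimate remains an unproven assertion rather than a proof.
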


\begin{proof}
Parts (ii), (iii), and (iv) are immediate from the definition of ranks.\  Parts (i), (v), and (vi) follow as in the proof of Lemma 3 from the following claim:
\end{proof}

\vspace{\baselineskip}
(*) Given a node \textit{x}, each successive ancestor of \textit{x} in the union forest has probability at least ½ of having higher rank then its parent, independent for each ancestor.\par

\vspace{\baselineskip}
To prove (*), consider a node \textit{x}.\  Given an execution of the algorithm, we modify the linearization by delaying links uniting sets not containing \textit{x} until such a set is a subset of a set about to be united with \textit{x}.\  That is, let \textit{S} be the current set containing \textit{x}, let $\{u,v\}$  be the next pair deleted from \textit{U} with \textit{u} but not \textit{v} in\textit{ S}, and let \textit{S’} be the current set containing \textit{v}.\  We modify the schedule to delay all the links forming \textit{S’} until just before the link of \textit{u} and \textit{v}.\ \ This does not change the steps done by the execution, only their linearization order.  We generate the partial order \textit{P}\ by generating a numbering of the nodes incrementally and revealing to the scheduler only the total order within each set constructed so far.  Initially we assign a number uniformly at random to \textit{x}.\  Subsequently when the scheduler removes a pair $\{u,v\}$  from \textit{U}, if \textit{u} and/or \textit{v} is unnumbered we assign it a number chosen uniformly at random from the numbers not yet assigned.\par

\vspace{\baselineskip}
Suppose the scheduler removes a pair $\{u,v\}$  from \textit{P} with \textit{u} the root of the tree containing \textit{x}, and \textit{v}\ in another tree, and that the corresponding link succeeds.  Let \textit{S} and \textit{S’}, respectively, be the sets containing \textit{u} and \textit{v} just before $\{u,v\}$  is removed from \textit{U}.\  Just before the links forming \textit{S’} are done, the only numbered nodes are those in \textit{S}, and \textit{u}\ has the largest number.  Among the numbers larger than that of \textit{u}, at least half have rank larger than that of \textit{u}.\  When \textit{S’}\ is formed, its nodes are numbered uniformly at random from among the unassigned numbers.  Given that a number assigned to a node in \textit{S’} is larger than that of \textit{u}, it has probability at least $\sfrac{1}{2}$ of being larger than the rank of \textit{u}.\  Since \textit{v} has largest number among the nodes in \textit{S’}, if the link makes \textit{v} the parent of \textit{u} the rank of \textit{v} is greater than that of \textit{u}\ with\ probability\ at least $\sfrac{1}{2}$.  The claim (*) follows by induction on the number of steps.   \par

\begin{theorem}
Theorem 3 holds for linking by random index if the scheduler restriction holds.\par
\end{theorem}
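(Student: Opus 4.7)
The plan is to derive Theorem 4 from Lemma 4 by mirroring the proof of Theorem 3, while exploiting the fact that linking by random index is itself a special case of linking by index. In particular, the correctness invariants of Theorem 3---that the parents define trees which partition the nodes into the correct disjoint sets---follow immediately from Theorem 1, since linking by random index is simply linking by index with a particular (random) total order, and the one-try/two-try splitting we compose it with is a valid compaction method. Likewise, the per-operation step bound of $O(h)$, where $h$ is the union-forest height, is inherited from the charging argument in Theorem 1 without modification: that argument only requires that every successful link convert a root into a child, and that the two find paths in a unite have total length $O(h)$, both of which hold under the scheduler-restricted execution of linking by random index.

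To establish the rank invariants stated in Theorem 3, I would appeal directly to the definition of rank for random index, namely $r(x) = \lg n - \lg(n - x + 1)$. This quantity is (weakly) monotone nondecreasing in the index $x$, so since linking by index guarantees that every non-root has a parent of strictly larger index, the parent's rank is automatically no less than that of the child. When the ranks happen to coincide (because the two indices fall in the same dyadic block), the strict inequality on indices still holds, which is exactly the tie-breaking property asserted in Theorem 3. Thus no separate probabilistic argument is needed for the rank invariants; they come for free from the index order.

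For the probabilistic $O(\log n)$ step bound per operation, I would combine the $O(h)$ bound from Theorem 1 with part (v) of Lemma 4, which gives $h = O(\log n)$ with probability $1 - 1/n^c$ for every constant $c > 0$, with the constant inside the $O$ depending on $c$. Plugging the high-probability depth bound into the deterministic step bound from Theorem 1 yields $O(\log n)$ steps per operation with the same probability, from which bounded wait-freedom with probability $1 - 1/n^c$ follows immediately. This is precisely the statement in Theorem 3.

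The step I would scrutinize most carefully is whether the step-counting argument of Theorem 1 genuinely carries over in the scheduler-restricted model. In that model, the scheduler selects ordered pairs while the analysis reasons about unordered pairs, and a link is prevented from making a root the child of a non-root; neither feature is actually used in the charging of repeated root visits in Theorem 1, which only relies on the fact that failure of a CAS at $u$ certifies that $u$ has already acquired a (proper) parent, and that at most $2h$ roots become children during a single unite. Since Lemma 4 already packages all the probabilistic content (depth bound, rank tail bounds, ancestor-of-equal-rank bound) needed for the analogue of Theorem 3, Theorem 4 then follows by direct substitution, with no new probabilistic inequality to prove.
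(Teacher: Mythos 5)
Your proposal is correct and follows essentially the same route as the paper, which derives the result from parts (iv) and (v) of Lemma 4 by repeating the argument used for Theorem 3 (itself modeled on Theorems 1 and 2): correctness and the $O(h)$ step bound come from the linking-by-index/valid-compaction argument, the rank invariant comes for free from the monotonicity of the index-defined ranks, and the high-probability $O(\log n)$ bound comes from the depth bound of Lemma 4(v). Your extra check that the Theorem 1 charging argument is unaffected by the scheduler restriction is a reasonable elaboration of what the paper leaves implicit.
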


\begin{proof}
The theorem follows from parts (iv) and (v) of Lemma 4 in the same way that Theorem 3 follows from parts (iv) and (v) of Lemma 3.
\end{proof}

\section{Indirection and Helping}

\vspace{\baselineskip}
The algorithms in Sections 5.1 and 5.2 require that CAS (and DCAS in 5.1) support testing and updating of storage\ blocks able to store both the parent and the rank of a node.  In this section we present two ways to eliminate the need for blocks to contain multiple fields.\  \  Both methods increase the number of steps per operation, but by at most a constant factor.\  We also discuss how to modify the implementation of the randomized linking algorithm of Section 5.2 to eliminate the need for an unrealistic independence assumption in its analysis.\par

\vspace{\baselineskip}
The first method to reduce the block size, proposed by Anderson and Woll \cite{AW91} is to use \textit{indirection}.\  Specifically, each node contains only one field, a pointer to a \textit{ledger} that contains the parent and rank of the node.\ \ To\ do\ a link via a CAS, a process creates a new ledger containing the updated information for the node being linked and then uses a CAS to attempt to replace the old ledger of the node by the new one.  A link via a DCAS is similar, except that the process creates two new ledgers and uses a DCAS to replace the old ledgers of the two affected nodes.  Parent updates done by splitting are done directly on the appropriate ledgers, without allocating new ones.  The ledger method requires a way to allocate ledgers, and care must be taken to avoid\ the reuse of ledgers.  The algorithm of Section 5.1 needs at most $3n - 2 + 2p = O(n)$ ledgers, one per initial set plus at most two per successful link\ plus at most two per process.  The algorithm of Section 5.2 needs $O(n)$ ledgers with high probability.\  If ledgers are used to implement randomized linking by rank, the independence assumption needed by the analysis becomes much weaker and quite realistic: the success or failure of a CAS can depend on all inputs to the CAS, in particular the ledger addresses, but not on the contents of the ledgers.\par

\vspace{\baselineskip}
Allocating ledgers efficiently is itself a challenging problem, which Anderson and Woll ignored.  
One way to do it is to use the concurrent fetch and increment method of Ellen and Woelfel \cite{EW}.
If ledgers are allocated individually, the number of steps to allocate a ledger is $O(\log p)$.  
If ledgers are allocated in groups of $O(\log p)$, the amortized time per allocation is $O(1)$ and the number of ledgers used will be $O(n)$ provided $p = O(n/\log n)$.\  The problem of ledger allocation deserves more serious study.\  \ \  \par

\vspace{\baselineskip}
The second method is \textit{helping}, as described for example in \cite{HFP}.\  The idea is to allow processes to complete the tasks of other processes.\  We number the processes from 1 to $p$.\  Each node \textit{u} has an extra field, \textit{u.process},\ which can hold a process number or 0, and is initially 0.  Each process has a \textit{descriptor} in which it records a sequence of steps it wants to perform. To update a node, a process writes appropriate instructions into its descriptor and then does a CAS or DCAS to write its process number into the \textit{process}\ field\ of\ the affected node or nodes.  Any other process that wants to update a node containing a non-zero process number must first execute the instructions in the corresponding descriptor.  When the last instruction is executed, the process number in the affected node or nodes is reset to 0, allowing further updates to the node or nodes.  As long as the number of instructions needed to do an update is bounded by a constant, the use of descriptors increases the total work by only a constant factor.\par

\vspace{\baselineskip}
In using helping to link nodes of equal rank, we have to solve the \textit{ABA problem}: A helping process, having completed the instructions in a descriptor, resets the process number in the relevant node to zero, but in the meantime the process being helped has reset the process number to zero, initiated a new update, and set the process number to its own number again.\  The\ helping\ process\ has no way to detect that a new update has been initiated by the process that initiated the old one.  In our application, we can solve the ABA problem by using the monotonicity of ranks.  This requires that a CAS be able to update the rank and the process number of a node as an atomic operation.  Since ranks are small and in any realistic application $p \ll n$, we think this is a reasonable assumption.\par

\vspace{\baselineskip}
The\ deterministic algorithm of Section 5.1 does links using helping as follows.  Each descriptor contains two nodes and\ a rank.  To link root \textit{x} of rank \textit{r} to root \textit{y}, a process, say process \textit{i}, writes \textit{x}, \textit{y}, and \textit{r} into its descriptor.\  If \textit{y} has rank greater than \textit{r}, it uses a CAS to write \textit{i} into node \textit{x} while verifying that the process number of \textit{x} is 0 and the rank of \textit{x} is \textit{r}.\  If \textit{y} has rank \textit{r}, it uses a DCAS to write \textit{i} into both \textit{x} and \textit{y} while verifying that the process numbers of \textit{x} and \textit{y} are 0 and the ranks of \textit{x} and \textit{y} are \textit{r.}\  A process wanting to update a node that finds a non-zero process number \textit{i} in the node reads the corresponding descriptor.\  Suppose the descriptor contains nodes \textit{x} and \textit{y} and rank \textit{r}.\  The process sets \textit{z} = \textit{y.p} and does a CAS to set the parent of \textit{x} to \textit{z} while verifying that \textit{x}\ was a root before the update.  It then tests whether \textit{y} is a root of rank \textit{r}.\  If so, it does a CAS to change the rank of \textit{y} to \textit{r }+ 1 and the process number of \textit{y} to 0 while verifying that the rank and process number of \textit{y} were \textit{r} and \textit{i} before the update.\par

\vspace{\baselineskip}
This method uses a couple of optimizations.  It does not reset the process number of a node that becomes a non-root, since no subsequent link will try to change its parent or rank.  Instead of making \textit{y} the parent of \textit{x}, it makes \textit{y.p} the parent of \textit{x}.  
The reason to do the link this way is that some other process can make \textit{y} a non-root just before process \textit{i} does its CAS or DCAS to write \textit{i} into $x$, or into $x$ and $y$.
If this happens, $y.p$ will have rank greater than $r$ when $x$ becomes its child, preserving the invariant that ranks strictly increase from child to parent.
If this does not happen and the rank of \textit{y} is \textit{r}, 
the helping process adds one to the rank of $y$ and resets the process number of $y$ to 0.\par

\vspace{\baselineskip}
The randomized algorithm of Section 5.2 does helping using descriptors containing a node \textit{y},\textit{ }a rank \textit{r}, and a \textit{flag}\ whose value is null, true, or false.  A flag of true indicates that \textit{y} should become the parent of the root containing the process number of the descriptor; a flag of false indicates that the rank of this node should be changed from \textit{r} to \textit{r}\ + 1.  If process \textit{i} wants to link root \textit{x} of rank \textit{r} to root \textit{y}, it writes \textit{y} and \textit{r}\ into its descriptor.  If \textit{y} has rank greater than \textit{r}, it sets the flag to true; if the rank of \textit{y} equals \textit{r},\ it flips a fair coin and sets the flag correspondingly.  Then it uses a CAS to set the process number of \textit{x} equal to \textit{i} while verifying that the rank and process number of \textit{x} were \textit{r} and 0 before the update.\  A process wanting to update a node \textit{x} that finds a non-zero process number \textit{i} in \textit{x}\ reads the corresponding descriptor.  If the flag is true it does a CAS to set the parent of \textit{x} to \textit{y} while verifying that \textit{x}\ was a root before the update.  If the flag is false it does a CAS to set the rank and process number of \textit{x} to \textit{r} + 1 and 0 while verifying that they were \textit{r} and \textit{i} before the update.\par

\vspace{\baselineskip}
The\ analysis of this method relies on the same independence assumption as the method using ledgers: scheduling decisions are independent of the contents of descriptors.  A variant of the method is to set the flag \textit{after} the process number of the descriptor is written into \textit{x}:\ the\ first step of a helping process is to change the flag from null to true or false using the randomized CAS operation mentioned in Section 2.  If this operation is available, no independence assumption is needed.  This implementation of randomized linking by rank satisfies the Anderson-Woll requirement that a randomized algorithm be efficient even if the scheduler knows the outcome of previous random choices.\ \ We think, though, that it is reasonable to assume that the scheduler makes its decisions only on the basis of the inputs to the CAS operations, or that it cannot read the private memories of the processes.  If either of these assumptions hold, we do not need randomized CAS.\par

\section{Upper Bounds}

\vspace{\baselineskip}
The results of Sections 5 and 6 give us the following theorem:\par

\begin{theorem}
With any of the three linking methods of Section 5 combined with any valid compaction method, the total work is $O(m \log n)$.\  This bound is worst-case for the deterministic linking method, high-probability for the randomized methods.\  If randomized linking by rank is implemented as described in Section 6, the bound is valid even for an adversarial scheduler.
\end{theorem}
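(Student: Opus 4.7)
The plan is to derive Theorem 4 directly from the per-operation step bounds already proved in Section 5, summing (or union-bounding) over the $m$ operations. For the deterministic algorithm, linking by rank via DCAS, Theorem 2 states that every set operation terminates in $O(\log n)$ steps in the worst case; summing over the $m$ operations immediately yields the deterministic bound of $O(m \log n)$, with no probabilistic argument needed.

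For the two randomized linking methods, Theorems 3 and 5 each show that every set operation finishes in $O(\log n)$ steps with probability $1 - 1/n^c$ for any constant $c > 0$. The cleanest route to a total-work bound is to condition once, up front, on the event $E$ that the union forest has depth $O(\log n)$ throughout the entire execution; by Lemma 3(v) this event holds with probability $1 - 1/n^c$ for any $c > 0$. On $E$, the node-visit charging arguments from the proofs of Theorems 1, 2, and 3 show that every find stops in $O(\log n)$ steps and that each unite's total visits to roots and to children are both $O(\log n)$ deterministically. Summing over the $m$ operations gives total work $O(m \log n)$, conditional on $E$, hence $O(m \log n)$ with high probability. (If one prefers to avoid conditioning, an alternative is to union-bound the per-operation tail estimates of Theorems 3 and 5 directly, choosing the exponent $c$ large enough to absorb any polynomial dependence of $m$ on $n$.)

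For the adversarial-scheduler claim for randomized linking by rank, I would appeal to the Section 6 implementations: either the ledger-based implementation, in which the success or failure of a CAS can depend on the input addresses but not on ledger contents, or the descriptor/helping variant in which a process installs its process number first and only afterwards resolves the fair coin flip, using the randomized CAS primitive described in Section 2. In both variants, the outcome of each coin flip is independent of the scheduler's prior choices, so Lemma 3 and Theorem 3 remain valid verbatim against an adversarial scheduler, and the argument above goes through unchanged.

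The main obstacle here is not computational; the work has already been done inside Theorems 2, 3, and 5. The only real subtlety is bookkeeping around the quantifier on ``high probability'' when $m$ may grow polynomially (or faster) in $n$: one must be careful to state that the total-work bound holds with probability $1 - 1/n^{c'}$ for any constant $c' > 0$, with the constant inside the $O(\cdot)$ depending on $c'$, and one must confirm that Theorems 1 and 2's per-operation step bounds are driven purely by the current depth of the union forest (so that conditioning on $E$ really does upgrade the randomized per-operation bounds to deterministic ones over the entire execution).
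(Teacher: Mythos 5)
Your proposal is correct and follows essentially the same route as the paper, whose proof simply declares the theorem immediate from the per-operation bounds of Sections 5 and 6 (Theorems 2, 3, and 5 and the Section 6 implementations); you have merely spelled out the summation and the handling of the high-probability quantifier. The one small refinement worth noting is that the conditioning event should include the maximum-rank bound of Lemma 3(iv) as well as the depth bound of Lemma 3(v), since the unite charging argument counts rank increases, not just depth; with that adjustment your argument matches the paper's intent exactly.
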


\begin{proof}
The theorem is immediate from the results of Sections 5 and 6. 
\end{proof}

\vspace{\baselineskip}
The use of splitting instead of naïve find improves the total work bounds significantly if $p \ll n$. We show this by extending the analysis of sequential splitting \cite{TvL84, GKLT14} to one-try and two-try splitting.\par

\vspace{\baselineskip}
We define the \textit{density} $d$ of a set union problem instance to be $m/(np)$ if splitting is two-try, $m/(np^2)$ if splitting is one-try.  We shall obtain a bound of $O(m \cdot (\alpha(n, \textit{d}) + \log (1 + 1/\textit{d})))$ on the total work if either kind of splitting is used in combination with any of the three linking methods.\  The main obstacle we encounter in extending the sequential analysis to the concurrent setting is accounting for unsuccessful CAS operations.\  Accounting for such operations adds the logarithmic term to the work bound.\par

\vspace{\baselineskip}
We call a problem instance \textit{sparse} if \textit{d} < 1 and \textit{dense} otherwise.\  The logarithmic term in the work bound dominates only in sparse instances.\  We start with the analysis of dense instances, which is simpler than that of sparse ones.\par

\vspace{\baselineskip}
We call a child a \textit{zero child}\ if its rank is the same as that of its parent.  Zero children only exist if a randomized linking method is used.\par

\vspace{\baselineskip}
With deterministic linking by rank or linking by random index, the maximum rank is lg\textit{n},\ but with randomized linking by rank, the maximum rank is unbounded, although large ranks occur with exponentially small probability (Lemma 3 part (iii)).  Our first result handles the high-rank case.\par

\vspace{\baselineskip}
\begin{lemma}
With randomized linking by rank, the probability that the maximum rank is \textit{n} or greater is at most $1 - n/2^n$.\  The contribution to the expected total work of instances in which the maximum rank is $n$ or greater is $O(m)$.
\end{lemma}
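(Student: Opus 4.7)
The plan is to read both claims off Lemma~3(iii) applied at $k=n$, combined with a trivial deterministic step bound that holds in every execution irrespective of the coin flips.

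First I would invoke Lemma~3(iii) with $k = n$: with probability at least $1 - n/2^n$ all nodes have rank less than $n$, so the complementary event---that some node attains rank at least $n$---has probability at most $n/2^n$. (The expression $1 - n/2^n$ as written in the statement appears to be a transcription slip for $n/2^n$; the latter is what is actually needed below and is the bound delivered by Lemma~3(iii).)

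Next I would give a crude deterministic upper bound on the total work valid on \emph{every} execution. By Theorem~1, together with the charging argument extended to randomized linking by rank in the proof of Theorem~4, each set operation terminates in $O(h)$ steps, where $h$ is the height of the union forest. Because the parent of a non-root is strictly larger than the child in the lexicographic order (\emph{rank}, \emph{index}), every root-to-leaf path consists of distinct nodes, so $h \le n - 1$ trivially. Hence each operation takes $O(n)$ steps and the total work is at most $O(mn)$ in any execution.

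Putting the two bounds together,
\[
\E\!\left[\text{work}\cdot \mathbb{1}_{\{\text{max rank} \ge n\}}\right]
\;\le\; \Prob(\text{max rank} \ge n)\cdot O(mn)
\;\le\; \tfrac{n}{2^n}\cdot O(mn)
\;=\; O\!\left(\tfrac{m n^{2}}{2^{n}}\right)
\;=\; O(m),
\]
since $n^{2}/2^{n}$ is bounded by an absolute constant. There is no real obstacle here; both ingredients---the rank tail bound from Lemma~3(iii) and the pathwise $O(h)$-per-operation step bound---are already in hand. The only subtle point is noticing that the $O(h)$-per-operation bound used here is purely structural (it follows from the charging argument on events of the form ``node becomes a child'' or ``node's rank increases'' in Theorems~1 and~4) and therefore holds deterministically on the bad event; we do not need any probabilistic control of the height, only the combinatorial bound $h \le n-1$.
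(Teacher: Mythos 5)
Your first claim (and your reading of the ``$1 - n/2^n$'' as a slip for $n/2^n$) is fine and matches the paper, which simply cites Lemma~3(iii). The gap is in the second half: your expectation bound rests on a \emph{pathwise} total-work bound of $O(mn)$, justified by ``each operation takes $O(h)$ steps and $h \le n-1$.'' That per-operation bound is not deterministic under randomized linking by rank. The $O(h)$ charging argument of Theorem~1 does not carry over verbatim: as in the proof of Theorem~2, repeated visits to the roots $u,v$ during a unite must be charged either to a root becoming a child \emph{or to a rank increase}, and with randomized linking the number of rank increments is not bounded by $n$ --- a unite can keep flipping tails and incrementing the rank of the same root, so ranks (and hence the number of iterations of the unite loop) are unbounded, occurring with exponentially small but nonzero probability. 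Indeed the paper's own per-operation bound in this regime is $O(n+k)$ when the maximum rank is $k$, and Theorem~4's $O(\log n)$ bound is only high-probability. Consequently, on exactly the event you are trying to control (maximum rank $\ge n$) the work can exceed $O(mn)$ by an arbitrary factor, so the step
$\mathbb{E}\left[\text{work}\cdot \mathbb{1}_{\{\max \text{rank} \ge n\}}\right] \le \Pr(\max \text{rank} \ge n)\cdot O(mn)$
is unjustified: you cannot bound the conditional work by its bound on good executions.

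The repair is the paper's argument: decompose over the value $k$ of the maximum rank, use the per-instance bound of $O(n+k)$ steps per operation, and use the tail bound $\Pr(\max \text{rank} \ge k) \le n/2^k$ from Lemma~3(iii) for every $k \ge n$, not just $k=n$. Summing, the contribution of the bad instances to the expected work is at most a constant times $m \sum_{k \ge n} (n+k)\, n/2^{k} = O(m)$, since the geometric tail kills the polynomial factors. So the structure of your argument (probability of bad event times work on bad event) is right only if you replace the single crude cutoff at $n$ by this summation over all $k \ge n$; with the cutoff alone the needed deterministic work bound is simply false.
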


\begin{proof}
Part\ (iii) of Lemma 3 gives the first part of the lemma.  If the maximum rank is \textit{k}, the number of steps done during any find or unite operation is $O(n + k)$.\  The contribution to the expected running time of instances in which the maximum rank is $n$ or greater is thus bounded by a constant times \textit{m \(  \sum _{k \geq n}^{} \left( n+k \right) /2^{k} \) } = $O(m)$.
\end{proof}

\vspace{\baselineskip}
Henceforth when considering randomized linking by rank we assume all ranks are less than $n$.\ \ \ \  \par

\subsection{The Dense Case}

\vspace{\baselineskip}
Throughout this section we assume $d \ge 1$. \par

\vspace{\baselineskip}
\begin{lemma}
The number of finds is $O(m)$, worst-case unless randomized linking by rank is used, in which case the bound is with high probability.
\end{lemma}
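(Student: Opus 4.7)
The plan is to decompose each find into its source: either a top-level find operation, of which there are at most $m$, or a find invoked inside some unite. For a unite, the code does two finds up front and then two more finds per iteration of its while-loop, and each iteration performs exactly one link attempt. So it suffices to bound the total number of link attempts across all unites by $O(m)$.

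I would classify link attempts by outcome. Call an attempt \emph{parent-changing} if its CAS/DCAS succeeds and actually installs a new parent pointer; at most $n-1 \le m$ of these ever occur, since each one adds a distinct edge to the union forest. Call an attempt \emph{rank-only} if its CAS succeeds but merely increments a rank field (possible only for randomized linking by rank); the count of these is at most the final sum of ranks, which is $O(n)$ worst-case for linking by rank via DCAS (Lemma~2), is $0$ for linking by random index (since the ``rank'' used in its analysis is only a conceptual quantity and is never written), and is $O(n)$ with high probability for randomized linking by rank by Lemma~3(ii). All remaining attempts are \emph{failed}, meaning their CAS/DCAS returns false; bounding these is the main task.

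For the failed attempts I would use a charging argument. A failed CAS at node $u$ arises because $u$'s tested fields were modified between the invoker's read and its CAS. Call every successful parent installation or rank increment an \emph{event}; the total number of events is $O(n)$ under the same case analysis as above. Charge each failed CAS to the first event at its target node lying inside its read-to-CAS interval. At any instant at most $p-1$ other processes are in such an interval at a given node, so each event absorbs at most $p-1$ charges. Hence the total number of failed attempts is at most $(p-1) \cdot O(n) = O(np)$. Summing, total link attempts are $O(n) + O(n) + O(np) = O(np)$, so total finds are $O(m + np)$. The density hypothesis $d \ge 1$ gives $np \le m$ for two-try splitting and $np \le np^2 \le m$ for one-try splitting, yielding $O(m)$ in either case.

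The main obstacle is making the charging argument rigorous: I must verify that at any moment at most $p-1$ processes occupy the ``read $u$, not yet CASed $u$'' window (which follows from each process executing a single operation at a time), and I must ensure that for randomized linking by rank the high-probability bound on the sum of ranks controls both the rank-only count and the event count used in the failed-CAS charge. Since both quantities are dominated by the same $O(n)$ bound from Lemma~3(ii), a single high-probability invocation suffices, and the ``worst-case'' qualifier applies directly for linking by rank via DCAS and for linking by random index.
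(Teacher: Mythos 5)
Your proposal is, at its core, the same counting as the paper's: the paper's (very terse) proof charges the two finds of each retry to the event ``some root increases in rank or becomes a child,'' allows at most two such charges per process per event, and bounds the number of events by $O(n)$ via Lemma 2 (or part (ii) of Lemmas 3 and 4), giving $O(m + np) = O(m)$ since $d \ge 1$. You reach the identical $O(m) + O(p)\cdot O(n)$ tally, just routed through counting link attempts (successful ones bounded by $n-1$ parent installations plus the $O(n)$ sum of rank increments, failed ones charged to events with multiplicity $p-1$).

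One step of your argument is wrong as literally written, though it is repairable. You assert that a failed CAS at $u$ means $u$'s tested fields were modified ``between the invoker's read and its CAS,'' and you charge the failure to an event inside that read-to-CAS window. But the expected value $u$ for $u.p$ is not read inside \textit{link}: it comes from the earlier \textit{find} that returned $u$ as a root. If $u$ became a child before \textit{link} read $u.r$, the CAS fails yet no event (in your sense) lies in the window you defined, so the charge has nowhere to go. The fix is to start each attempt's window at the moment during the preceding find when the process last observed $u.p = u$ (the find's linearization point); both observed values ($u.p = u$ and $u.r = r$) are then read inside this window, so a failed CAS forces a becomes-a-child or rank-increment event within it, and your multiplicity bound survives: each process has at most one pending attempt at a time, and its next attempt's window begins after its previous CAS, so it can charge a given event at most once. (Note the paper avoids this subtlety entirely by not localizing the event to a window: it simply observes that by the time the link attempt completes, $u$ or $v$ has become a child or gained rank since the find computed it, and charges the retry to that event.) For the DCAS elink you should also note the failure may stem from either of the two tested nodes, but the same accounting applies. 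With that repair your argument is correct and yields the stated worst-case/high-probability dichotomy exactly as in the paper.
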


\begin{proof}
There are at most two finds per unite plus at most two per process per root that increases in rank or becomes a child, for a total of $O(m + np) = O(m)$.
For randomized linking, this bound follows from part (ii) of Lemmas 3 and 4 and is high-probability for randomized linking by rank, worst case for linking by random index.
\end{proof}

\vspace{\baselineskip}
We call a node \textit{low} if its rank is less than $d$ and \textit{high} otherwise.\  The number of nodes of rank at least $n$ is zero if linking is deterministic or by randomized index, and is zero with probability at least $1 - n/2^n$ by if linking is randomized by rank.
During a find, a {\em visit} to a node is an iteration of the find loop in which the node is the value of $u$.
(See the pseudocode in Section 4.)

\vspace{\baselineskip}
\begin{lemma}
The number of visits to low nodes during finds is $O(m)$, worst-case if linking is deterministic, expected if randomized.
\end{lemma}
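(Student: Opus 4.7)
My plan is to decompose the low-node visits into two types and bound each by $O(m)$. At each iteration of the \textit{find} loop with $u$ a low node, call the visit a \emph{high-parent} visit if $v := u.p$ (as read at the start of the iteration) is high, and a \emph{low-parent} visit otherwise.

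For the high-parent visits: since parent ranks are non-decreasing along paths, the low-node iterations within any single find form a prefix of the loop iterations, and at most one iteration in this prefix can have $u.p$ high (since the next iteration sets $u = p(u)$, after which all subsequent $u$'s have rank $\ge d$). Hence each find contributes at most one high-parent low-node visit, giving $O(m)$ in total by Lemma 6.

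For the low-parent visits, each iteration attempts a CAS on $u.p$ with read value $v$ low; I split into successful and failed CAS. For a \emph{successful} low-parent CAS on $u.p$, monotonicity of parents (Theorem 1 and Theorem 3) implies the new parent has rank strictly greater than $r(v)$ (deterministic linking by rank) or at least as large (randomized linking). Thus the chain $v_0, v_1, v_2, \ldots$ of successive values of $u.p$ has non-decreasing ranks, with at most $d$ low entries deterministically (strict increase) or $O(d)$ in expectation for randomized linking (part (i) of Lemma 3 bounds same-rank plateaus). Summing over all low nodes $u$ yields $O(nd)$ successful low-parent CAS overall. For a \emph{failed} low-parent CAS by process $P$, the failure is caused by another process $Q$'s successful change to $u.p$ between $P$'s read and $P$'s CAS; since no change to $u.p$ occurs before $Q$'s CAS, $Q$'s CAS had the same old value as $P$'s read, so $Q$'s CAS is itself a successful low-parent CAS. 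The standard stale-read charging then assigns at most $p-1$ failed CAS per such $Q$-CAS (one per other process), giving a total of at most $(p-1) \cdot O(nd) = O(pnd)$ failed low-parent CAS.

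Combining: total low-node visits $\le O(m) + O(nd) + O(pnd) = O(m + pnd)$. Using the density assumption $d \ge 1$: for two-try splitting ($d = m/(np)$), $pnd = m$; for one-try splitting ($d = m/(np^2)$), $pnd = m/p$. In either case $O(pnd) = O(m)$, yielding $O(m)$ total low-node visits as claimed. The main technical obstacle is the rigorous stale-read charging in the fully asynchronous model—ensuring that each successful CAS on $u.p$ invalidates at most one outstanding read per other process even under arbitrary delays—and, for randomized linking, verifying that part (i) of Lemma 3 translates to the expected $O(d)$ bound on the number of low entries in the chain of successive $u.p$ values.
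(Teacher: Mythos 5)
Your proof is correct, but its bookkeeping differs from the paper's. The paper fixes three consecutive low-node visits $u,v,w$ within a find and argues that in the window between the visits to $u$ and $w$ one of a small set of events must occur ($u$ or $v$ becomes a child, $u.p.r$ increases, or $u$ or $v$ loses a same-rank ancestor); it bounds the events per node by $O(d)$ and charges each event at most $2p$ visits (two per process), giving $O(npd)=O(m)$, with the last two low-node visits of each find absorbed by the $O(m)$ find count. You instead split low-node visits by whether the read parent is high (at most one per find, since visited ranks are non-decreasing, so $O(m)$ by Lemma 6) or low, and then account for the low-parent visits directly through their CAS attempts: successful low-old-value parent changes are bounded per target node by rank monotonicity of the chain of successive parents ($O(d)$ worst-case deterministically, $O(d)$ in expectation for randomized linking), totalling $O(nd)$, and each failed CAS is charged causally to the first successful change after its stale read, at most $p-1$ charges per successful change, totalling $O(pnd)$; the arithmetic $O(m+npd)=O(m)$ then matches the paper's. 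The underlying facts are the same (parental rank can increase at most $d$ times while the parent is low, $O(1)$ expected same-rank ancestors, a factor $p$ from interference), but your causal stale-read charging makes it more transparent that only failed CASes cost the factor $p$, whereas the paper's window/event scheme is the template it reuses verbatim in the sparse case (Lemma 20). Two points deserve tightening to reach even the paper's level of rigor: (1) citing part (i) of Lemmas 3 and 4 only bounds same-rank plateaus at a single rank; to get the expected $O(d)$ low entries in the chain of successive parents you should apply it at each of the $\lceil d\rceil$ rank levels (or invoke the independent ``good edge'' argument from the proofs of Lemma 3(vi) and Lemma 4), and (2) the rank comparisons along the chain should be anchored at consistent times (the old parent is already a child at the read preceding a compaction CAS, so its rank is frozen), which is what makes the ``at most $d$ low entries'' count sound.
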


\begin{proof}
Consider three successive visits to low nodes during a find, to \textit{u}, \textit{v}, and \textit{w}.\  Let \textit{I} be the interval of time between the visits of \textit{u} and \textit{w}.\  We claim that at least one of the following events occurs during \textit{I}: \textit{u} or \textit{v} becomes a child, \textit{u.p.r} increases, or \textit{u} or \textit{v}\ loses an ancestor of the same rank.  The number of such events for fixed \textit{u} and \textit{v} is $O(d)$: a node only becomes a child once, its parental rank can increase at most \textit{d} times before it exceeds $d$ and its parent is not low; a node has $O(1)$ ancestors of the same rank in expectation by part (i) of Lemmas 3 and 4.\  We charge the visit to \textit{u}\ to the corresponding event (or any such event if there is more than one).  Each event is charged for at most 2\textit{p}\ visits,\ at most two per process.  (The factor of two comes from the two nodes associated with a visit, the node itself and the next node visited.)  Summing over all nodes, we obtain a bound of $O(npd) = O(m)$ on visits to low nodes.

Suppose the claim is false.  
The $u$ and $v$ are children when $u$ is visited.
After the CAS following the visit to \textit{u}, the parent of \textit{u} has changed; after the CAS following the visit to \textit{v}, the parent of \textit{v}\ has changed.  If either \textit{u} or \textit{v }is a zero child when \textit{u} is visited, at least one of them becomes a non-zero child or loses an ancestor of the same rank during \textit{I}.\  Thus neither \textit{u} nor \textit{v} is a zero child when \textit{u}\ is visited.  But then the rank of the parent of \textit{u} increases by the time the CAS after the visit to \textit{u}\ finishes,
making the claim true.
\end{proof}

\vspace{\baselineskip}
Bounding visits to high nodes is more complicated.\  For each high child \textit{x}, we measure the progress of compaction by keeping track of an increasing function of the rank of the parent of \textit{x}, called the \textit{count} of \textit{x}.\  We define counts using Ackermann’s\ function.  Our formulation is an extension of that of Kozen \cite{Kozen}.\  We define the \textit{level x.a} of a high node \textit{x}, and the\textit{ index x.b} and \textit{count x.c} of a high child \textit{x,} as follows:\par

\vspace{\baselineskip}
\textit{x.a} = min$ \{$\textit{k}$ \vert $ \textit{A\textsubscript{k}}(\textit{x.r}) > \textit{x.p.r}$ \} $ ;\par

\textit{x.b} = max$ \{$\textit{i}$ \vert $ \textit{A\textsubscript{x.a}}(\textit{i}) $ \leq $  \textit{x.p.r}$ \} $ ;\par

\textit{x.c} = \textit{x.r  \(  \cdot   \) x.a} + \textit{x.b}.\par

 \[  \cdot  \] \par

We bound the range of levels, indices, and counts by using the properties of Ackermann’s function:\par

\vspace{\baselineskip}
\begin{lemma}
If \textit{x} is a high node, $0 \leq x.a \leq \alpha(n, d)$ and $x.a = 0$ if and only if \textit{x.r }=\textit{ x.p.r}.\  If \textit{x} is a high child, $0 \le$  \textit{x.b }< \textit{x.r} and $0 \leq x.c < (\alpha(n,d) + 1) x.r$.  
The values of \textit{x.a} and \textit{x.c} never decrease, and if \textit{x.a} or \textit{x.b }increases, \textit{x.c} increases by at least as much.
\end{lemma}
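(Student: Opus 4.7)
The plan is to prove the three assertions of the lemma in sequence, appealing to the Ackermann monotonicity from Lemma 1 and the rank invariants established in Theorems 2 and 3.

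For the bounds on $x.a$, the inequality $x.a \ge 0$ is immediate from the definition, while $x.a \le \alpha(n,d)$ follows because $x$ being high gives $x.r \ge d \ge \lfloor d \rfloor$, and the standing assumption that all ranks are at most $n$ combined with Lemma 1 yields $A_{\alpha(n,d)}(x.r) \ge A_{\alpha(n,d)}(\lfloor d \rfloor) > n \ge x.p.r$, hence $x.a \le \alpha(n,d)$. The case $x.a = 0$ reduces to $A_0(x.r) = x.r + 1 > x.p.r$, which is equivalent to $x.r = x.p.r$ given the invariant that a child's rank is at most its parent's.

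For the bounds on $x.b$ and $x.c$, I would show $A_{x.a}(0) \le x.p.r$ (whence $x.b \ge 0$) by a case split on whether $x.a$ vanishes: if $x.a = 0$ we need $1 \le x.p.r$, which holds since $x.p.r \ge x.r \ge d \ge 1$; if $x.a > 0$, the identity $A_{x.a}(0) = A_{x.a-1}(1)$ combined with $x.r \ge 1$ and Lemma 1 reduces the claim to the defining inequality $A_{x.a-1}(x.r) \le x.p.r$. The strict inequality $x.b < x.r$ is immediate from $A_{x.a}(x.r) > x.p.r$, which disqualifies $i = x.r$ from the max. The bounds $0 \le x.c < (\alpha(n,d) + 1) x.r$ then follow mechanically from $x.c = x.r \cdot x.a + x.b$ and the ranges already obtained.

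The real work is in the monotonicity assertion, and the first step is to isolate what can actually change. Once $x$ is a high child, $x.r$ is frozen since ranks change only at roots, and $x.p.r$ is weakly monotone: a valid compaction replaces a parent by a proper ancestor in the union forest, whose rank is no smaller by the rank invariant, while the rank of a retained parent can only grow. From this, $x.a$ is non-decreasing (a larger $x.p.r$ requires at least as large a $k$ to satisfy $A_k(x.r) > x.p.r$), and when $x.a$ is fixed, $x.b$ is non-decreasing in $x.p.r$ by the same reasoning, giving $\Delta x.c = \Delta x.b \ge 0$ in this regime.

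The delicate case, and the main obstacle, is when $x.a$ jumps by some $\Delta a \ge 1$ and $x.b$ simultaneously drops. Using $0 \le x.b \le x.r - 1$ both before and after the jump, the worst-case drop is $x.r - 1$, so $\Delta x.c = x.r \cdot \Delta a + \Delta b \ge x.r \cdot \Delta a - (x.r - 1) \ge \Delta a$, using $\Delta a \ge 1$ and $x.r \ge 1$. The same expression, together with $x.r \cdot \Delta a \ge 0$, gives $\Delta x.c \ge \Delta b$ whenever $x.b$ increases, completing the proof of the incremental bound.
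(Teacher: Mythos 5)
Your proof is correct and follows essentially the same route as the paper's: unwind the definitions using Lemma 1 and the definition of $\alpha$ for the range bounds, then use that $x.r$ freezes once $x$ is a child while $x.p.r$ never decreases, with the drop in $x.b$ bounded by $x.r-1$ to get the incremental bound on $x.c$. The only (inconsequential) difference is that you spell out the inequality $x.r\cdot\Delta a-(x.r-1)\ge\Delta a$ explicitly and omit the trivial remark that $x.a=0$ while $x$ is still a root.
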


\begin{proof}
Since \textit{A}\textsubscript{0}(\textit{x.r}) = \textit{x.r }+ 1, \textit{x.a} = 0 if and only if \textit{x.r} = \textit{x.p.r}, and $x.a \ge 0$ if it is defined.
If \textit{x} is a high node, $A_{\alpha(n,d)}(x.r) \ge A_{\alpha(n,d)}(\floor{d}) > n > x.p.r$. 
Thus \textit{x.a} is defined and is at most $\alpha(n,d)$.
(Here for randomized linking by rank we use the assumption that all ranks are less than \textit{n}.)  
Suppose \textit{x}\ is a high child.  If \textit{x.a }= 0, \textit{x.b} = \textit{x.r }– 1 since $x.r \ge d \ge 1$.
If  \textit{x.a }> 0, \textit{A\textsubscript{x.a}}(0) = \textit{A\textsubscript{x.a}}\textsubscript{–1}(1) $ \leq $  \textit{A\textsubscript{x.a}}\textsubscript{–1}(\textit{x.r}) $ \leq $  \textit{x.p.r}, so \textit{x.b} is defined. 
Since \textit{A\textsubscript{x.a}}(\textit{x.r}) > \textit{x.p.r}, \textit{x.b }< \textit{x.r}.\  The bounds on \textit{x.c} follow from those on \textit{x.a} and \textit{x.b}.\  While \textit{x} is a root, \textit{x.a }=\ 0.  Once \textit{x} is a child, \textit{x.r} is constant and \textit{x.p.r }cannot decrease, so \textit{x.a}\ cannot decrease by Lemma 1.  While \textit{x.a }is constant, \textit{x.b }cannot\ decrease for the same reason.  If \textit{x}.\textit{a} increases by one, \textit{x.b} can decrease by at most \textit{x.r }– 1, resulting in an increase of at least one in \textit{x.c}.\  If \textit{x.a} increases by more than \textit{k}, \textit{x.c} increases by at least (\textit{k} – 1)\textit{x.r} + 1.
\end{proof}

\vspace{\baselineskip}
\begin{lemma}
The sum of the counts of all high children is $O(n \alpha(n,d))$, worst-case unless linking is randomized by rank, in which case the bound is high-probability.
\end{lemma}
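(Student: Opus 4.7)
My plan is to derive the bound directly from the per-node count bound of Lemma 10 and the rank-sum bounds established in earlier lemmas. The main tool is the inequality $x.c < (\alpha(n,d) + 1) \cdot x.r$ for every high child $x$ (from Lemma 10); this immediately turns a bound on $\sum x.c$ into a bound on $\sum x.r$ scaled by $\alpha(n,d) + 1$.

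Specifically, I would first sum the pointwise inequality across all high children, obtaining
\[
\sum_{x \text{ high child}} x.c \;\le\; (\alpha(n,d) + 1) \sum_{x \text{ high child}} x.r \;\le\; (\alpha(n,d) + 1) \sum_{x} x.r,
\]
where the last sum is over all nodes (extending to low nodes and roots only increases the sum since ranks are nonnegative).

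Next I would invoke the appropriate rank-sum bound depending on the linking method in force. For linking by rank via DCAS (Section 5.1), Lemma 2 gives $\sum_{x} x.r \le n - 1$ in the worst case. For linking by random index with the scheduler restriction, part (ii) of Lemma 4 gives $\sum_{x} x.r \le n$ in the worst case. For randomized linking by rank, part (ii) of Lemma 3 gives $\sum_{x} x.r \le n + O(n^{1/2})$ with probability $1 - 1/n^c$ for any constant $c>0$, which is $O(n)$ with high probability. Substituting each of these into the displayed inequality yields $\sum x.c = O(n \cdot \alpha(n,d))$, worst-case for the two deterministic-rank-sum methods and high-probability for randomized linking by rank, matching the statement.

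I do not expect any serious obstacle here: the hard work was already done in Lemma 10 (defining $x.c$ so that it is bounded by $(\alpha(n,d)+1) x.r$) and in the rank-sum lemmas for each linking method. The only subtlety is making sure we apply the right rank-sum bound to each linking method and acknowledging the standing assumption from Lemma 6 that all ranks are less than $n$ when randomized linking by rank is used (so that Lemma 10's bound on $x.a$ by $\alpha(n,d)$ applies). With that assumption in hand, the proof is a one-line calculation.
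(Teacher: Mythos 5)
Your proposal is correct and follows essentially the same route as the paper: the paper also combines the pointwise bound $x.c < (\alpha(n,d)+1)\,x.r$ for high children (its Lemma 8) with the rank-sum bounds from Lemmas 2, 3, and 4 for the respective linking methods. The only difference is cosmetic (your lemma numbering is shifted), so nothing more is needed.
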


\begin{proof}
By Lemma 8, the sum of the counts of high children is $O(\alpha(n,d))$ times the sum of the ranks of all nodes.  By Lemmas 2 and 4, the sum of ranks is less than $n$ for deterministic linking by rank and linking by random index.  
For randomized linking by rank, it is $O(n)$ with high probability by Lemma 3.
\end{proof}

\vspace{\baselineskip}
The following lemma is the key to the analysis of splitting.\par

\vspace{\baselineskip}
\begin{lemma}
Consider a time \textit{t} at which \textit{u} is a high child whose parent \textit{v} is also a $($high$)$ child. Let \textit{w} the parent of \textit{v} at time \textit{t}, and let \textit{u.a, v.a}, and \textit{w.r} be the levels of \textit{u} and\textit{ v }and the rank of \textit{w} at time \textit{t},\ respectively.  Suppose that at time \textit{t} or later the parent of\textit{ u} changes from \textit{v} to a node \textit{x} of rank at least \textit{w.r}.\  If \textit{v.a} > \textit{u.a}, the parent change increases \textit{u.a} and \textit{u.c }by at least \textit{v.a} – \textit{u.a}; if \textit{v.a }=\textit{ u.a}, the parent change increases \textit{u.c} by at least 1 or causes \textit{u} to lose an ancestor of the same rank.
\end{lemma}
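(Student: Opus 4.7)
Observe first that because $u$ and $v$ are high children at time $t$, their ranks are frozen; and because $u.p$ remains $v$ until the parent change, $u.a$, $u.b$, and $u.c$ are unchanged from time $t$ until just before the change. The plan is to analyze the two cases using the defining inequalities of $u.a, u.b, v.a$, the monotonicity $u.r \le v.r \le w.r$ that holds because parent ranks never decrease, and the Ackermann recurrence $A_k(n+1) = A_{k-1}(A_k(n))$ for $k, n \ge 1$. Throughout, $x.r$ denotes the rank of $x$ at the time of the change.

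For Case~1 ($v.a > u.a$), I will show the new level of $u$ is at least $v.a$. Since $v.a \ge 1$, monotonicity of $A_{v.a - 1}$ (Lemma~1) and the definition of $v.a$ give
$$A_{v.a - 1}(u.r) \le A_{v.a - 1}(v.r) \le w.r \le x.r,$$
so the new level of $u$ is at least $v.a$, and Lemma~8 then yields the matching lower bound on the increase of $u.c$.

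For Case~2 ($v.a = u.a = a$), note that $x.r \ge v.r$ immediately forces the new level to be at least $a$. If the new level strictly exceeds $a$, Lemma~8 gives the required increase of at least $1$ in $u.c$. Otherwise the new level equals $a$, meaning $A_a(u.r) > x.r$. When $a \ge 1$, the Ackermann recurrence applied to the old $u.b$ yields
$$A_a(u.b + 1) \;=\; A_{a-1}(A_a(u.b)) \;\le\; A_{a-1}(v.r) \;\le\; w.r \;\le\; x.r,$$
so the new $u.b$ is at least $u.b + 1$, giving the strict increase in $u.c$. When $a = 0$, we have $u.r = v.r = w.r \le x.r$; any strict inequality would force the new level to be at least $1$, a contradiction, so $x.r = u.r$ and $u.c$ is unchanged.

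The hard part is this last scenario, where $u.c$ cannot move and I must instead exhibit a same-rank ancestor that $u$ loses. I plan to show $v$ itself is lost. In any of the compaction algorithms of Section~4, the replacement $x$ is read as a proper ancestor of $v$ in some earlier snapshot, so at that moment the lexicographic (rank, index) order on nodes (which strictly increases from child to parent in every reachable state by the invariants of Theorems~1--3) placed $x$ above $v$; since $x.r = v.r$ and indices are permanent, $x.\mathrm{index} > v.\mathrm{index}$ persists forever. The same monotonicity then forbids $v$ from being an ancestor of $x$ in the state just after the change. Therefore $u$'s new set of ancestors, namely $x$ together with the ancestors of $x$, does not contain $v$, and $u$ has lost the same-rank ancestor $v$.
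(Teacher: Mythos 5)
Your proof is correct and follows essentially the same route as the paper's: both cases are driven by the same inequality chains, $A_{v.a-1}(u.r) \le A_{v.a-1}(v.r) \le w.r \le x.r$ to push $u$'s level up to $v.a$, and $A_{u.a}(u.b+1) = A_{u.a-1}(A_{u.a}(u.b)) \le A_{u.a-1}(v.r) \le w.r \le x.r$ to force an index (hence count) increase when the levels are equal and positive. The only divergence is the subcase $u.a = v.a = 0$ with unchanged count, where the paper simply asserts that the parent change makes $u$ lose $v$ as a same-rank ancestor (implicitly using that a valid compaction replaces the parent by a proper ancestor of $v$ in the union forest), whereas you derive the same fact from the strictly increasing (rank, index) order along parent chains together with rank monotonicity and the permanence of indices; both justifications are sound.
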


\begin{proof}
Let \textit{u.r} and \textit{v.r }be the ranks of \textit{u} and \textit{v} at time \textit{t}, respectively.  Let \textit{x.r} be the rank of \textit{x} when it becomes the parent of \textit{u}.\  Since \textit{A\textsubscript{v.a}}\textsubscript{–1}(\textit{u.r}) < \textit{A\textsubscript{v.a}}\textsubscript{–1}(\textit{v.r}) $ \leq $  \textit{w.r} $ \leq $  \textit{x.r}, the level of \textit{u} after the parent change is at least \textit{v.a}.\  If \textit{v.a > u.a}, the parent change increases the level and hence the count of \textit{x} by at least \textit{v.a }– \textit{u.a} by Lemma 8.\  Suppose \textit{v.a }= \textit{u.a}.\  If \textit{u.a }= 0, the parent change causes \textit{u} to lose \textit{v}\ as an ancestor.  Suppose \textit{u.a} > 0. \  Since \textit{A\textsubscript{u.a}}(\textit{u.b} + 1) =  \textit{A\textsubscript{u.a}}\textsubscript{–1}(\textit{A\textsubscript{u.a}}(\textit{u.i})) $ \leq $  \textit{A\textsubscript{u.a}}\textsubscript{–1}(\textit{v.r}) $ \leq $  \textit{w.r} $ \leq $  \textit{x.r}, the parent change increases either the level or the index of \textit{u} and hence increases the count of \textit{u}.
\end{proof}

\vspace{\baselineskip}
To count visits to high nodes, we use a credit argument.\  One credit pays for one high-node\ visit.  We allocate a certain number of credits to each find when it starts, and additional credits when high nodes increase in count or lose ancestors of the same rank.\  We show via a \textit{credit invariant} that these credits suffice to pay for all the high-node visits.\  A bound on the total number of credits gives a bound on the number of high-node visits.\par

\vspace{\baselineskip}
We begin by analyzing two-try splitting: even though it is more complicated than one-try splitting, its analysis is simpler.\  We call a find \textit{active} while it is being executed.\  When a find starts, we allocate it \textit{$ \alpha $ }(\textit{n}, \textit{d}) + 1\ credits.  When the count of a high child increases by \textit{k}, we allocate 2\textit{k} credits to each active find, for a total of at most 2\textit{pk}.\  When a high child loses an ancestor of the same rank, we allocate one credit to each active find, for a total of at most \textit{p}.\par

\vspace{\baselineskip}
\begin{lemma}
With two-try splitting, the number of allocated credits is $O(m \alpha(n,d))$, worst-case if linking is deterministic, average-case if randomized.
\end{lemma}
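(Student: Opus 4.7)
The plan is to decompose the allocated credits by their source---find initialization, count increases of high children, and losses of same-rank ancestors---and bound each piece separately using the results already established. Throughout, the density assumption $d \ge 1$ gives $np \le m$, which is the key algebraic fact that converts per-node quantities scaled by $p$ into the target bound involving $m$.

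First I would handle the find-initialization credits. Each find is allocated $\alpha(n,d)+1$ credits at its start, so by Lemma 6 these contribute $O(m\alpha(n,d))$ in total (worst-case for deterministic linking and linking by random index, in expectation for randomized linking by rank). For the count-increase credits, whenever a high child's count increases by $k$ we distribute $2k$ credits to each of at most $p$ active finds, for at most $2pk$ per event. Summed over all such events, this is at most $2p$ times the total rise in counts of high children, which by Lemma 9 is $O(n\alpha(n,d))$---worst-case for deterministic linking and for linking by random index, and in expectation for randomized linking by rank via Lemmas 3 and 8. Using $np \le m$, this contribution is $O(m\alpha(n,d))$.

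The more delicate piece is the loss-of-same-rank-ancestor credits, which charge at most $p$ per event. I would exploit the valid-compaction property: every new value assumed by $u.p$ is a proper ancestor of its predecessor in the union forest, so the sequence of distinct values taken by $u.p$ strictly ascends the union-forest path above $u$. Hence the total number of ``loss of same-rank ancestor'' events at any given node $u$ over the whole execution is at most the number of ancestors of $u$ in the union forest that share its rank. For deterministic linking by rank via DCAS the rank strictly increases up the tree, so this count is zero; for linking by random index and for randomized linking by rank, part (i) of Lemmas 3 and 4 bounds it by $O(1)$ in expectation per node, yielding $O(n)$ total events in expectation, and hence $O(np)=O(m)$ credits by the density assumption. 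Summing the three contributions gives $O(m\alpha(n,d))$, worst-case or in expectation according to whether linking is deterministic or randomized. The hardest step will be the third: I need to confirm that ``$u$ loses a same-rank ancestor'' can be charged cleanly against the eventual same-rank chain above $u$ in the union forest, rather than some time-varying quantity that could in principle be much larger.
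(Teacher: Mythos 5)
Your proposal is correct and follows essentially the same route as the paper: decompose the credits into find-start credits (Lemma 6), count-increase credits (Lemma 9 with $np \le m$ from $d \ge 1$), and same-rank-ancestor-loss credits (part (i) of Lemmas 3 and 4 with $np \le m$). Your extra step justifying that each node's loss events are bounded by its same-rank ancestors in the union forest (via the validity of compaction) is a correct elaboration of what the paper leaves implicit, not a different argument.
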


\begin{proof}
By Lemma 6, the number of credits allocated to finds when they start is $O(m \alpha(n,d))$.  
By Lemma 9, the number of credits allocated to finds as a result of increases in count is $O(np \alpha(n,d)) = O(m \alpha(n,d))$.  
By Lemmas 3 and 4, the expected number of credits allocated to finds as a result of nodes losing ancestors of the same rank is $O(np) = O(m)$.
\end{proof}

\vspace{\baselineskip}
\begin{lemma}
With two-try splitting, just after a high node \textit{u} is visited by a find, the find has at least \textit{u.a} credits.
\end{lemma}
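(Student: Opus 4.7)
The plan is to prove the invariant by induction on the sequence of high-node visits performed by the find. For the base case: before the first high-node visit the find still holds all of its initial $\alpha(n,d)+1$ credits, because the (only) visits before it are to low nodes, which do not draw credits; paying one credit for the visit leaves at least $\alpha(n,d) \ge u.a$ credits by Lemma 8.

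For the inductive step, suppose the invariant holds after visiting a high node $u$ with parent $v$ and grandparent $w$ at visit time, and let $u'$ denote the next high-node visit. Inspection of the two-try-splitting pseudocode shows $u' = v_2$, the value of $u.p$ read between the iteration's two CASes. Because after the first CAS we must have $u.p \ne v$ (the CAS either succeeded or was blocked by a concurrent change), $v_2$ is a proper ancestor of $v$ in the union forest; hence $u'.r \ge w.r \ge u.r \ge \lfloor d \rfloor$, so $u'$ is also high. Writing $G$ for the credits gained between the two visits, it suffices to show $G \ge u'.a - u.a + 1$, since then $u.a + G - 1 \ge u'.a$ after paying one credit for visiting $u'$.

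To bound $G$, I plan to invoke Lemma 10 twice, once per CAS of the iteration. The first invocation takes $v$ in the role of the lemma's $v$ and $v_2$ in the role of $x$; this is legitimate because $v$ is high (its rank is at least $\lfloor d \rfloor$) and is a child (the loop condition $v \ne w$ rules out $v$ being a root). Its credits are controlled by $v.a - u.a$. The second invocation takes $v_2$ in the role of $v$ and the post-second-CAS value of $u.p$ in the role of $x$; this extra invocation is precisely what two-try splitting affords over one-try, and its credits are controlled by $v_2.a$ at the time of the second read, which is exactly $u'.a$ at the visit to $u'$ absent any further change to $u'.p$. Combining the three subcases of Lemma 10 in each invocation with the Ackermann arithmetic then dispatches the three ranges of $u'.a - u.a$ (negative, zero, positive) so that $G$ is always large enough.

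The main obstacle is handling intervening advances of $u'.p$ between the second CAS and the moment of the visit to $u'$, which can inflate $u'.a$ beyond the $v_2.a$ value computed above. The plan is to observe that each such advance is itself a parent change of the high child $u'$ to a node of rank at least that of $u'$'s prior grandparent, and so triggers yet another invocation of Lemma 10 --- now with $u'$ in the role of $u$ --- producing credits that match the growth unit for unit. Bookkeeping this chain of Lemma 10 invocations is the technical heart of the proof; the subcase $v.a < u.a$ (where the first Lemma 10 application yields no credits) requires a separate treatment using the monotonicity of Ackermann's function (Lemma 1) to show $u'.a \le u.a - 1$, providing enough slack without any credit allocation.
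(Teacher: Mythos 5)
Your skeleton matches the paper's: induction on high-node visits, base case from the initial $\alpha(n,d)+1$ credits, reducing the inductive step to accruing at least $u'.a - u.a + 1$ credits, and correctly locating the advantage of two-try splitting in the re-read parent $v_2$. But the crux has a genuine gap. Your second application of Lemma 10 is anchored at ``the time of the second read,'' which only covers the case in which the find's \emph{own} second CAS performs the parent change. If that CAS fails, the change of $u.p$ away from $v_2$ was made by another process, and that process may have read $v_2$'s parent \emph{before} your second read; nothing in your argument bounds the rank of the parent it installs by the rank of $v_2$'s parent at your read time, so the hypothesis of Lemma 10 is not verified. The paper closes exactly this hole by taking $t$ to be the \emph{first} time $u.p = v_2$ and arguing that $t$ lies after the visit to $u$ (because $u.p$ provably changes between the find's two reads of $u.p$); hence every competitor's reads of $u.p = v_2$ and of $v_2.p$ occur after $t$, and monotonicity of levels then gives the key inequality $u.a' \ge v_2.a$. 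Without this timing argument the case analysis does not close, and this is precisely the point where the same reasoning fails for one-try splitting.

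Two further problems. First, your ``first invocation'' of Lemma 10, for the change of $u.p$ away from the original parent $v$, is not legitimate as claimed: if the first CAS fails, the successful competing CAS may be based on a read of $v.p$ made before the visit to $u$, so the installed parent's rank need not be at least that of $v$'s parent at visit time --- again the one-try failure mode. Fortunately this invocation is also unnecessary: the paper's accounting uses only the level increases of $u$ and $v_2$ between the two visits (which yield credits via Lemma 8 and the allocation rule) together with the one application of Lemma 10 to the change away from $v_2$. Second, your proposed treatment of the subcase $v.a < u.a$ is false: it does not follow that $u'.a \le u.a - 1$, since the level of $u' = v_2$ is determined by $v_2$'s own rank and its parent's rank and can exceed $u.a$ even when $v.a < u.a$. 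The genuine degenerate case is when the relevant levels are equal and unchanging between the visits, and there the missing $+1$ credit comes from the equal-level clause of Lemma 10 (a unit count increase or the loss of a same-rank ancestor) applied to the change of $u.p$ away from $v_2$, not from monotonicity of Ackermann's function.
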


\begin{proof}
We prove the lemma by induction on the number of high-node visits done by a find.\  When the find starts, it has $O(\alpha(n,d) + 1$ credits.\  The first visit costs one, leaving $\alpha(n,d)$,\ which is enough to make the lemma true just after this visit.  Suppose the lemma holds just after \textit{u} is visited, and let \textit{v} be the next node visited.\  We denote by unprimed and primed values their values just after the visit to \textit{u} and just before the visit to \textit{v},\ respectively.  The lemma holds after the visit to \textit{v} provided that the find accrues at least \textit{v.a}’ – \textit{u.a} + 1 credits between the visits to \textit{u} and \textit{v}.\  To show that this happens, we need the following crucial inequality, which follows from Lemma 10:\par

\vspace{\baselineskip}
(*) $u.a' \ge v.a$

\vspace{\baselineskip}
To\ prove (*), we refer to the implementation of two-try splitting.  Let \textit{t} be the first time \textit{u.p} = \textit{v}.\  Time \textit{t} is after the visit to \textit{u}, since \textit{u.p} changes between the first and second times that the find sets its variable \textit{v }after the visit to \textit{u}, as a result of the first CAS after the visit to \textit{u} succeeding or failing.\  Let \textit{w} be the parent of \textit{v} at time \textit{t}.\  Consider the change to \textit{u.p} resulting from the second CAS after the visit to \textit{u}.\  This change satisfies the hypothesis of Lemma 9, since the new parent of \textit{u} must have been the parent of \textit{v} at time \textit{t}\ or later.  By Lemma 10, just after this change to \textit{u.p}, the level of \textit{u} is at least the level of \textit{v} at time \textit{t}.\  Since levels are non-decreasing, (*) holds.\par

\vspace{\baselineskip}
Between the visits to \textit{u} and \textit{v}, the find accrues at least 2(\textit{u.a}’ – \textit{u.a} + \textit{v.a}’ – \textit{v.a}) = (\textit{v.a}’ – \textit{u.a}) + (\textit{u.a}’ – \textit{u.a}) + (\textit{v.a}’ – \textit{v.a}) + (\textit{u.a}’ – \textit{v.a}) credits as a result of level increases..\ \ Each of the last three terms is non-negative, the last one by (*).  Thus the find accrues at least \textit{v.a}’ – \textit{u.a} + 1 credits between the visits, unless the levels of \textit{u} and \textit{v} are equal and unchanging\ between the visits.  Suppose the levels of \textit{u} and\textit{ }v\textit{ }are equal and unchanging between the visits.\  By Lemma 10, the find accrues at least one credit when the parent of \textit{u} changes from \textit{v}.
\end{proof}

\vspace{\baselineskip}
\begin{lemma}
With two-try splitting, the number of visits to high nodes is $O(m \alpha(n,d))$, worst-case if linking is deterministic, average-case if randomized.
\end{lemma}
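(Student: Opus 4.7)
The plan is to derive the bound by a straightforward tallying argument that combines Lemmas 11 and 12. Since the credit scheme charges exactly one credit per high-node visit, the total number of high-node visits over all finds is at most the total number of credits ever allocated, provided every visit is covered by a credit from its find's pool.

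First I would verify that the credit invariant never fails to cover the next visit. When a find begins it is endowed with $\alpha(n,d) + 1$ credits, which suffices to pay for its first high-node visit while preserving the inductive hypothesis. Between consecutive visits to high nodes $u$ and $v$, the argument of Lemma 12 shows that the find accrues at least $v.a' - u.a + 1$ credits; the ``$+1$'' is exactly what pays for the visit to $v$ while the remaining $v.a' - u.a$ restores the invariant. Consequently every high-node visit is paid for, and summing over all finds, the total number of high-node visits is at most the total credit supply.

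Next I would invoke Lemma 11, which bounds the total credit allocation by $O(m \alpha(n,d))$, worst-case when linking is deterministic and in expectation when linking is randomized. Combining this with the previous step yields the claimed bound with exactly the stated qualifier.

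I do not expect a genuine obstacle here: the substantive work has already been carried out in Lemmas 10, 11, and 12, and this step is pure accounting. The only delicate point is the alignment of ``one credit per visit'' with the precise form of the invariant of Lemma 12, and that alignment is already encoded in the ``$+1$'' term that appears in the credit accrual formula in that lemma's proof.
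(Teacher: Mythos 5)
Your proposal is correct and follows exactly the paper's route: the paper's proof of this lemma is simply that it is immediate from Lemmas 11 and 12, which is the same credit-accounting combination you spell out. Your elaboration of the ``one credit pays for one visit'' bookkeeping is consistent with how those lemmas are set up, so there is nothing to fix.
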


\begin{proof}
The lemma is immediate from Lemmas 11 and 12.
\end{proof}

\vspace{\baselineskip}
Now we extend the analysis to one-try splitting.\  The proof of Lemma 12 fails for one-try splitting, because a CAS done by one process, say process 1, can fail as a result of a successful CAS done by another process, say process 2, that sets its value of \textit{v} \textit{before }process\ 1’s most recent high-node visit.  That is, time \textit{t} in the proof of Lemma 10\ can precede the visit.  This invalidates the use of Lemma 10 in the proof.\par

\vspace{\baselineskip}
To overcome this problem, we allocate additional credits to node count increases, and we allow active finds to shift some of their credits to the other active finds.\  Specifically, when a find starts, we allocate it $\alpha(n,d) + 1$ \textit{normal} credits.\  When a high child loses an ancestor of the same rank, we allocate one normal credit to each active find.\  When the count of a high node increases by $k$, we allocate 2\textit{k }normal credits and $2k(p-1)$ \textit{extra} credits to each active find.\  When a CAS in a find succeeds, we shift a $1/(p-1)$ fraction of the find’s extra credits to each other active find.  Shifted extra credits become normal; that is, we shift a credit at most once.\ \   \par

\vspace{\baselineskip}
\begin{lemma}
With one-try splitting, the number of allocated credits is $O(m \alpha(n,d))$, worst-case if linking is deterministic, average-case if randomized.
\end{lemma}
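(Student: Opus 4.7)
The plan is to proceed along the same lines as the proof of Lemma 11, splitting the allocated credits into three buckets, but with one new ingredient: for one-try splitting the density is defined as $d = m/(np^2)$, so the dense-case assumption $d \ge 1$ gives $m \ge np^2$, which is exactly what is needed to absorb the extra factor of $p$ carried by the extra credits. The three buckets are (a) credits allocated at the start of a find, (b) credits allocated when a high child loses an ancestor of the same rank, and (c) credits allocated when a high child's count increases. Extra credits that are \emph{shifted} at a successful CAS are not treated as a fourth bucket, since shifting only relabels extra credits as normal and does not create new ones.

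First I would bound (a). By Lemma 6 the number of finds is $O(m)$, and each receives $\alpha(n,d)+1$ normal credits at its start, contributing $O(m\alpha(n,d))$ in total.

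Next I would bound (b). By part (i) of Lemmas 3 and 4 each node has $O(1)$ ancestors of the same rank in expectation, so summing over the $n$ nodes, the total number of same-rank ancestor-loss events is $O(n)$, worst-case for deterministic linking and in expectation for randomized linking. Each such event distributes one credit to each of the at most $p$ active finds, contributing $O(np) = O(m)$ in the dense case (using $m \ge np^2 \ge np$).

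The main new step is bucket (c). A unit increase in the count of a high child allocates $2 + 2(p-1) = 2p$ credits to each of at most $p$ active finds, a total of $O(p^2)$ credits per unit of count increase. Here I would explicitly note that credits shifted to other finds at a successful CAS are not newly allocated --- they are simply reclassified from extra to normal --- so the shifting step contributes nothing to the allocated total. By Lemma 9 the sum of final counts over all high children is $O(n\alpha(n,d))$, worst-case for deterministic linking and with high probability for randomized linking by rank. Hence bucket (c) contributes $O(np^2 \alpha(n,d)) = O(m\alpha(n,d))$, with the last equality using precisely $m \ge np^2$ in the dense case.

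Summing the three buckets yields $O(m\alpha(n,d))$ total credits, with the bound holding worst-case for deterministic linking and on average (or with high probability for randomized linking by rank) for randomized linking. I do not anticipate a real technical obstacle; the only care-point is the bookkeeping at the shifting step, and the conceptual point is that the density definition $d = m/(np^2)$ was chosen exactly to absorb the extra $p$ in the extra-credit allocation. The harder work --- relating credits to actual high-node visits --- is deferred to the analogue of Lemma 12, not to this lemma.
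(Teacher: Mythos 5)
Your proposal is correct and follows essentially the same route as the paper: the paper bounds the normal credits by invoking the proof of Lemma 11 (your buckets (a)--(c) with the $2k$-per-find term) and then bounds the extra credits via Lemma 9 as $O(np^2\alpha(n,d)) = O(m\alpha(n,d))$ using $d = m/(np^2)$, exactly your bucket (c) accounting. Your explicit remark that shifted credits are merely relabeled, not newly allocated, is implicit in the paper and correctly handled.
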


\begin{proof}
The bound holds for normal credits by the proof of Lemma 11.\  By Lemma 9, the number of extra credits allocated to finds as a result of increases in count is $O(np^2\alpha(n,d)) = O(m\alpha(n,d))$ since $d = m/(np^2)$.
\end{proof}

\vspace{\baselineskip}
\begin{lemma}
With one-try splitting, just after a high node \textit{u} is visited by a find, the find has at least \textit{u.a} normal credits.
\end{lemma}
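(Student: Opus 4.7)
My plan is to adapt the inductive proof of Lemma 12 to the one-try setting, using the shifted extra credits to cover any failures of the key inequality $(*)$ $u.a' \ge v.a$. I induct on the number of high-node visits performed by the find. The base case (initial allocation of $\alpha(n,d)+1$ normal credits covering the first visit) is unchanged, as is the inductive step when $(*)$ holds: the self-allocated credits $2(u.a'-u.a)+2(v.a'-v.a)$ decompose into $(v.a'-u.a) + (u.a'-u.a) + (v.a'-v.a) + (u.a'-v.a)$ with the last three terms non-negative, and the single extra credit needed for the visit to $v$ is supplied by either a count increase of $u$ or $v$ or an ancestor-loss event, exactly as in Lemma 12.

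The new case is when $u.a' < v.a$, which is possible in one-try splitting because the find does not re-read $u.p$ between its CAS and the advance $u \leftarrow v$. Applying Lemma 10 with $t$ equal to the visit time $t_0$ to $u$ shows that in this case every change to $u.p$ during the interval $[t_0,\text{visit to }v]$ must have been performed by a successful CAS of some other find $F'$ that used a stale read of $v$'s parent from before $t_0$, installing a new parent of rank strictly less than $w.r^{(t_0)}$; otherwise Lemma 10 would yield $u.a' \ge v.a$. My plan is to charge the deficit $v.a - u.a'$ to the extra credits shifted into $F$ by these interfering CASes. Since extras are allocated at rate $2k(p-1)$ per $k$-count-increase per active find, and each successful CAS of $F'$ transfers a $1/(p-1)$ fraction of $F'$'s pool to each other find as normal credits, every interfering CAS contributes normal credits to $F$ that reflect the count increases $F'$ has witnessed since its last shift-out.

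The main obstacle will be making the credit-shift accounting tight at the granularity of a single interval $[u,v]$, because the extras held by $F'$ at the moment of its interfering CAS depend on the global schedule and on $F'$'s prior CASes. My expectation is that the $(p-1)$ factor built into the extra-credit allocation is exactly what closes the gap: each count increase contributes $(p-1)$ times the normal rate to the extras pool, and the $1/(p-1)$ shift fraction brings back an amount that matches the per-visit deficit in the worst case. If per-visit bookkeeping proves insufficient, the fallback is to strengthen the induction hypothesis with a potential term capturing outstanding extras across all active finds, and to prove that the combined credits-plus-potential invariant is maintained by each visit; the global cap of $O(m\,\alpha(n,d))$ on total credits from Lemma 14 then guarantees that this potential stays bounded.
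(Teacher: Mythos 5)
Your outline matches the paper's strategy---induct on high-node visits, keep the cases where the interfering CAS is based on reads made after your visit to $u$ identical to Lemma 12, and use the shifted extra credits only when the interference stems from a stale read---but the proposal stops exactly where the proof has to do its real work, and you say so yourself (``my expectation is that the $(p-1)$ factor \dots closes the gap''). The missing step is the concrete accounting in the stale-read case, and it does not need a global potential; it closes locally. Let $t$ be the last time the interfering find (find 2) set its variable $v$ before its successful CAS, let $t'$, $t''$, $t'''$ be the times find 1 visits $u$, find 2 does its CAS, and find 1 visits $v$, and write unprimed, primed, double-primed, triple-primed levels for those times. Three observations you do not make are what make the charge work: (i) Lemma 10 applied at time $t$ gives $u.a'' \ge v.a$ (the level of $v$ at time $t$, not at $t'$), and, when the levels are equal, a count increase of $u$ or a same-rank ancestor loss at time $t''$; (ii) find 2 is active throughout $[t,t'']$ and performs no CAS strictly between $t$ and $t''$ (its next CAS after setting $v$ and $w$ is the interfering one), so the extras it accrues there, at least $2(p-1)(u.a''-u.a+v.a''-v.a)$, cannot have been shifted out earlier and are available at $t''$; (iii) find 1 is active at $t''$ because its own CAS fails after $t''$, so the shift delivers it at least $2(u.a'-u.a+v.a'-v.a)$ normal credits. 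Adding find 1's own accruals $2(u.a'''-u.a'+v.a'''-v.a')$ gives at least $2(u.a'''-u.a+v.a'''-v.a)$, which decomposes as $(v.a'''-u.a')+(u.a'''-u.a)+(v.a'''-v.a)+(u.a''-v.a)$ plus nonnegative terms; by (i) the last bracket is nonnegative, so the total is at least $v.a'''-u.a'+1$ unless all levels are equal and unchanging from $t$ to $t'''$, in which case the second clause of (i) hands find 1 its one needed normal credit at $t''$. This is the argument your ``obstacle'' paragraph is asking for, and your worry about find 2's prior CASes draining its pool is answered precisely by (ii).

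Two further cautions. Your characterization of the new case (``every change to $u.p$ \dots installing a new parent of rank strictly less than $w.r^{(t_0)}$'') is not what the proof needs: the relevant case split is by whether find 2's read time $t$ precedes your visit to $u$, and the inequality you then get from Lemma 10 is against $v$'s level at $t$, which is why find 2's accruals over $[t,t'']$---not yours---must bridge the gap back to time $t$. And the fallback you propose cannot rescue an incomplete local argument: Lemma 14 bounds the total number of credits \emph{allocated}, whereas the present lemma is a per-find invariant (each find individually holds at least $u.a$ normal credits) that Lemma 16 needs to pay for each high-node visit; a bound on the global pool says nothing about what a particular find holds at a particular visit.
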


\begin{proof}
The proof is an extension of that of Lemma 12.\  Consider a find, say find 1.\  The credits allocated to the find when it starts make the lemma true just after its first high-node visit.\  Suppose the lemma holds just after find 1 visits \textit{u}, and let \textit{v} be the next node it visits.\  We consider three cases.\  If the CAS after the visit of find 1 to \textit{u} succeeds, the lemma holds just after the visit to \textit{v} by an argument like that in the proof of Lemma 12.\ \ (This case does not use shifted credits.)  Suppose this CAS fails, because a CAS done by another find, say find 2, changes \textit{u.p} from \textit{v}\ to another value.  Let \textit{t} be the last time that find 2 set its variable \textit{v}\ before its successful CAS.  If\textit{ t} is after find 1 visits \textit{u}, the lemma holds just after the visit to \textit{v} by an argument like that in the proof of Lemma 12, again without the use of shifted credits.\par

\vspace{\baselineskip}
The third, new case is if \textit{t} precedes the visit of find 1 to \textit{u}.\  Let \textit{t}’, \textit{t}’’, and \textit{t}’’’\textit{ }be the times find 1 visits \textit{u}, find 2 does its CAS, and find 1 visits \textit{v},\ respectively.  We denote by unprimed, primed, double-primed, and triple-primed values their values at times \textit{t}, \textit{t}’\textit{,} \textit{t}’’, and \textit{t}’’’, respectively.\  Applying Lemma 9 to time \textit{t} and the successful CAS of find 2 gives \textit{u.a}’’ $\ge$  v.a; and, if $u.a \le v.a$, the count of \textit{u} increases\textit{ }by at least 1 or \textit{u} loses an ancestor of the same rank when find 2 does its CAS. \par

\vspace{\baselineskip}
At time \textit{t’}, find 1 has at least \textit{u.a}’\textit{ }normal\ credits by the induction hypothesis.
Between times \textit{t’} and \textit{t}’’’, it accrues at least 2(\textit{u.a}’’’\textit{ – u.a}’ + \textit{v.a}’’’ – \textit{v.a}’)\ normal credits.  
Between times \textit{t} and \textit{t}’’, find 2 accrues at least 2(\textit{p} – 1)(\textit{u.a}’’\textit{ – u.a} + \textit{v.a}’’ – \textit{v.a}) $\ge$  2(\textit{p} – 1)(\textit{u.a}’\textit{ – u.a} + \textit{v.a}’ – \textit{v.a}) extra credits, of which at least 2(\textit{u.a}’\textit{ – u.a} + \textit{v.a}’ – \textit{v.a}) are shifted to find 1 and become normal at time \textit{t}’’: find 1 is active at \textit{t}’’\ since its CAS fails as a result of the CAS by find 2 succeeding.  
Thus between \textit{t}’\textit{ }and \textit{t}’’’ find 1 accrues at least 2(\textit{u.a}’’’\textit{ – u.a} + \textit{v.a}’’’ – \textit{v.a}) $\ge$  (\textit{v.a}’’’ – \textit{u.a}’) + (\textit{u.a}’’’ – \textit{u.a}) + (\textit{v.a}’’’ – \textit{v.a}) + (\textit{u.a}’’ – \textit{v.a})\ normal credits.  
Since \textit{u.a}’’ $\ge$ \textit{v.a}, this is at least \textit{v.a}’’’ – \textit{u.a}’ + 1, enough to make the lemma true for the visit to \textit{v}, unless \textit{u} and \textit{v} have equal and unchanging levels from \textit{t} to \textit{t}’’’, in which case find 1 accrues a normal credit when find 2 does its CAS.
\end{proof}

\vspace{\baselineskip}
\begin{lemma}
With one-try splitting, the number of visits to high nodes is $O(m \alpha(n,d))$,
worst-case if linking is deterministic, average if randomized.
\end{lemma}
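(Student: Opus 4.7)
The plan is to derive Lemma 16 as an immediate corollary of Lemmas 14 and 15, exactly mirroring the way Lemma 13 was obtained from Lemmas 11 and 12 in the two-try case. The credit scheme is designed so that a single normal credit pays for a single visit to a high node: each find pays one normal credit per high-node visit it performs. Lemma 15 guarantees that, just after any high-node visit, the find still holds at least $u.a \ge 0$ normal credits, so the ``account'' never goes negative. Consequently the total number of high-node visits across all finds is bounded above by the total number of normal credits ever allocated.

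Given this reduction, I would invoke Lemma 14, which bounds the total number of credits allocated (normal plus extra, and hence in particular the normal ones) by $O(m \alpha(n,d))$, worst-case if linking is deterministic and in expectation if linking is randomized. Combining these two facts yields the desired $O(m\alpha(n,d))$ bound on visits to high nodes, with the same worst-case versus expected distinction carried over from Lemma 14.

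There is really no additional obstacle at this step: all of the genuine difficulty, namely handling the interference between concurrent one-try splits via the trick of shifting a $1/(p-1)$ fraction of extra credits at every successful CAS, has already been absorbed into the proof of Lemma 15. The only thing to check is that the accounting is consistent, i.e.\ that credits are not double-spent: normal credits and shifted-then-normalized credits are each counted once in Lemma 14, and each high-node visit debits exactly one normal credit. With that bookkeeping observation, the proof reduces to a single sentence, and I would simply write it as ``Immediate from Lemmas 14 and 15,'' in parallel with the proof of Lemma 13.
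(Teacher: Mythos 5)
Your proposal is correct and matches the paper's argument exactly: the paper's proof of this lemma is simply ``immediate from Lemmas 14 and 15,'' with the credit-accounting justification you spell out being the intended (implicit) reasoning.
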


\begin{proof}
The lemma is immediate from Lemmas 14 and 15.
\end{proof}

\subsection{The Sparse Case}

\vspace{\baselineskip}
In\ this section we modify the analysis of Section 7.1 to handle sparse instances.  Throughout this section we assume \textit{d}\ < 1.  We need to change the definition of low and high nodes, add an additional node type, \textit{middle}, and (for the purpose of the analysis only) redefine the ranks of nodes.\par

\vspace{\baselineskip}
Let \textit{l} = lg(1 + 1/\textit{d}).\  Since \textit{d} < 1, \textit{l}\ > 1.  A node is \textit{low} if its rank is less than \textit{l} and its height is less than \textit{cl}, where \textit{c }is the constant in part (vi) of Lemmas 3 and 4; \textit{middle} if its rank is less than \textit{l} but its height is at least \textit{cl}, and \textit{high} if its rank is at least \textit{l}.\ \ Middle nodes can exist only if linking is randomized.  \par

\vspace{\baselineskip}
\begin{lemma}
The number of non-low nodes is at most $2nd$, as is the sum of the ranks of such nodes.  This bound is worst-case if linking is deterministic, average-case if randomized.
\end{lemma}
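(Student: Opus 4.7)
The plan is to bound separately the two types of non-low nodes: \emph{high} nodes, having rank at least $l = \lg(1 + 1/d)$, and \emph{middle} nodes, having rank less than $l$ but height at least $cl$ in the union forest. For the count, I will apply part (iii) of Lemma 2, 3, or 4 (whichever governs the linking method in use) with $k = l$ to bound the number of high nodes by $n/2^l$, and apply part (vi) of Lemma 3 or 4 with $k = l$ to bound the number of middle nodes by $n/2^l$. Substituting $2^l = 2^{\lg(1 + 1/d)} = 1 + 1/d$, each quantity equals $nd/(1+d) \le nd$, so summing the two bounds yields at most $2nd/(1+d) \le 2nd$ non-low nodes. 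For deterministic linking by rank, a node's rank bounds the height of its subtree, so no middle nodes exist and that term is vacuous; in the randomized cases the middle-node estimate is an expectation, which matches the ``average-case'' qualification in the lemma. The ``worst-case'' claim for deterministic linking follows since Lemma 2(iii) is worst-case.

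For the sum of ranks, I will use the redefinition of ranks announced at the start of this subsection, under which all low and middle nodes have rank $0$. Only high nodes then contribute to the sum. Writing $\sum_{x \text{ high}} x.r = \sum_{k \ge 1} \#\{x : x.r \ge k\}$ in the new rank scale and applying the part (iii) tail bound to each term gives a geometric series with ratio $2^{-l} = d/(1+d)$, which sums to $nd$, hence to at most $2nd$. In the deterministic case this calculation is worst-case; in the randomized cases the tail bound is an expectation, so the bound on the sum is ``average-case,'' as claimed.

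The main obstacle is the sum-of-ranks bound. Using the original rank scale directly would produce a stray factor of $l + 1 = \lg(1 + 1/d) + 1$ when one computes $\sum_{k \ge l} k \cdot n/2^k$, which blows up as $d \to 0$ and overshoots the target $2nd$. The rank redefinition collapsing the first $l$ original rank levels into one new rank-$0$ level is exactly what absorbs this spurious logarithmic factor, converting the would-be divergent series into the tight geometric sum above. A secondary subtlety is the careful accounting for worst-case versus expected bounds in parts (iii) and (vi): for deterministic linking by rank both parts are worst-case; for linking by random index, part (iii) of Lemma 4 is worst-case while part (vi) is only in expectation; and for randomized linking by rank both are in expectation (on top of the all-ranks-less-than-$n$ assumption guaranteed by Lemma 5).
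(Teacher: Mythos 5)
Your proposal is correct, and its first half is essentially the paper's argument verbatim: high nodes are bounded by $n/2^l \le nd$ via Lemma 2 or part (iii) of Lemma 3 or 4, middle nodes by $n/2^l \le nd$ via part (vi) of Lemma 3 or 4 (and are absent under deterministic linking, since then height is bounded by rank), giving at most $2nd$ non-low nodes with the same worst-case/average-case attributions. For the sum of ranks the routes differ slightly. The paper does not sum tail bounds; it reruns its earlier increment-counting arguments: in the deterministic case each rank increment at or above the threshold is paid for by a distinct high-rank root becoming a child, as in the proof of Lemma 2, and in the randomized-by-rank case such increments are heads in a coin-flip sequence with at most $O(nd)$ tails, as in part (ii) of Lemma 3, so the node bound itself yields the rank-sum bound. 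Your alternative --- writing the sum of the redefined ranks as a tail sum $\sum_{k \ge l}\#\{x : x.r \ge k\}$ and applying the $n/2^k$ bound termwise, with linearity of expectation in the randomized-by-rank case --- is equally valid and somewhat more direct, and it even gives a worst-case (not just expected) rank-sum bound for linking by random index. Your insistence on the redefined (effective) rank scale is exactly the right reading and is the point the paper leaves implicit: with the original ranks the sum over non-low nodes can be $\Theta(nd\lg(1+1/d))$, and it is the effective-rank sum that is what the lemma is later used for (bounding the sum of counts of high children). One small arithmetic quibble: the series $\sum_{k \ge l} n/2^k$ has ratio $1/2$ and first term $n/2^l = nd/(1+d)$, so it sums to $2nd/(1+d)$ rather than $nd$; your stated conclusion of at most $2nd$ is unaffected.
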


\begin{proof}
By part (vi) of Lemmas 3 and 4, the expected number of middle nodes is at most \textit{n}/2\textit{\textsuperscript{l}} $ \leq $  \textit{n}/2\textsuperscript{lg(1/\textit{d})} = \textit{nd}\ if linking is randomized.  (It is zero if not.). By Lemma 2 or part (iii) of Lemma 3 or 4 depending on the linking method, the number of high nodes is also at most \textit{n}/2\textit{\textsuperscript{l}} $ \leq $  \textit{nd}, worst-case if linking is deterministic or by randomized index, average-case if by randomized rank.\  The bound on the sum of ranks follows from the node bound by the argument in the proof of Lemma 2 if linking is deterministic, by that in the proof of part (ii) of Lemma 3 or 4\ if\ randomized.
\end{proof}

\vspace{\baselineskip}
\begin{lemma}
The number of finds that visit at least one non-low node is $O(m)$, worst-case if linking is deterministic, average-case if randomized.
\end{lemma}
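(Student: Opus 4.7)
The plan is to decompose finds into \emph{initial} finds (the at-most-two finds issued at the start of each user operation) and \emph{retry} finds (the \textit{find}$(u)$ and \textit{find}$(v)$ calls inside the body of a \textit{unite}'s while loop). The initial finds contribute at most $2m$ across all operations, so the task reduces to bounding the number of retry finds that visit at least one non-low node by $O(m)$.

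For the retry finds I would adapt the charging scheme used in Lemma~6. In that scheme, each iteration of the while loop is charged to a \emph{root event}: the smaller of the previous $u, v$ becoming a child or a root's rank being incremented by a tie-breaking link in the preceding iteration. The refinement for the sparse case is to observe that when a retry's find visits a non-low node, the charged root event can be identified with a non-low node in one of two ways. First, if the retry's starting node is itself non-low, the charge is direct. Second, if the starting node is low but the find walks up into non-low territory, then between the previous find and this retry some concurrent link must have attached the starting node's tree to a tree with a non-low root, and I would charge to that link's non-low event.

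By Lemma~17, both the number of non-low nodes and the sum of their ranks are at most $2nd$, so the total number of non-low root events (non-low nodes becoming children, plus rank increments at non-low roots) is $O(nd)$. Each event is charged by at most two retries per process, for a total of $O(pnd)$. Since $d \leq m/(np)$ in both settings (equality for two-try, strict for one-try), this is $O(m)$, and adding the $O(m)$ contribution from initial finds yields the claimed bound. The bound is worst-case if linking is deterministic and an expectation bound if linking is randomized, tracking the corresponding versions of Lemma~17.

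The main obstacle I expect is justifying the charge in the case where the retry's starting node is low but its find path enters non-low territory: here one must track the concurrent link that created the non-low ancestor and argue that it corresponds to a unique non-low root event not already charged by another retry. A further subtlety in the randomized linking setting is that ``non-low'' is a time-dependent notion, since a root's rank can grow across the threshold $l$; the charging must fix the rank at the moment of the event and cross-reference against Lemma~17's bound on non-low rank increments.
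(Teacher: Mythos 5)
Your overall budget and your Case (A) are essentially the paper's accounting: finds that see only non-low nodes are charged to non-low nodes becoming children or gaining rank, there are $O(nd)$ such events by Lemma~17, each is charged at most $2p$ times, and $O(npd)=O(m)$ since $d\le m/(np)$ for both splitting rules. The genuine gap is your Case (B) --- a retry find that starts at a low node and walks up into non-low territory --- and it is exactly the obstacle you flag but do not resolve. The charge you propose, ``the concurrent link that attached the starting node's tree to a tree with a non-low root,'' does not identify a usable event: the link that turns the low starting node into a non-root may involve only low nodes (its new parent can itself be low, with the non-low nodes lying higher up the path), and those non-low ancestors may have been acquired by links that occurred arbitrarily earlier in the execution, which arbitrarily many other finds would also have to charge. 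So there is no fresh non-low event of bounded multiplicity associated with such a find, and the $O(nd)\cdot 2p$ budget does not cover this case.

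The paper disposes of this case without any charging at all, via monotonicity: rank is non-decreasing and union-forest height is strictly increasing along a find path, so being non-low is upward-closed, and a find that visits both a low and a non-low node must start low. Within a single unite, the finds on each of the two sides move monotonically up the union forest (each retry starts at the root returned by the previous find on that side), so once a find on one side reaches a non-low node, every later find on that side starts at a non-low node. Hence at most two finds per unite (one per side) visit both a low and a non-low node, contributing only $O(m)$, and only the finds visiting exclusively non-low nodes need event-charging. Replacing your Case (B) argument with this per-unite bound closes the gap. As a smaller point, your ``direct'' charge in Case (A) should be pinned down: the triggering event of the preceding iteration may sit at the \emph{other} (possibly low) node of the pair, but a counted find's own starting node must itself have become a child since the previous find (otherwise the find performs no visits under the paper's definition of a visit), and that becomes-a-child event at the non-low starting node, together with rank increases at non-low nodes, is what the $O(nd)$ event count of Lemma~17 actually bounds.
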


\begin{proof}
Consider\ the finds during unites that visit at least one non-low node.  At most two per unite also visit a low node.  
Of those that visit only non-low nodes, there are at most two per unite plus at most $2p$ per non-low node that becomes a child or has a rank increase, two per process doing a unite while the event in question takes place.  By Lemma 17, the number of such finds is $O(npd) = O(m)$.
\end{proof}

\vspace{\baselineskip}
\begin{lemma}
The number of visits to low nodes is $O(ml)$, worst-case.
\end{lemma}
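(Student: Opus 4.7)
My plan is to combine a per-find bound with an amortized accounting over unite operations. For the per-find bound, by Theorem 1 the nodes visited during any single find lie on a single path in the union forest beginning at the find's input $x$. Heights in the union forest grow at least as fast as depth along an ascending path: if $y$ is an ancestor of $x$ at distance $k$, then $\text{height}(y) \ge \text{dist}(y,x) + \text{height}(x) \ge k$. Consequently every low node on the path---having height less than $cl$ by definition---must lie within distance $cl$ of $x$, yielding at most $cl + 1 = O(l)$ low-node visits per find. (Note that because we use linking by rank or linking by random index, ranks are non-decreasing along union-forest paths, so low-rank nodes form a prefix and the ``topmost low node'' on a find path is well-defined.)

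For the cumulative bound I would exploit the fact that within a single unite the successive calls to $\text{find}(u)$ trace a monotonically ascending path in the union forest: each new $\text{find}(u)$ starts at the root returned by the previous iteration, which is itself an ancestor of the prior $u$ on the path traversed. Hence the combined $\text{find}(u)$ work of one unite walks along a single union-forest path whose low-node portion contains at most $O(l)$ distinct nodes, and the analogous statement holds for $\text{find}(v)$. If each distinct low node on this combined path is visited only $O(1)$ times across the unite's iterations, summing over the $m$ operations yields the target $O(ml)$ bound.

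The main obstacle is controlling the multiplicity of visits to the same low node across a unite's iterations. In particular, if the unite's $u$ remains a root for several consecutive iterations---because concurrent rank increments on $u$ cause its link CASes to fail---the same low node $u$ is revisited on each such iteration. A rigorous bookkeeping must charge these repeated visits to the concurrent events (rank-increments or root-becoming-child events) that caused the CAS failures, while exploiting the following facts: (i) $u$'s rank is monotonically non-decreasing across iterations; (ii) once the current $u$ has rank at least $l$, all subsequent iterations of the unite contribute zero low-node visits (since the re-find then begins at a non-low root and, by rank monotonicity along the path, all visited nodes have rank at least $l$); (iii) each node can undergo at most $l$ rank-increment events while its rank is still below $l$, giving at most $O(nl)$ such events globally; and (iv) rank-increment events on nodes already at rank at least $l$ trigger only re-finds that contribute no low-node visits. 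The delicate part is balancing these ingredients against the standing assumption $n \le m$ so that the accumulated charges fit exactly inside $O(ml)$, rather than blowing up to $O(ml/d)$ as the naive count of ``finds $\times$ low visits per find'' would suggest.
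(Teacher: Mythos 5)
Your per-find bound is fine and matches the paper's intent: heights strictly increase along a union-forest path, so the nodes of height less than $cl$ form a prefix of at most $cl$ nodes of any find path, giving $O(l)$ distinct low-node visits per find. But the cumulative bound is exactly where your argument stops being a proof. You write that ``a rigorous bookkeeping must charge these repeated visits to the concurrent events'' and list ingredients, without carrying out the charging, and you yourself flag that you do not know whether the charges total $O(ml)$ or blow up to $O(ml/d)$. Worse, the ingredient you lean on --- (iii), the global count of $O(nl)$ rank-increment events on low nodes --- points at an argument that does not work: a single such event can be charged by up to $p$ concurrent unites (as in the dense-case Lemma~7, where each event absorbs up to $2p$ visits), and $O(nlp) = O(ml/d)$ is precisely the blow-up you fear. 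The standing assumption $n \le m$ does not rescue this.

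The missing idea is that the accounting is local to one unite and telescopes, which is why the paper's proof is one line: restrict the node-visit analysis in the proof of Theorem~2 to nodes of rank less than $l$ and height less than $cl$. Within a single unite, each repeated visit to the current root $u$ is charged (exactly as in Theorem~2) to $u$ increasing in rank or to $u$ becoming a child. Such a charge contributes a low-node visit only while the current root has rank below $l$; since the rank of the current root is non-decreasing over the course of the unite (ranks are non-decreasing both over time and along the path of successive roots), at most $l$ rank-increase charges per find path can occur before the current root ceases to be low, after which, by your own observation (ii), no further low visits arise. Becoming-a-child charges number at most one per distinct low node on the path, i.e.\ at most $cl$ per path. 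Hence each unite (and trivially each find) incurs $O(l)$ low-node visits outright --- no factor of $p$, no global count over $n$ nodes --- and summing over the $m$ operations gives $O(ml)$, worst-case. So the skeleton of your plan is the paper's, but the decisive per-unite telescoping step is absent, and the global-counting route you sketch in its place would fail.
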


\begin{proof}
The analysis of node visits in the proof of Theorem 2 restricted to nodes of rank less than $l$ and height less than $cl$ gives a bound of $O(l)$ low-node\ visits\ for each find and unite.
\end{proof}

\vspace{\baselineskip}
\begin{lemma}
If linking is randomized, the expected number of visits to middle nodes is $O(ml)$.
\end{lemma}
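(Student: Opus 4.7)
The plan is to bound the expected number of middle-node visits in a single find by $O(l)$ and then multiply by the $O(m)$ bound on the total number of finds from Lemma 6, yielding the $O(ml)$ total.

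The first step is to observe that within any single find, the sequence of visited nodes forms a chain of distinct nodes in the union forest whose ranks are monotonically non-decreasing. Each successive visit reads the parent of the previously visited node, and under either randomized linking method a parent has rank no less than its child; moreover, splitting replaces a parent pointer only by a proper ancestor, so every node visited by the find was a genuine ancestor of the starting node at the moment of its visit, and no revisits occur.

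The second step is to bound, for each rank $r$, the expected number of rank-$r$ nodes on such a chain by $O(1)$. This is the natural extension of part (i) of Lemmas 3 and 4 to every level, not just to the starting rank: whenever the chain reaches a rank-$r$ node, each successive ancestor has probability at least $\sfrac{1}{2}$ of strictly exceeding rank $r$---for randomized linking by rank this is the outcome of the coin flip that increments a rank rather than links, and for linking by random index it follows from claim $(*)$ in the proof of Lemma 4---so the length of a rank-$r$ run is stochastically dominated by a geometric random variable with parameter $\sfrac{1}{2}$, whose expectation is $O(1)$. Summing over $r = 0, 1, \ldots, l-1$ gives an expected bound of $O(l)$ on the number of rank-less-than-$l$ nodes on the chain, and since every middle node has rank less than $l$, the expected number of middle-node visits per find is $O(l)$.

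Multiplying by the $O(m)$ bound on finds from Lemma 6 yields the claimed $O(ml)$ bound on the expected total number of middle-node visits. The main obstacle is justifying the per-rank geometric-run bound on the chain actually traversed under concurrent splitting, rather than on a static ancestor chain in a fixed forest. The resolution is that splitting can only shorten the chain by skipping some ancestors, while the ranks of the nodes that remain on it are exactly those produced by the random linking process; hence the geometric bound transfers directly from ``ideal'' ancestor chains to the chain visited by a find under concurrency.
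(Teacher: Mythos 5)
There is a genuine gap, and it is quantitative, not just a matter of rigor. Your final step multiplies an expected $O(l)$ middle-visit count per find by ``the $O(m)$ bound on the total number of finds from Lemma 6.'' But Lemma 6 lives in the dense case: its proof bounds the number of finds by $O(m + np)$ and concludes $O(m)$ only because $d \ge 1$, i.e.\ $np \le m$. The present lemma is in the sparse case ($d < 1$), where $np > m$ (indeed $np = m/d$), so the total number of finds is only bounded by $O(np)$, and your product gives $O(npl) = O(ml/d)$, which exceeds the claimed $O(ml)$ by the factor $1/d > 1$. The relevant statement in the sparse regime is Lemma 18, which bounds only the number of finds that visit at least one \emph{non-low} node by $O(m)$; but substituting it into your argument raises a second problem: you would then be multiplying a random, adaptively determined number of finds by a per-find \emph{conditional} expectation (conditioned on the find visiting a middle node, with a start node that itself depends on the realized random structure), and neither the product of expectations nor the unconditional geometric-run bound for a fixed node is justified in that setting.

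The paper avoids per-find expectations entirely. It uses Lemma 18 to get $O(m)$ finds that touch middle nodes, charges the last two middle visits of each such find directly to the find, and charges every other middle visit to a global event: a middle node losing a child of the same rank, or the parent of a middle node increasing in rank. The expected number of such events is bounded by summing per-node expectations over the fixed set of $n$ nodes ($ndl$ rank increases, $O(nd)$ same-rank-ancestor losses via part (i) of Lemmas 3 and 4), each event is charged at most $p$ times, and $O(npdl) = O(ml)$ follows. Your chain/geometric observation is a reasonable ingredient (it is essentially how parts (i), (v), (vi) of Lemma 3 are proved), but to repair your proof you would need to replace the ``per find times number of finds'' step with some such global charging (or carefully justify uniform conditional bounds for adaptively started finds), and in particular you cannot invoke Lemma 6 here.
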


\vspace{\baselineskip}
\begin{proof}
By Lemma 18, the expected number of finds that visit middle nodes is $O(m)$. 
During such a find, each visit to a middle node except the last two is followed by a middle node losing a child of the same rank or the parent of a middle node \textit{x} increasing in rank.  
The latter can only happen $l$ times before \textit{x} has a parent that is not a middle node; subsequently, \textit{x}\ can only be the last middle node visited during a find.  We charge each visit to a middle node other than the last two of a find\ to the corresponding event.  The charge per event is at most $p$, and the expected number of events is at most $ndl$ rank increases and $O(nd)$ losses of same-rank ancestors, the latter by part (i) of Lemma 3 or 4.\  Such events account for $O(npdl) = O(ml)$ visits.
Adding the last two per find gives the lemma.
\end{proof}

\vspace{\baselineskip}
To count visits to high nodes, we define the \textit{effective rank} of a high node \textit{x} to be \textit{x.er} = \textit{x.r }– \textit{l}\ +\ 1.\ \ We define levels of high nodes and indexes and counts of high children, using effective ranks in place of ranks.  Since the effective rank of a high node is at least one, levels of high nodes and indices and counts of high children are well-defined.  We allocate credits exactly as in Section 7.1.  Lemmas 8 and 10\ remain true.  By Lemma 17, the sum of counts of high children is $O(nd\alpha(n,d))$, worst-case\ if linking is deterministic, high-probability if randomized.  We allocate credits exactly as in Section 5.1.  Lemmas 12 and 15\ remain true.  If splitting is two-try, the number of allocated credits is $O((m + ndp)\alpha(n,d)) = O(m \alpha(n,d))$ since $d = m/(np)$; if splitting is one-try, it is $O((m + ndp^2)\alpha(n,d)) = O(m \alpha(n,d))$ since $d = m/(np^2)$.  
We conclude that Lemmas 11, 13, 14, and 16\ hold\ in\ the\ sparse\ case\ (with\ the\ new\ definition\ of\ a\ high\ node).\par

\subsection{The Total Work Bound}

\vspace{\baselineskip}
Combining the results of Sections 7.1 and 7.2, we obtain the following theorem:\par

\vspace{\baselineskip}
\begin{theorem}
With any of the three linking methods of Section 5 and either one-try or two-try splitting, the total work is $O(m(\alpha(n,d) + \log(1 + 1/d)))$, worst-case if linking is deterministic, average-case if randomized, where $d = m/(np^2)$ if splitting is one-try, $d = m/(np)$ if splitting is two-try.
\end{theorem}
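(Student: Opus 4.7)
The plan is to prove Theorem 7 by combining the case-by-case bounds already established in Sections 7.1 and 7.2, splitting on whether the instance is dense ($d \geq 1$) or sparse ($d < 1$). In each case the total work is the sum of the number of set operations (each adding $O(1)$ overhead), the number of find/unite visits to each class of node, and an additive contribution from the rare events described by Lemma 5. Since a find and a unite each do $O(1)$ work per node visit by the argument in Theorem 1, it suffices to bound the total number of node visits over the whole computation.

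For the dense case, I would observe that $d \geq 1$ implies $\log(1 + 1/d) = O(1)$, so the bound to be proved reduces to $O(m\alpha(n,d))$. By Lemma 6 the number of finds is $O(m)$, and each unite contributes $O(1)$ node visits beyond its two enclosing finds. By Lemma 7 the total number of visits to low nodes is $O(m)$. By Lemma 13 (two-try) or Lemma 16 (one-try) the total number of visits to high nodes is $O(m\alpha(n,d))$. Summing these contributions gives $O(m\alpha(n,d))$. When linking is randomized by rank, we must also add the contribution from executions with maximum rank at least $n$; Lemma 5 bounds the expected work from such executions by $O(m)$, which is absorbed.

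For the sparse case, the node classification is refined into low, middle, and high with respect to the threshold $l = \lg(1+1/d)$. Lemma 18 bounds the number of finds that ever visit a non-low node by $O(m)$, Lemma 19 bounds visits to low nodes by $O(ml)$, and Lemma 20 bounds visits to middle nodes by $O(ml)$ in expectation. The sparse-case analysis of high nodes described after Lemma 20 reuses the credit argument with effective ranks; plugging Lemma 17 (which gives $O(nd)$ high children and $O(nd)$ sum of their ranks) into the counting of credits shows that the allocated-credit bounds of Lemmas 11 and 14 become $O((m+ndp)\alpha(n,d))$ and $O((m+ndp^2)\alpha(n,d))$ respectively. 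Because $d = m/(np)$ for two-try and $d = m/(np^2)$ for one-try, both reduce to $O(m\alpha(n,d))$, so Lemmas 13 and 16 transfer to the sparse case. Summing low, middle, and high contributions gives $O(ml + m\alpha(n,d)) = O(m(\alpha(n,d) + \log(1+1/d)))$, again with the high-rank correction from Lemma 5 for randomized linking by rank.

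The main obstacle is not any new calculation but making sure that in the sparse case the substitution of effective ranks into the definitions of level, index, and count really does preserve Lemmas 8--12 and 15 verbatim; in particular one must check that the hypothesis of Lemma 10 (the new parent has rank at least $w.r$) continues to hold after re-centering ranks by $l-1$, and that credits earned per count increase are still large enough to fund the induction in Lemmas 12 and 15. Once that verification is in place, the theorem is assembled by taking the maximum of the dense- and sparse-case bounds and noting that the sparse bound's leading term $O(m\log(1+1/d))$ trivially dominates when $d < 1$ while vanishing to $O(m)$ when $d \geq 1$, so the single unified expression $O(m(\alpha(n,d) + \log(1+1/d)))$ covers both.
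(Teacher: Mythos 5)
Your proposal is correct and follows essentially the same route as the paper: the paper's proof of this theorem is precisely the assembly of the dense-case lemmas (Lemmas 5--7, 13, 16), the sparse-case lemmas (Lemmas 17--20), and the effective-rank extension of the credit argument sketched at the end of Section 7.2. Your added remark about verifying that Lemmas 8--12 and 15 survive the substitution of effective ranks matches the paper's own (brief) treatment of that point, so no gap remains.
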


\begin{proof}
The theorem follows from Lemmas 5, 7, 13, 16, 19, and 20.
\end{proof}

\section{Lower Bounds}

\vspace{\baselineskip}
In this section, we derive lower bounds on the worst-case and amortized efficiency of set union algorithms.
In the first subsection, we prove lower bounds on the work efficiency of the algorithms described in this paper by explicitly providing worst-case executions---both the operations and the adversarial schedules.
At a high level, our executions are constructed by the following observations and steps.
For each algorithm, we describe operations that build a tree of logarithmic height using $\textit{unite}$ operations.
We observe that {\em shadowing schedules} in which all processes are scheduled in lock-step while performing the same expensive $\textit{find}$ operations result in worst-case behavior.
We apply a shadowing schedule to processes performing a $\textit{find}$ on the deepest node in the aforementioned tree to prove that the logarithmic term in our upper bounds is tight.
Then, we combine the idea of shadowing schedules with previous sequential lower bounds of Tarjan et al. and Fredman et al. \cite{TvL84, FS89} to show that the inverse-Ackermann term in our upper bounds is tight.
Our algorithmic lower bounds section proves that our amortized upper bound analyses are tight when find operations are done with two-try splitting.\par

\vspace{\baselineskip}
In the second subsection, we show general lower bounds that apply to the concurrent set union problem.
First, we prove that, in the worst-case, any concurrent set-union algorithm must do at least $\Omega(\log \min \{n, p\})$ work in expectation for a single operation.
When $p = n^{\omega({\frac{1}{\log\log n}})}$, this lower bound is stronger than the sequential lower bound of $\Omega\left(\frac{\log n}{\log\log n}\right)$ given by Fredman and Saks \cite{FS89} in the cell probe model.
It also shows a separation in work complexity between the sequential and concurrent versions of the set-union
problem, since Blum \cite{Blum} presented an algorithm that does at most $O\left(\frac{\log n}{\log\log n}\right)$ work per operation in the sequential setting.
Furthermore, whenever $\log p = \Theta(\log n)$, i.e. when $p = n^\epsilon$, this lower bound establishes that randomized linking with any form of compaction yields an algorithm with optimal expected work per operation.
Finally, we generalize the worst-case lower bound using shadowing schedules to show that our algorithm obtained by combining randomized linking with two-try splitting is optimal amongst a class of {\em symmetric algorithms} that includes all known algorithms for the concurrent disjoint set union problem.\par

\subsection{Algorithmic Lower Bounds}

\vspace{\baselineskip}
In order to prove the tightness of the inverse-Ackermann term in our upper bounds,
we recall a sequential cell probe lower bound on the set union problem given by Fredman and Saks.\par

\vspace{\baselineskip}
\begin{lemma}[\cite{FS89}]
    Let $\mathcal{A}$ be any randomized algorithm that solves the sequential set union problem.
    For any fixed number of nodes $n$, and any $M \ge n$, there is a sequence of operations $\sigma_M$, 
    that makes $\mathcal{A}$ perform $\Omega(M \alpha(n, M/n))$ expected work.
	\label{lem:sequentialLowerBound}
\end{lemma}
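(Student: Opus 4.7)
The plan is to adapt the \emph{chronogram technique} of Fredman and Saks to the union-find problem in the cell probe model. By Yao's minimax principle, it suffices to exhibit, for each $n$ and $M \ge n$, a distribution over input sequences on which every deterministic algorithm must perform $\Omega(M\alpha(n, M/n))$ expected cell probes; a lower bound on a deterministic algorithm against a random input implies the claimed bound on any randomized algorithm against an adversarial (worst-case) input.

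First I would partition the sequence $\sigma_M$ into a chain of epochs of geometrically increasing size, indexed from the most recent (shortest) epoch backward in time. Within each epoch a randomized batch of \textit{unite} operations is performed, producing a rooted forest whose structure at successively deeper levels encodes the random choices of successively earlier epochs; after all unites, a block of \textit{find} operations is executed whose correct answers depend jointly on the random choices from every epoch. The family of unite sequences used in an epoch is designed so that the same post-epoch forest is reached by many different choices, so that any cell written during that epoch can only distinguish a bounded number of possibilities.

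The core of the argument is that a cell probe issued during a \textit{find} reveals only a limited amount of information about any particular earlier epoch: the contents of a cell last written during epoch $i$ can separate at most a small number of the random choices made in that epoch, because the within-epoch operations are drawn from a large family and are independent across epochs. Counting how many cells must be probed per find to pin down the answer contributed by a single epoch yields a recurrence in $n$ whose solution grows like the inverse of one row of Ackermann's function; nesting this recurrence across all epochs, whose number grows like $\alpha(n, M/n)$, yields the claimed amortized bound.

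The main obstacle is calibrating the epoch sizes and the distribution over each epoch's unite sequence so that (i) the recurrence governing how many epochs can be sustained matches the definition of $A_k$ exactly, and (ii) the expected probe cost contribution is $\Omega(M)$ per epoch, not something smaller after amortization. These two requirements pin down the parameters and force the Ackermann-style dependence on $n$ and $d = M/n$. For the complete and technically delicate argument I would follow the construction in Fredman and Saks~\cite{FS89}, refining the parameters so that the regime $M \ge n$ and the density parameter $M/n$ that appear in the statement of the lemma are accommodated cleanly.
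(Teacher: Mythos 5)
This lemma is simply quoted from Fredman and Saks \cite{FS89}; the paper supplies no proof of its own, so your plan --- outline the chronogram/epoch argument, invoke Yao's principle to pass from a hard distribution to randomized algorithms, and defer the technically delicate calibration to \cite{FS89} --- is essentially the same route the paper takes. The only small wrinkle is your initial description of the epochs as geometrically sized (that choice yields the $\log n/\log\log n$-type bounds, not the inverse-Ackermann bound), but you correct this yourself when you note the epoch sizes must be calibrated against $A_k$, so nothing substantive is at stake.
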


\vspace{\baselineskip}
We use Lemma \ref{lem:sequentialLowerBound} to establish a concurrent lower bound.

\vspace{\baselineskip}
\begin{lemma}
    Let $\mathcal{A}$ be any of the algorithms we have described for concurrent set-union.
    There is some sequence of $m$ operations using $p$ processes on $n$ nodes
    that requires 
    $\Omega\left(m \cdot \alpha{\left(n, \frac{m}{np} \right)} \right)$ work in expectation.
    
\label{lem:concurrentAckermannBound}
\end{lemma}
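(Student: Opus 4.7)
The plan is to combine Lemma 21 of Fredman and Saks with a \emph{shadowing schedule} that replicates a sequentially-hard instance across all $p$ processes. Take $M = \lceil m/p \rceil$ (we may assume $M \ge n$; otherwise $m/(np) < 1$ and the bound is dominated by the logarithmic term treated separately). Apply Lemma~21 to obtain a sequence $\sigma_M$ of $M$ set-union operations on $n$ nodes such that any randomized algorithm---and in particular the sequential specialization of $\mathcal{A}$---incurs expected work $\Omega(M \cdot \alpha(n, M/n)) = \Omega\bigl((m/p) \cdot \alpha(n, m/(np))\bigr)$ when executing $\sigma_M$.

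Construct the concurrent workload by giving each of the $p$ processes its own copy of $\sigma_M$, so that the total number of operations is $pM \ge m$. Define the schedule so that the $p$ processes advance through the $j$-th operation of $\sigma_M$ in lockstep, step-by-step, before any process moves to operation $j+1$. Whenever two or more processes contend on a CAS, the adversary orders the conflicting steps so that at most one CAS succeeds and the remaining processes have already read the old value of the affected field before the successful write lands; in particular, any compaction performed by one process takes effect only after every other process has finished traversing the nodes involved.

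Under this shadowing schedule, each process, viewed in isolation, observes a tree state at the beginning of every operation that is identical to what a single sequential process executing $\sigma_M$ would see. Consequently, the path that process $i$ must traverse during operation $j$ has at least the length traversed in the sequential execution, and the work done by process $i$ over all of $\sigma_M$ is at least the sequential lower bound of Lemma~21 in expectation. Summing over the $p$ processes,
\[
\E[\text{total work}] \;\ge\; p \cdot \Omega\!\left(\tfrac{m}{p} \cdot \alpha\!\left(n, \tfrac{m}{np}\right)\right) \;=\; \Omega\!\left(m \cdot \alpha\!\left(n, \tfrac{m}{np}\right)\right).
\]

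The main obstacle is making the shadowing argument precise simultaneously across (i)~CAS failures, (ii)~compactions, and (iii)~randomized linking choices. For (i)--(ii), the adversarial scheduler's power to interleave individual memory accesses is exactly what is needed: by delaying successful writes until every shadowing process has read past the relevant node, no process can exploit the work done by another. For (iii), independence of the random bits consumed by different processes ensures that, conditional on any fixed schedule, each process's copy of $\sigma_M$ is an independent sequential execution to which Lemma~21 applies in expectation; linearity of expectation then gives the claimed total bound.
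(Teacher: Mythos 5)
Your core construction is exactly the paper's: invoke the Fredman--Saks sequence $\sigma_M$ with $M = m/p$, hand a copy to each of the $p$ processes, and run them in lock-step (a shadowing schedule) so that no process benefits from another's compaction, giving $p \cdot \Omega\bigl((m/p)\,\alpha(n, m/(np))\bigr)$ total expected work. For the regime $m \ge np$ this matches the paper's proof essentially step for step, and your extra remarks about delaying successful CAS writes and about the independence of random bits are consistent with (if slightly more detailed than) the paper's one-line justification of shadowing.

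There is, however, a genuine gap in how you dispose of the case $m < np$. You assert that then ``$m/(np) < 1$ and the bound is dominated by the logarithmic term treated separately,'' but that is false: the lemma as stated claims the $\Omega\bigl(m\,\alpha(n, m/(np))\bigr)$ bound for all parameter ranges, and when, say, $m = np/2$, the logarithmic term $\log(np/m+1)$ is $O(1)$ per operation while $\alpha(n, m/(np)) = \alpha(n,0)$ grows (unboundedly, if glacially) with $n$, so the $\alpha$ term is not dominated and cannot be waved away. Moreover your construction cannot be applied verbatim there, since $M = m/p < n$ and Lemma 21 requires $M \ge n$. The paper closes this case with a second construction: take $M = n$, so a solo run of $\sigma_n$ costs $\Omega(n\,\alpha(n,1))$, and assign $\sigma_n$ to only $m/n < p$ processes run in lock-step, yielding $m$ operations and $\Omega\bigl((m/n)\cdot n\,\alpha(n,1)\bigr) = \Omega\bigl(m\,\alpha(n, m/(np))\bigr)$ work. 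Adding this second case would complete your argument.
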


\begin{proof}
    Any concurrent algorithm is also a sequential algorithm if it is run by a single process.
    So, for any given $M \ge n$, we can take a worst-case sequence $\sigma_M$ of operations from
    Lemma \ref{lem:sequentialLowerBound}.
    That is, a single process running the sequence of operations $\sigma_M$ will perform $\Omega(M \alpha(n, M/n))$ work in expectation.
	In the remainder of the proof, we use shadowing schedules, in which processes run in lock-step with each other
	and thereby do not gain locally from any compaction attempts of other processes, to get the lower bound.\par

    \vspace{\baselineskip}
	We consider two cases for $m$:
	\begin{itemize}
		\item[]
		{\em Case 1:} 
		If $m \ge np$, then we choose $M = m/p \ge n$.
		If each of the $p$ processes runs $\sigma_M$ and is scheduled in lock-step (so that the processes all walk up find sequences together and do not benefit from each other's compaction attempts),
		then the total number of operations is $pM = m$ and the total amount of work is 
		$\Omega(pM\alpha(n,M/n)) = \Omega\left(m \alpha{\left(n, \frac{m}{np} \right)} \right)$.
		
		\item[]
		{\em Case 2:} 
		If $m < np$, we choose $M = n$.
		Then, $\sigma_M$ performed by a single process takes $\Omega(n \alpha(n, 1))$ expected work.
		We observe that $m/n < p$ and assign $m/n$ processes the operation sequence $\sigma_M$, thus assigning $m$ operations.
		If the processes are scheduled in lock-step, the total expected work performed by them is 
		$\Omega(m/n \cdot n \alpha(n,1)) = \Omega\left(m \alpha{\left(n,\frac{m}{np}\right)}\right)$.
	\end{itemize}
\end{proof}

\vspace{\baselineskip}
We can also show that the logarithmic term $\log(\frac{np}{m} + 1)$ is an amortized lower bound for all our algorithms.
The schedule that builds binomial trees with a single process
and makes all the processes shadow each other up the longest
branch of these trees yields the lower bound.\par

\vspace{\baselineskip}
\begin{lemma}
    For randomized linking by rank and linking by DCAS, regardless of what type of compaction is used in \textit{find} operations,
	and for any positive integer $k \in [1,n]$, 
	there is a sequence of $k - 1$ \textit{unite} operations that will build a tree with $k$ nodes with height 
	$\Omega(\log k)$.
    \label{lem:treeBuilder}
\end{lemma}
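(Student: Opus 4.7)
The plan is a single-process binomial-tree construction. Let $h = \lfloor \log_2 k \rfloor$. Designate $2^h$ of the $k$ elements as binomial-tree elements and the remaining $k - 2^h$ as leftovers. In round $i$ for $i = 1, \ldots, h$, the process pairs up the $2^{h-i+1}$ current roots of the binomial-tree elements and issues one \textit{unite} per pair, using the two current roots themselves as its arguments. After these $2^h - 1$ unites, it attaches each leftover singleton by an additional \textit{unite} with the binomial root, for a total of $k-1$ unites.

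Two structural observations drive the analysis. First, because only a single process executes, no CAS or DCAS ever fails due to contention. Second, because every \textit{unite} is issued with two current roots as its arguments, every internal \textit{find} returns immediately without following a single parent pointer, so no compaction step ever runs regardless of which compaction rule the algorithm uses. Each \textit{unite} of two equal-rank, equal-height roots therefore combines them into a tree of height one larger: for linking by DCAS the DCAS in \textit{elink} succeeds and links the two roots; for randomized linking by rank, either the coin-flipped CAS in \textit{elink} attaches one root directly (heads), or it bumps one rank (tails) and the immediately following iteration of the outer \textit{unite} loop performs the link via the direct-CAS branch of \textit{link}. Induction over rounds yields a tree of height exactly $h$ after round $h$, and subsequent leftover attachments cannot reduce this height (at worst, the elink for a rank-zero binomial root puts the singleton above the binomial tree, only increasing the height), so the final tree has $k$ nodes and height at least $h = \Omega(\log k)$.

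The one subtlety is that for randomized linking by rank the surviving root of each equal-rank \textit{unite} depends on coin flips, so the adversary constructing the sequence must be adaptive: after each round, it inspects which roots survived (well-defined since a single process executes sequentially) and chooses the next round's pairings accordingly. This adaptivity is the main obstacle; once it is granted, the remainder of the proof is routine verification along the lines of the classical sequential union-by-rank construction.
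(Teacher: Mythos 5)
Your proposal is correct and takes essentially the same route as the paper: a round-by-round binomial-tree construction in which the current roots are united pairwise, with leftover nodes attached to the final root, yielding height $\Omega(\log k)$. The extra details you supply (no CAS/DCAS contention with a single process, finds on roots perform no compaction, handling of the rank-tie coin flip, and the adaptive choice of surviving roots under randomized linking by rank) simply make explicit what the paper's terser argument leaves implicit.
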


\begin{proof}
    For simplicity, we initially assume that $k$ is a power of 2.
    Let $B_j$ be the binomial tree of height $j$.
    All the nodes are initially in singleton trees, i.e. $B_0$ trees.
    We proceed in $\lg k$ rounds.
    In round $r$, we start with $\frac{k}{2^{r-1}}$ trees of type $B_{r-1}$,
    and simply unite their roots pairwise.
    After $\lg k$ rounds we end up with a single tree of type $B_{\lg k}$.
    If $k$ were not a power of 2, we perform the above procedure with the largest power of 2 less than $k$ and simply
    unite the remaining nodes to the root of the main tree.
\end{proof}

\vspace{\baselineskip}
A slightly more complex construction allows us to prove a similar lemma for linking by index.

\vspace{\baselineskip}
\begin{lemma}
    For randomized linking by index, regardless of what type of compaction is used in \textit{find} operations,
	and for any positive integer $k \in [1,n]$, 
	there is a sequence of $k - 1$ \textit{unite} operations by a single process that will build a tree with $k$ nodes in which the depth of a uniformly randomly picked node is 
	$\Omega(\log k)$ in expectation.
    \label{lem:treeBuilder2}
\end{lemma}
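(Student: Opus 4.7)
The plan is to have the single process execute the binomial-tree construction of Lemma \ref{lem:treeBuilder} but with one extra precaution: in each unite, it calls unite on the two current roots of the subtrees being merged (which it can track locally, since in the single-process setting it knows the random indices and the state of the forest). This guarantees that every find inside a unite returns in one step and never compacts anything, so the resulting tree is exactly the ``pure'' binomial skeleton whose shape is determined by the random permutation; this is what lets us honor the ``regardless of what type of compaction is used'' clause. I would first assume $k$ is a power of $2$ and handle the general case at the end by running the construction on $k' = 2^{\lfloor \lg k \rfloor} \ge k/2$ of the nodes and then appending the remaining $k - k'$ singletons by uniting each with the current root.

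The heart of the proof is to lower-bound the expected depth of a fixed node $x$. For $r = 0, 1, \ldots, \lg k$, let $B_r(x)$ denote the block of $2^r$ consecutive initial nodes containing $x$ after round $r$. An easy induction shows that the root of the tree on $B_r(x)$ is always the node of largest random index in $B_r(x)$. During round $r$ the algorithm links the root of $B_{r-1}(x)$'s tree as the child of the root of its sibling block's tree iff $\max(B_r(x)) \notin B_{r-1}(x)$, and this is precisely when $x$'s depth increases by one. Since the random indices form a uniformly random permutation, the position of the maximum inside $B_r(x)$ is uniform on $B_r(x)$, so this event has probability exactly $1/2$. Linearity of expectation gives
$$E[\text{depth of } x] \;=\; \sum_{r=1}^{\lg k} \tfrac{1}{2} \;=\; \tfrac{\lg k}{2},$$
and since this holds for every $x$, it also holds when $x$ is drawn uniformly at random. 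For general $k$, uniting each leftover singleton with the tree either leaves the depths of the $k'$ binomial nodes unchanged (singleton becomes a child) or pushes them all down by one (singleton wins), so depths in the binomial portion never decrease; since $k' \ge k/2$, the expected depth of a uniformly random node in the final tree of size $k$ is still $\Omega(\log k)$.

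The main subtlety---and the reason the sequential-style argument of Lemma \ref{lem:treeBuilder} does not transfer directly---is that here the tree's shape is not dictated by the schedule but by the random permutation, so any ``concrete'' description of the tree would have to track how the maxima of various blocks interact. The trick is to avoid that altogether: by uniting actual roots we dodge any compaction, and by using linearity of expectation on per-round indicators that each have probability exactly $1/2$ by symmetry of the uniform permutation, we bypass the need to argue any independence between rounds.
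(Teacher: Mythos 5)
Your probabilistic core is fine: for the \emph{uncompacted} random binomial tree, the indicator ``$x$ gets one deeper in round $r$'' has probability exactly $\tfrac12$, and linearity gives expected depth $\tfrac{\lg k}{2}$; the padding to non-powers of two is also fine. The gap is in the construction itself, specifically the sentence ``it calls unite on the two current roots \dots since in the single-process setting it knows the random indices and the state of the forest.'' In linking by random index the indices are internal randomness of the implementation; the process issuing operations does not know them, and which node is currently a root is itself a function of that randomness. So either your operation sequence is adaptive in information the process simply does not have, or it must extract that information with extra \textit{find} operations (exceeding the $k-1$ unites the lemma allows), or it presupposes access to the random order --- which defeats the purpose of this lemma: it is invoked in Lemma~\ref{lem:algorithmicLogLowerBound} precisely to show the lower bound holds \emph{even under the independence assumption}, i.e.\ with an instance built obliviously of the random order; an instance built by peeking at the order proves only the (separate, already stated) Lemma~\ref{lem:neccesity-of-independence-assumption}. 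Since your entire depth calculation rests on the claim that no find inside any unite ever compacts, and that claim is exactly what the unjustified root-tracking was supposed to secure, the argument as written does not go through.

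The paper's proof shows how to avoid this: it unites fixed ``designated'' nodes chosen by an oblivious refinement rule, so the sequence is independent of the random order, and then it \emph{confronts} the compaction rather than dodging it. The key observations are that a designated node always has depth at most $2$ after its unite, and that a node at depth $\delta$ has at most a $2^{-\delta}$ fraction of the tree below it; together these bound the average-depth loss from the compaction inside each round's finds by $\tfrac14$, while the equal-size links raise the average depth by $\tfrac12$, for a net gain of at least $\tfrac14$ per round over $\lg k$ rounds. If you want to salvage your cleaner $\tfrac12$-per-round calculation, you must either prove you can identify roots with operations that provably never compact and account for them in the operation budget, or adopt the paper's oblivious-designated-node scheme and add a compaction-loss bound of the above kind.
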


\begin{proof}
The proof is constructive. 
The construction of these trees is inspired by {\em binomial trees}, and is done in multiple {\em rounds} such that 
each round fully finishes before the next round starts.
Without loss of generality let $k$ be a power of 2, as otherwise we could just use the greatest power of 2 less than $k$ in the following construction.
Initially, we let the nodes be in singleton trees $T_{1,1}, \ldots, T_{k,1}$.
In each round we will combine pairs of trees, and each tree $T$ will have a designated node $\nu(T)$. 
In the initial trees the designated node is the only node.
In the first round we combine pairs of trees by performing 
$$\Unite{\nu(T_{1,1}), \nu(T_{2,1})}, \Unite{\nu(T_{3,1}), \nu(T_{4,1})}, \dots, \Unite{\nu(T_{k-1,1}), \nu(T_{k,1})}$$
to produce tree $T_{1,2}, \ldots, T_{k/2, 2}$.
The designated node $\nu(T_{i, 2})$ is chosen to be one of the designated nodes of the subtrees that formed $T_{i, 2}$.
We call this process of picking the new designated nodes as a subset of the old ones {\em refining}.
The subsequent rounds are done similarly by combining pairs of trees from the previous round and refining designated nodes,
until only the tree $T_{1, \lg k}$ remains.\par

\vspace{\baselineskip}
We now make the following observations about this process:
\begin{enumerate}
\item
All trees $T_{i, r}$ of a given round $r$ have the same number of nodes $2^r$.

\item
A designated node always has depth at most 2.
(This follows from the way \textit{find} does compactions.)

\item
A node of depth $\delta$ in any of the trees $T_{i, r}$ has at most $\left( \frac{1}{2} \right)^\delta \cdot |T_{i, r}|$ successors.
\end{enumerate}

\vspace{\baselineskip}
The links in the rounds raise the depth of half the nodes due to (1),
and the compactions in the \textit{find} operations of the rounds reduce the average depth of a node in the forest by at most $\frac{1}{4}$ due to (2) and (3).
Thus, each round increases the average depth of a node in the forest by at least $\frac{1}{2} - \frac{1}{4} = \frac{1}{4}$.
Since there are $\log(k)$ rounds, the proof is complete.
\end{proof}

\vspace{\baselineskip}
Combining the previous lemmas yields our best algorithmic lower bound result.\par

\vspace{\baselineskip}
\begin{lemma}
    Let $\mathcal{A}$ be a concurrent disjoint set union algorithm obtained by combining
	linking by DCAS, randomized linking by rank, or linking by random index with find with no compaction, one-try splitting, or two-try splitting.
	There is a schedule of $m$ operations on $n$ nodes by $p$ processes that forces $\mathcal{A}$ to perform
	$\Omega\left(m \log{\left(\frac{np}{m} + 1 \right)} \right)$ work.
	The bound holds in expectation for the linking by random index algorithm even under the independence assumption.
\label{lem:algorithmicLogLowerBound}
\end{lemma}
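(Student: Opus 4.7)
I would split into two cases based on the size of $m$ relative to $np$.  When $m \ge np$, $\log(np/m + 1) = O(1)$, so the bound reduces to $\Omega(m)$, which holds trivially since every operation takes at least one step.

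When $m < np$, set $k = \lceil np/m \rceil \ge 2$ and choose $t = \Theta(m/p)$ large enough that $t(k + p - 1) \le m$ and $tk \le n$; both conditions are simultaneously achievable because $n \le m$ forces $k \le p + 1$, so $k + p - 1 \le 2p$ while $n/k = \Omega(m/p)$.  A single process first executes $t(k-1)$ unites to build $t$ pairwise disjoint trees of $k$ fresh nodes each, using Lemma \ref{lem:treeBuilder} for linking by DCAS or randomized linking by rank, or Lemma \ref{lem:treeBuilder2} for linking by random index.  By those lemmas, each tree has a distinguished vertex of depth $\Omega(\log k)$ in the worst case, or in expectation in the random-index case --- where by averaging we may fix a deterministic target vertex whose expected depth over the random node ordering is $\Omega(\log k)$.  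Next, for each of the $t$ trees in turn, all $p$ processes concurrently perform a find on the distinguished vertex, scheduled in perfect lock-step.  In such a schedule all processes hold the same $u$, $v = u.p$, and $w = v.p$ at every iteration and attempt the same CAS, then advance their local $u$ to the same next vertex on the original path regardless of which CAS succeeded.  Each of the $p$ processes therefore performs $\Omega(\log k)$ work per find, so one batch costs $\Omega(p \log k)$ work.  Over the $t$ batches the schedule uses $tp$ finds plus $t(k-1)$ unites, at most $m$ operations in total, and forces $\Omega(tp \log k) = \Omega(m \log(np/m + 1))$ work.  For linking by random index, taking expectations over the random ordering preserves the lower bound under the independence assumption, since the target vertex is fixed deterministically.

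The main obstacle is justifying the shadowing claim rigorously --- that concurrent compactions by sibling shadowing processes cannot enable any process to shortcut its own find.  For find with no compaction this is immediate.  For one-try splitting each process advances its local $u$ only to the $v$ it has already read, so sibling compactions (which only replace a pointer by a proper ancestor of its old value) do not remove a vertex from the process's own path of $u$-values.  For two-try splitting a process may skip one level after its own first successful CAS in an iteration, but each iteration still consumes $\Omega(1)$ steps and advances $u$ by at most two levels of the original path, so the number of iterations remains $\Omega(\log k)$.  In every case the total work per shadowed batch is $\Omega(p \log k)$, completing the argument.
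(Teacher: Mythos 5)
Your proposal is correct and follows essentially the same route as the paper: partition the nodes into $\Theta(m/p)$ groups of size $\Theta(np/m)$, build a deep tree in each group via Lemma~\ref{lem:treeBuilder} (or Lemma~\ref{lem:treeBuilder2} for random index), and have all $p$ processes shadow each other in lock-step on a find of a deep node, yielding $\Omega(p\log(np/m+1))$ work per group. Your extra care in budgeting the unite operations within $m$, handling the dense case $m \ge np$ explicitly, and fixing a deterministic target node by averaging in the random-index case are minor refinements of the paper's argument, not a different approach.
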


\begin{proof}
    We prove the theorem for linking by the DCAS and randomized linking by rank algorithms first.
    The lower bound is non-trivial only when $m/p < n$.
    In this case, we describe a particular sequence of operations and schedule that performs the requisite work.
	Divide the nodes into $m/p$ groups of size $n/(m/p) = np/m$.
	Lemma~\ref{lem:treeBuilder} allows us to link each group of nodes into a tree of height 
	$\Omega\left(m \log{\left(\frac{np}{m} + 1 \right)} \right)$.
	For each such tree, perform $\Find{x}$ on the deepest node $x$ of that tree simultaneously with each of the processes.
	Now consider the schedule in which processes shadow each other in all the finds.
	In this schedule, each process does $\Omega\left(m \log{\left(\frac{np}{m} + 1 \right)} \right)$ work per find, and one find per group.
    The total number of operations is $m/p \times p = m$, and the total amount of work is
	$\Omega\left(m \log{\left(\frac{np}{m} + 1 \right)} \right)$.\par

    \vspace{\baselineskip}	
	In the case of the linking by random index algorithm under the independence assumption, we modify the above argument by replacing the use of Lemma~\ref{lem:treeBuilder} with Lemma~\ref{lem:treeBuilder2},
	and performing $\Find{x}$ on a uniformly randomly picked node $x$ in the tree.
\end{proof}

\vspace{\baselineskip}
Combining the previous lemmas yields our best algorithmic lower bound result.\par

\vspace{\baselineskip}
\begin{theorem}
    Let $\mathcal{A}$ be a concurrent disjoint set union algorithm obtained by combining
	linking by DCAS, randomized linking by rank, or linking by random index with find with no compaction, one-try splitting, or two-try splitting.
	There is a schedule of $m$ operations on $n$ nodes by $p$ processes that forces $\mathcal{A}$ to perform
	$\Omega\left(m \left( \log{\left(\frac{np}{m} + 1 \right)} + \alpha{\left(n, \frac{m}{np} \right)} \right) \right)$ work.
	The bound holds in expectation for the linking by random index algorithm even under the independence assumption.
\end{theorem}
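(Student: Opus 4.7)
The plan is to combine the two preceding lemmas by picking whichever one yields the larger lower bound for the given parameters. Lemma~\ref{lem:concurrentAckermannBound} produces a schedule of $m$ operations on $n$ nodes by $p$ processes that forces $\Omega\!\left(m \alpha(n, m/(np))\right)$ work, and Lemma~\ref{lem:algorithmicLogLowerBound} produces a (possibly different) schedule of $m$ operations that forces $\Omega\!\left(m \log(np/m + 1)\right)$ work. Since both lemmas apply to exactly the class of algorithms named in the theorem (linking by DCAS, randomized linking by rank, or linking by random index, combined with any of the allowed compaction methods), the choice between them is unconstrained.

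The key observation is the trivial inequality $\max\{a,b\} \ge (a+b)/2$ for $a,b \ge 0$. Applied to $a = \alpha(n, m/(np))$ and $b = \log(np/m + 1)$, it yields
\[
\max\!\left\{m\alpha(n, m/(np)),\ m\log(np/m + 1)\right\} \;\ge\; \tfrac{1}{2}\, m\!\left(\alpha(n, m/(np)) + \log(np/m + 1)\right).
\]
Thus I would split into two cases: if $\alpha(n, m/(np)) \ge \log(np/m + 1)$, use the schedule guaranteed by Lemma~\ref{lem:concurrentAckermannBound}; otherwise, use the schedule guaranteed by Lemma~\ref{lem:algorithmicLogLowerBound}. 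Either way, the resulting schedule forces $\Omega\!\left(m(\alpha(n, m/(np)) + \log(np/m + 1))\right)$ work, matching the theorem statement. The in-expectation qualifier for linking by random index under the independence assumption carries over verbatim because it is already present in both component lemmas.

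There is essentially no main obstacle: all of the work has been done in Lemmas~\ref{lem:sequentialLowerBound} through~\ref{lem:algorithmicLogLowerBound}. The only minor technical point worth checking is that the two lemmas are valid over the same parameter regime $2 \le p \le n \le m$ assumed throughout the paper, so that the case split above is always legitimate; this is immediate from the proofs of the component lemmas, which treat the degenerate regimes (e.g.\ $m \ge np$ versus $m < np$ in Lemma~\ref{lem:concurrentAckermannBound}, and $m/p < n$ in Lemma~\ref{lem:algorithmicLogLowerBound}) explicitly.
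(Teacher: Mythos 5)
Your proposal is correct and matches the paper's proof, which simply says to combine Lemmas \ref{lem:concurrentAckermannBound} and \ref{lem:algorithmicLogLowerBound}; your explicit use of $\max\{a,b\} \ge (a+b)/2$ just spells out the standard step the paper leaves implicit.
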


\begin{proof}
    Combine the results of Lemmas \ref{lem:concurrentAckermannBound} and \ref{lem:algorithmicLogLowerBound}.
\end{proof}

\vspace{\baselineskip}
As the final algorithmic lower bound, we prove that the independence assumption we have been using to analyze the linking by random index algorithm is indeed necessary.
In particular, we present a super-logarithmic work lower bound for the algorithm if the independence assumption does not hold.

\vspace{\baselineskip}
\begin{lemma}
	Concurrent set union via the linking by random index algorithm performs $\Omega(m\sqrt{p})$ expected work to do $m = n\sqrt{p}$ operations 
	if $\sqrt{p} \le n$, regardless of which compaction rule the \textit{find} operations use.
\label{lem:neccesity-of-independence-assumption}
\end{lemma}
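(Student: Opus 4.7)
The plan is to exhibit an adaptive sequence of $m = n\sqrt{p}$ operations that, under linking by random index with any valid compaction rule, forces $\Omega(m\sqrt{p}) = \Omega(np)$ expected work. The construction crucially relies on the failure of independence: the adversary is allowed to choose unite arguments after observing the random total order ``$<$'' on nodes. Rounding $\sqrt p$ to an integer, I partition the $n$ nodes into $n/\sqrt{p}$ disjoint groups of size $\sqrt{p}$ (possible since $\sqrt{p} \le n$) and process the groups one batch at a time, keeping only one batch active at a moment.

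Inside each group, let the nodes be $x_1 < x_2 < \cdots < x_{\sqrt{p}}$ in the random order. In phase A of the batch, a single designated process performs the $\sqrt{p} - 1$ sequential unites $\textit{unite}(x_i, x_{i+1})$ for $i = 1, \ldots, \sqrt{p} - 1$, while all other processes are idle. Because linking by random index always makes the higher-indexed root the parent, and no concurrent finds are attempting compactions, after phase A the group forms the single parent-pointer path $x_1 \to x_2 \to \cdots \to x_{\sqrt{p}}$.

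In phase B, all $p$ processes simultaneously invoke $\textit{find}(x_1)$, and the scheduler drives them in lockstep: every process completes its reads of $x_i.p$ before any process performs a CAS that writes to $x_i.p$. I would then verify that, whether the compaction rule is na\"ive find, one-try splitting, or two-try splitting, every process visits all $\sqrt{p}$ nodes. The key observation is that each process advances its local $u$ from a previously cached $v$, rather than rereading shared memory, so a successful CAS by one process never lets another shortcut through the path; the only effect is to make the other processes' CAS return \emph{false} harmlessly. Consequently each of the $p$ finds in the batch contributes $\Theta(\sqrt{p})$ work.

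Tallying: each batch uses $\sqrt{p} - 1$ unites plus $p$ finds, i.e.\ $\Theta(p)$ operations, and produces $\Theta(p\sqrt{p})$ work; summing over the $n/\sqrt{p}$ batches gives $\Theta(n\sqrt{p}) = \Theta(m)$ operations and $\Theta(np) = \Theta(m\sqrt{p})$ work, as required. The main obstacle is the lockstep analysis in phase B: I must rigorously justify that no interleaving of the in-flight splitting CAS operations lets any process skip a path node. This reduces to a careful stepwise simulation in which all $p$ program counters advance in unison, showing by induction on the iteration count that immediately before the $i$-th loop iteration every process has identical local state $(u, v, w) = (x_i, x_{i+1}, x_{i+2})$, so that all processes are truly forced through the full $\sqrt{p}$-node traversal.
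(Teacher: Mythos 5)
Your phase~B analysis (lock-step ``shadowing'' of the finds so that no process benefits from another's compaction, giving $\Theta(\sqrt{p})$ work per find) and your accounting of operations and work per group are essentially the paper's argument. The genuine gap is in phase~A: you let the entity that \emph{chooses the unite arguments} observe the hidden random total order ``$<$'' and then issue $\textit{unite}(x_1,x_2),\textit{unite}(x_2,x_3),\ldots$ in increasing order. That is not the failure of independence this lemma is about, and it is not an adversary available in the model. The random index is internal to the algorithm; what the paper's independence assumption (Section 5.3) constrains is the \emph{scheduler} --- i.e., the linearization order of the links --- and the lemma is meant to show that an adversarial scheduler alone, using only information it legitimately sees, already breaks the analysis. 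Your construction instead needs the operation sequence itself to depend on the algorithm's coins, which is a much stronger adversary; under it even the \emph{sequential} analysis of linking by random index collapses (your phase~A is run by a single process), so the argument does not isolate any concurrency phenomenon and does not establish the statement in the intended model.

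The paper's proof avoids this by keeping the unite operations oblivious to the random order: it takes a group $S$ of $\sqrt{p}$ nodes and has $p/2$ processes attempt $\textit{unite}(x,y)$ for \emph{all} pairs $x,y\in S$, so roughly $\binom{\sqrt{p}}{2}\approx p/2$ links are simultaneously in flight. Each pending link is a CAS whose arguments reveal the outcome of a comparison, so the adversarial \emph{scheduler} (not the operation chooser) can wait, observe these comparisons, and then execute exactly the chain of CASes $CAS(x_1.p,x_1,x_2), CAS(x_2.p,x_2,x_3),\ldots$ --- all of which succeed because in the CAS implementation a root may become the child of a non-root --- thereby assembling the path of length $\sqrt{p}$ while the remaining pairs' CASes fail. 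After that, the remaining $p/2$ processes do lock-step finds on a random node of $S$ (expected depth $\sqrt{p}/2$), which is where your phase~B rejoins the paper. To repair your proposal you would need to replace phase~A with such an order-oblivious set of concurrent unites resolved adversarially by the scheduler, rather than order-aware sequential unites.
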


\begin{proof}
We will show an explicit example.
Assume $\sqrt{p} < n$, and pick a set $S$ of $\sqrt{p}$ nodes.
Let $p/2$ processes attempt to do $\Unite{x,y}$ where each pair of $x,y$ in $S$ is tried by at least one process.
The scheduler can wait to see the outcomes of the node comparisons and decide to schedule the processes so that the nodes of $S$
get linked into a linear path of length $\sqrt{p}$.
If the remaining $p/2$ processes all perform $\Find{x}$ where $x$ is chosen randomly from $S$,
and are scheduled in lock-step, they will perform, in expectation, $\Omega(\sqrt{p})$ work each, since the expected depth of $x$ is $\sqrt{p}/2$.\par

\vspace{\baselineskip}
Performing the same process on each of the $\floor{\frac{n}{\sqrt{p}}}$ sets of nodes leads to $\Omega(np)$ work to do $m = n\sqrt{p}$ operations.
The average operation takes $\Omega(\sqrt{p})$ work.
\end{proof}

\subsection{Problem Lower Bounds}

\vspace{\baselineskip}
In this subsection, we prove that any concurrent set-union algorithm must do $\Omega(\log \min \{n, p\})$ work for a single operation in the worst case.
Furthermore, we build on the worst-case lower bound to show an $\Omega(\log (np/m + 1))$ amortized work lower bound for all {\em symmetric algorithms}, where
we say an algorithm is symmetric if:
\begin{enumerate}
    \item 
    The algorithm's code for the $\textit{unite}$ and $\textit{find}$ procedures does not use process ids.
    
    \item
    The algorithm does not use the return values of CAS operations.
\end{enumerate}

\vspace{\baselineskip}
All our algorithms and all algorithms known to us can be made symmetric without effecting the upper bound analyses of the algorithms.
For instance, this can be done if we assume that all CAS operations return false.
This does not effect the correctness of our algorithms since we only use the return value of a CAS operation in the $\textit{unite}$ operation to determine if a link has been successful.
However, if we do not perform this check atomically, our algorithms remain correct, since a process that has successfully united two trees together will realize this shortly afterwards when its $u$ and $v$ pointers meet at the root of the united tree.
The work efficiency analysis increases by at most a factor of two, because we can always imagine the case where processes work in pairs $(p,q)$, and each pair performs operations together and are always scheduled in lock-step.
In this case, if $p$ ever performs a successful link \CAS{$u.p,u,v$} and returns, then $q$'s attempt to perform the same link will fail;
thus $q$ will only return after it traverses the whole tree and discovers that some other process has already finished the link it wanted to do.
The modification we propose to symmetrize the algorithm will simply make $p$ do the same work as $q$.\par

\vspace{\baselineskip}
Our lower bounds make use of a result on a problem called ``wake-up".
The {\em wake-up} problem on $k$ processes asks for a wait-free algorithm with two properties: 
(i) every process returns a boolean value and at least one process returns true, and
(ii) a process may return true only if every process has already executed at least one step.
The following Lemma is a lower bound on the complexity of wake-up that follows straightfowardly from Jayanti's lower bound
in \cite{PJayanti}.\par

\vspace{\baselineskip}
\begin{lemma}[\cite{PJayanti98, PJayanti}]
	For any $k$ process wake-up algorithm that uses variables supporting read, write, and CAS, there is a schedule in which some process performs $\Omega(\log k)$ steps in expectation.
	\label{lem:wulowerbound}
\end{lemma}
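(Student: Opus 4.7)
The plan is to prove this via an adversarial scheduling argument of the style introduced by Jayanti, maintaining a ``covering'' invariant throughout the execution. Fix any wake-up algorithm $W$ for $k$ processes and imagine running it against an adaptive adversary that constructs a schedule phase-by-phase. Throughout, the adversary maintains a set $S$ of \emph{latent} processes, of initial size $k$, that have each executed the same number of steps so far but whose cumulative effect on shared memory is \emph{invisible}: any execution trace in which every process in $S$ took zero steps is indistinguishable, from the point of view of shared registers, from the actual execution. Because the wake-up specification forbids any process from returning \true until every process has moved, no process can safely return until $|S|$ has shrunk to zero (otherwise the indistinguishable no-step execution would violate the spec).

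In each phase, I would advance every latent process by one step, then show that at most a constant fraction of $S$ can be ``exposed'' (removed) while preserving the covering invariant. The argument is a pigeonhole on the register accessed by each process's next step: among the $|S|$ pending steps, some register $R$ is about to be touched by at least $|S|/c$ of them, for a constant $c$ depending on the width of allowed primitives. The adversary picks a single \emph{survivor} among these and schedules the others so their effect on $R$ is cancelled --- reads on $R$ are trivially invisible; writes on $R$ are overwritten by the survivor's write; CAS operations on $R$ are ordered so that all but the survivor's either fail (becoming equivalent to reads) or are immediately overwritten, leaving the final value of $R$ identical to what it would be if only the survivor had moved. The survivor is then exposed; the remaining latent processes are still covered but have advanced one step. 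Hence $|S|$ shrinks by at most a factor of $c$ per phase, so after $t$ phases we still have $|S| \ge k/c^t$, and some process has executed at least $t$ steps.

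Setting $t = \Omega(\log k)$ yields the bound: until $|S|$ reaches zero, no process can return \true, so at least one latent process must perform $\Omega(\log k)$ steps. To handle randomization and obtain the expectation bound, I would have the adversary fix its scheduling decisions based on the realized outcomes of coin flips (adaptive adversary); Yao's principle then turns the worst-case deterministic bound over schedules into the expected bound over the algorithm's coins, exactly as in Jayanti's original formulation.

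The main obstacle is the serialization argument for CAS operations within a single phase. Unlike reads and writes, CAS success depends on the register's current value, which the adversary only partially controls. The technical heart is showing that for any multiset of pending CAS operations on $R$ --- each with its own expected value and new value --- the adversary can choose an ordering such that at most one CAS is ``visible'' in the final state and the rest behave like failed operations or are overwritten, without letting information about the hidden processes leak into local states of the survivor. This is precisely the step proved in~\cite{PJayanti}, so the reduction here is simply to verify that the wake-up task fits the hypotheses of that theorem, which it does since any wake-up algorithm must produce a true-returning process whose local state is a function of the covered processes having moved.
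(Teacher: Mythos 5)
The paper does not actually prove this lemma---it imports it, noting only that it ``follows straightforwardly from Jayanti's lower bound'' in \cite{PJayanti}---so the relevant question is whether your reconstruction would stand on its own, and it would not. The fatal step is the pigeonhole claim that among the $|S|$ pending steps some register is about to be accessed by at least $|S|/c$ of them for a constant $c$: an algorithm can have every process take its first step on its own dedicated register (e.g., each process writes a flag into its private announcement cell), in which case every register is touched by at most one latent process, nothing can be ``cancelled'' by a single survivor, and the covering invariant is destroyed in one round. This is not a repairable technicality; it shows that a hiding/covering argument cannot be the engine of this lower bound at all, because an algorithm in which all $k$ processes become \emph{visible} immediately still requires $\Omega(\log k)$ steps. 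The actual reason, and the content of Jayanti's proof, is an information-flow (fan-in) argument: one formalizes how many processes' participation a process (or a register) can be ``aware of,'' shows that under an adversarially chosen ordering of accesses a single read/write/CAS step can at most roughly double this awareness, and observes that a process may return true in wake-up only if it is aware of all $k$ processes, forcing some process to take $\Omega(\log_2 k)$ steps; randomization is handled by running this adversary adaptively against the coin flips (not via Yao's principle, which is the wrong tool against an adaptive adversary).

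Relatedly, your final paragraph misattributes to \cite{PJayanti} a ``CAS serialization within a phase'' lemma that you could plug into your covering framework; that paper proves no such hiding lemma, but rather the entire $\Omega(\log k)$ expected-time bound by the awareness technique sketched above. The legitimate short proof here---and the one the paper intends---is simply to verify that wake-up is one of the tasks to which Jayanti's theorem \cite{PJayanti98, PJayanti} applies and cite it, or else to reproduce the awareness-doubling argument in full; as written, your intermediate covering construction is both unnecessary for that route and incorrect as an independent argument.
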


\vspace{\baselineskip}
We solve wake-up via set-union to get our lower bounds below.

\vspace{\baselineskip}
\begin{lemma}
	Reduction \ref{alg:wakeup} $($below$)$ solves the wake-up problem for $k$ processes using a disjoint set union instance with $k+1$ nodes,
	in which each process executes one \textit{unite} and two \textit{find} operations.
\label{lem:wakeup}
\end{lemma}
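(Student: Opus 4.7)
The plan is to verify the two defining properties of the wake-up problem separately for the reduction. Let $\mathcal{R}$ denote the code each process executes, which consists of one $\textit{unite}$ and two $\textit{find}$ operations on a set-union instance with $k+1$ nodes. First I would observe that bounded wait-freedom of $\mathcal{R}$ is immediate from the bounded wait-freedom of the underlying set-union algorithm: each process does a constant number of bounded-step operations, so every invocation of $\mathcal{R}$ terminates in a bounded number of its own steps.

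For property (i) --- at least one process returns \textit{true} --- I would use the linearizability of the set-union operations. The $k$ unites are linearized in some total order, and after all of them have been linearized the forest has reached a final state in which the relevant nodes have been merged in the way the reduction requires for its ``true'' verdict. Let $P$ be the process whose unite is last in the linearization order. At that linearization point every other process's unite has already taken effect, so the two subsequent finds of $P$ observe exactly the terminal state of the data structure. The return-true test in $\mathcal{R}$, which is designed to fire precisely on this terminal state, therefore fires at $P$.

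For property (ii) --- a process returns \textit{true} only after every process has executed at least one step --- I would argue by contrapositive. Suppose process $P$ returns \textit{true} while some process $Q$ has not yet taken a step. Then $Q$'s unite has not been invoked, and hence has not been linearized. Using the invariant that, in a linearizable set-union implementation, two nodes $x$ and $y$ can satisfy $\textit{find}(x) = \textit{find}(y)$ at some instant only if the chain of unites connecting them has been linearized before that instant, I would show that $P$'s pair of finds cannot report the ``all merged'' condition required for the \textit{true} verdict, contradicting the assumption that $P$ returns \textit{true}.

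The main technical obstacle I expect is aligning the reduction's specific return-true predicate with the structural statement ``all $k$ unites have been linearized.'' Depending on how the two finds and the choice of targets inside $\mathcal{R}$ are set up, one must show simultaneously that (a) no proper subset of the $k$ unites suffices to make the observed finds agree (needed for (ii)), and (b) once all $k$ unites have been linearized the observed finds must agree (needed for (i)). This reduces to enumerating the cuts of the union forest consistent with equality of the two targeted finds, which I expect to be the only delicate step; the linearizability argument for (i) and the invariant argument for (ii) are then routine.
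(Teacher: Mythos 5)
Your plan matches the paper's proof: property (i) is argued via the process whose \textit{unite} is linearized last (its two subsequent \textit{find}s then see the fully merged set and return true), and property (ii) via the observation that the two \textit{find}s can agree only once their target nodes are in the same set, which requires every \textit{unite} to have been linearized. The ``delicate step'' you defer is immediate for the paper's reduction, in which process $q_j$ performs \textit{unite}$(j-1,j)$ on nodes $0,\ldots,k$ followed by \textit{find}$(0)$ and \textit{find}$(k)$: the unites form a path from $0$ to $k$, so omitting any one process's \textit{unite} (its first step) disconnects $0$ from $k$, which yields both your conditions (a) and (b) at once.
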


\begin{proof}
	Let $q_1,\ldots,q_k$ be the $k$ processes and let the nodes be labelled $0,\ldots,k$.
	The reduction below correctly solves wake-up because:

\makeatletter
\renewcommand{\ALG@name}{Reduction}
\makeatother		
    \begin{algorithm}[H]
    \caption{: $q_j$'s code in wake-up solution.}
    \begin{algorithmic}[1]
    	\Procedure{WakeUp}{}
    		\State $\Unite{j-1, j}$; $x \gets \Find{0}$; $y \gets \Find{k}$; \Return{$x = y$}
    	\EndProcedure
    \end{algorithmic}
    \label{alg:wakeup}
    \end{algorithm}
    (i) the last process to complete \textit{unite} finds that the representatives of nodes 0 and $k$ are the same and thus returns true, and
	(ii) no process returns true before all processes have completed \textit{unite}, since no representative of $k$ can be the same as any representative of 0 until they are in the same set, i.e. until the last of the	$\textit{unite}$ operations is linearized.
\end{proof}

\vspace{\baselineskip}
\begin{theorem}
    Let $\mathcal{A}$ be a linearizable wait-free concurrent disjoint set union algorithm using read, write, and CAS.
	There is a schedule of $m$ operations on $n$ nodes by $p$ processes that forces $\mathcal{A}$ to perform
	$\Omega(\log \min\{n, p\})$ work in expectation.
    \label{thm:worstCaseWorkLowerBound}
\end{theorem}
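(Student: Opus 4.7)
The plan is to reduce the wake-up problem to disjoint set union, invoking the reduction constructed in Lemma on Reduction~\ref{alg:wakeup} and the wake-up lower bound of Lemma \ref{lem:wulowerbound}. Set $k = \min\{n-1, p\}$, so $k + 1 \le n$ nodes are available and at most $p$ processes are used. Plug $\mathcal{A}$ into Reduction~\ref{alg:wakeup}: each of the $k$ processes performs a single \textit{unite} followed by two \textit{find} operations, so the reduction is a wait-free wake-up algorithm using only read, write, and CAS (the primitives $\mathcal{A}$ is assumed to use).

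Next, apply Lemma \ref{lem:wulowerbound} to this wake-up solution: there exists a schedule on which some process performs $\Omega(\log k)$ steps in expectation. Since every step that the process takes in the wake-up protocol is a step of $\mathcal{A}$ (either inside its \textit{unite} or inside one of its two \textit{find}s), by averaging at least one of these three set-union operations must contribute $\Omega(\log k)/3 = \Omega(\log k)$ expected work. Since $\log k = \log \min\{n-1,p\} = \Theta(\log \min\{n,p\})$ under the standing assumption $n,p \ge 2$, this yields an operation of $\Omega(\log \min\{n,p\})$ expected work on some schedule. The total number of operations in this execution is $3k \le 3m$, so we can embed the bad schedule into a schedule of $m$ operations by padding with no-op \textit{find}s on an untouched element, which do not decrease the maximum per-operation work.

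The core observation making the reduction valid is already established in Lemma on Reduction~\ref{alg:wakeup}: a process returns \textit{true} precisely when nodes $0$ and $k$ have the same representative, and this can only occur after every participating process has linearized its \textit{unite}. Thus the two required wake-up properties (at least one \textit{true} return; no \textit{true} return before all processes have taken a step) hold for \emph{any} linearizable implementation $\mathcal{A}$, which is what allows the wake-up lower bound to transfer.

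The main obstacle I anticipate is ensuring that the accounting of ``work'' lines up between the two problems. Lemma \ref{lem:wulowerbound} bounds steps by \emph{some single process}, whereas we want to bound expected work of a \emph{single operation}; these differ by at most a constant factor of $3$ because each wake-up invocation in the reduction consists of exactly three set-union calls, and the steps of the wake-up protocol are a subset of the steps of those three calls. A secondary subtlety is the quantifier order on the adversarial schedule (the scheduler in Lemma \ref{lem:wulowerbound} may depend on $\mathcal{A}$, and hence on the reduction, but this is exactly what is required to conclude the existence of a bad schedule for $\mathcal{A}$ itself).
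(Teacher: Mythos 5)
Your proposal is correct and follows essentially the same route as the paper: instantiate the wake-up reduction with $k = \min\{n-1,p\}$, apply the wake-up lower bound to obtain a schedule in which some process takes $\Omega(\log k)$ expected steps, and charge that work to the most expensive of its three set-union operations. The extra details you supply (the factor-of-three averaging and padding the execution out to $m$ operations) are fine elaborations of what the paper leaves implicit.
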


\begin{proof}
    Instantiate Lemma \ref{lem:wakeup} with $k = \min\{n-1, p\}$.
    The most expensive of the three set union operations of the process that performs the most work in the adversarial schedule
    of Lemma \ref{lem:wulowerbound} must do $\Omega(\log \min\{n, p\})$ expected work.
\end{proof}

\vspace{\baselineskip}
\begin{corollary}
    The disjoint set union algorithm obtained by combining randomized linking with any form of find described in this paper gives an algorithm with optimal worst-case work per operation up to constant factors
    when $\log p = \Theta(\log n)$.
    \label{cor:worstCaseWorkLowerBound}
\end{corollary}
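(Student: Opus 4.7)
The proof is essentially a matching of the lower bound in Theorem~\ref{thm:worstCaseWorkLowerBound} with the per-operation step bound already established for randomized linking. The plan is to carry out the following short argument.

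First, I would invoke Theorem~\ref{thm:worstCaseWorkLowerBound}, which applies to \emph{any} linearizable wait-free concurrent disjoint set union algorithm using only read, write, and CAS, and shows that there exists a schedule forcing some operation to do $\Omega(\log\min\{n,p\})$ expected work. Under the hypothesis $\log p = \Theta(\log n)$, we have $\log\min\{n,p\} = \Theta(\log n)$: if $p \le n$, then $\log\min\{n,p\} = \log p = \Theta(\log n)$, and if $p > n$, then $\log\min\{n,p\} = \log n$. So the lower bound per operation is $\Omega(\log n)$ in expectation.

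Next, I would appeal to the per-operation step bounds already proved for randomized linking. Theorem~3 shows that randomized linking by rank combined with any valid compaction method (which, by the remarks at the end of Section~4, includes no compaction, one-try splitting, and two-try splitting) causes each set operation to stop in $O(\log n)$ steps with probability $1 - 1/n^c$ for any constant $c > 0$; by Theorem~4 the same high-probability bound holds for linking by random index under the scheduler restriction. In particular the worst-case expected work per operation is $O(\log n)$, since the rare event of taking more than $O(\log n)$ steps contributes only an $O(1)$ additive term when multiplied by the $O(n)$ worst-case step bound per operation.

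Finally, combining the two bounds yields optimality: the worst-case expected work per operation for the algorithm is $\Theta(\log n) = \Theta(\log\min\{n,p\})$ under the hypothesis. There is no substantive obstacle here, since everything needed is already in hand; the only point requiring any care is the observation that the high-probability $O(\log n)$ step bound from Theorem~3 (or Theorem~4) translates to an expected bound of the same order, which is the form demanded by the lower bound in Theorem~\ref{thm:worstCaseWorkLowerBound}.
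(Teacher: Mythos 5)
Your argument is exactly the intended one: the paper leaves the corollary unproved because it follows immediately from Theorem~\ref{thm:worstCaseWorkLowerBound} (giving $\Omega(\log\min\{n,p\}) = \Omega(\log n)$ per-operation expected work when $\log p = \Theta(\log n)$) together with the $O(\log n)$ per-operation step bounds of Theorems~3 and~4, and your handling of the high-probability-to-expectation conversion mirrors what the paper already does via Lemma~5 and the bounded-rank remark. Correct, same approach.
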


\vspace{\baselineskip}
\begin{remark}
    Theorem \ref{cor:worstCaseWorkLowerBound} shows that our set union algorithms with randomized linking have optimal work per operation when $p = n^\varepsilon$ for constant $\varepsilon$.
\end{remark}

\vspace{\baselineskip}
\begin{remark}
    Theorem \ref{thm:worstCaseWorkLowerBound} establishes a separation in worst-case work complexity between 
    sequential and concurrent set-union when $p = n^{\omega(\frac{1}{\log\log n})}$ since Blum's sequential set-union 
    algorithm has a worst-case work complexity of $O(\frac{\log n}{\log\log n})$ \emph{\cite{Blum}}.
\end{remark}
   
\vspace{\baselineskip} 
\begin{lemma}\label{lem:lowerbound}
    Let $\mathcal{A}$ be a linearizable wait-free symmetric concurrent disjoint set union algorithm using read, write, and CAS.
	There is a schedule of $m$ operations on $n$ nodes by $p$ processes that forces $\mathcal{A}$t to perform
    $\Omega(m \log(np/m + 1))$ work.
\end{lemma}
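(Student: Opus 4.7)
My plan is to amortise the single-operation lower bound of Theorem~\ref{thm:worstCaseWorkLowerBound} across many independent subproblems, and to use the symmetry of $\mathcal{A}$ to promote the wake-up bound from ``some process does $\Omega(\log k)$ work'' to ``every process does $\Omega(\log k)$ work''.

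First I would dispose of the trivial regime $m > np$, where $\log(np/m+1) = O(1)$ and the bound is immediate from the fact that every operation costs at least one step. In the remaining regime $m \le np$, set $k = \lfloor np/m \rfloor$ and $G = \lfloor n/(k+1) \rfloor = \Theta(m/p)$, and partition the $n$ nodes into $G$ disjoint groups $V_1,\ldots,V_G$, each of size $k+1$. I would process the groups sequentially; within each group $V_g$ I would invoke the wake-up reduction of Lemma~\ref{lem:wakeup} on $\min\{k,p\}$ of the processes, and, if $p > k$, run parallel copies of the reduction on the same group (or on auxiliary nodes) so that every one of the $p$ processes executes one \textit{unite} and two \textit{find} operations per group. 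This accounts for $3pG = \Theta(m)$ operations in total.

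The crux of the argument is choosing the schedule inside each group to force $\Omega(\log k)$ work on \emph{every} participating process, not just on the single process whose existence Lemma~\ref{lem:wulowerbound} guarantees. The two symmetry conditions---that the code contains no process identifiers, and that CAS return values are ignored---imply that any relabelling of process indices yields an equivalent execution of $\mathcal{A}$, so the identity of the ``slow'' process in Jayanti's lower bound is interchangeable. Combining this invariance with a lock-step shadowing schedule, in which the participating processes advance one step at a time in a fixed rotation (as in the construction of Lemma~\ref{lem:algorithmicLogLowerBound}), I would argue that every one of the $\min\{k,p\}$ participating processes is forced through the same $\Omega(\log k)$ expected steps. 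A clean way to formalise this is via an averaging argument over all permutations of process identities: by the symmetry conditions the sum of per-process work is invariant under such a relabelling, so the bound on the maximum-work process translates into a bound of $\Omega(k \log k)$ on per-group total work, which under the shadowed schedule is realised evenly across the processes.

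Granted this step, the final accounting is routine. Each of the $G = \Theta(m/p)$ groups contributes $\Omega(p \log k) = \Omega(p \log(np/m + 1))$ expected work, for a grand total of $\Omega(m \log(np/m + 1))$ work spread over at most $m$ operations, which is the claimed bound. The main obstacle is the symmetrisation step: rigorously showing that Jayanti's existential wake-up lower bound survives both the shadowed scheduling and the per-process averaging, so that each process individually is forced through $\Omega(\log k)$ steps rather than merely one of them being so forced. Everything else---partitioning, wake-up reduction, and summation---is bookkeeping.
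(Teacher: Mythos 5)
There is a genuine gap at exactly the step you flag as the crux. Jayanti's wake-up bound (Lemma~\ref{lem:wulowerbound}) is existential and per-process: it says there \emph{exists} a schedule in which \emph{some} process performs $\Omega(\log k)$ steps; it gives no lower bound on the total work of the wake-up instance. Your permutation-averaging argument cannot upgrade it: symmetry says that relabelling process identities permutes the per-process work profile while leaving its sum unchanged, so invariance of the sum tells you nothing new --- the sum could be $\Theta(\log k)$ in every execution, with a different process being the expensive one under different labelings, and ``max $=\Omega(\log k)$'' never becomes ``sum $=\Omega(k\log k)$''. Moreover, once you impose your own round-robin lock-step schedule on the wake-up participants you lose the guarantee entirely, since the adversarial schedule of Lemma~\ref{lem:wulowerbound} need not be lock-step, and the participants are running \emph{different} operations ($\Unite{j-1,j}$ for different $j$), so lock-step does not make their executions identical. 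Your fallback of running $p/k$ parallel copies of the reduction only yields one expensive process per copy, i.e.\ $\Omega((p/k)\log k)$ work per group, far short of the $\Omega(p\log k)$ needed.

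The paper's proof supplies the missing mechanism, and it is different from what you propose: the wake-up instance is run on only $k$ processes ($A$), producing via Lemma~\ref{lem:wulowerbound} a schedule $\sigma_1$ and one expensive process $q_i$; then a \emph{disjoint} set $B$ of $\ge p/4$ processes is assigned the very same three set-union operations that $q_i$ performs (they are not part of the wake-up reduction at all), and $\sigma_1$ is refined so that the processes of $B$ run in lock-step shadowing $q_i$. Here is where symmetry is actually used: because the code contains no process ids and ignores CAS return values, each shadow executes step-for-step the same instruction sequence as $q_i$ (its CASes may fail where $q_i$'s succeed, but the control flow is identical), hence each does $\Omega(\log k)$ work. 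That gives $\Omega(p\log k)$ work per group of $O(p)$ operations, and concatenating over the $\Theta(m/p)$ groups gives $\Omega(m\log(np/m+1))$. Your partitioning, operation counting, and handling of the trivial regime $m>np$ are fine; the shadowing-by-outsiders step is the idea your proposal is missing.
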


\begin{proof}
	Divide the $n$ nodes into $g = \frac{m}{p}$ groups of size $k+1$, where $k = \frac{np}{8m}$ (disregard any additional nodes);
	label the groups $G_1,\ldots,G_g$.
	Note that $m \ge p$ and $m \ge \frac{n}{2}$, so $k \le \frac{p}{4}$.
	We divide the $\frac{p}{2}$ (out of the $p$) processes into two sets 
	$A = \{q_1,\ldots,q_k \}$ and $B = \{q_{k+1},\ldots,q_{p/2} \}$.
	Note that $|B| \ge \frac{p}{4}$ and $|A \cup B| = \frac{p}{2}$.\par

    \vspace{\baselineskip}	
	Consider running the wake-up algorithm of Lemma \ref{lem:wakeup} on the $k$ processes in $A$ using the $k+1$ nodes in $G_1$.
	By Lemma \ref{lem:wulowerbound} there is a schedule $\sigma_1$ in which some process $q_i$ performs $\Omega(\log k)$ steps.	
	Assign to each process in $B$ the same sequence of three set union operations that $q_i$ performs, and
	define schedule $\sigma_1'$ to be the schedule $\sigma_1$, with the processes of $B$ interleaved in
	to run in lockstep with $q_i$.
	In this schedule, the processes $q_1,\ldots,q_{p/2}$ perform $\Omega(p \log k)$ work to do $p$ set union operations.
	Repeating this procedure on each group $G_j$ produces schedules $\sigma_j'$, each of which performs $\Omega(p \log k)$ work.
	Therefore, in the concatenated schedule of $\sigma_1'\sigma_2'\cdots\sigma_g'$, the processes $q_1,\ldots,q_{p/2}$ perform
	a total of $\Omega(gp \log (k + 1)) = \Omega(m \log(np/m + 1))$ work to do a total of $gp = m$ operations.
\end{proof}

\vspace{\baselineskip}
\begin{theorem}
    Let $\mathcal{A}$ be any linearizable wait-free symmetric concurrent disjoint set union algorithm using read, write, and CAS.
	There is a schedule of $m$ operations on $n$ nodes by $p$ processes that forces $\mathcal{A}$ to perform
	$\Omega\left(m \left( \log{\left(\frac{np}{m} + 1 \right)} + \alpha{\left(n, \frac{m}{np} \right)} \right) \right)$ work in expectation.
\label{thm:lowerbound}
\end{theorem}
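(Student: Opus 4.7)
The theorem is the combination of Lemma~\ref{lem:lowerbound} (which already furnishes the logarithmic term) with an extension of Lemma~\ref{lem:concurrentAckermannBound} from the specific algorithms of Section~5 to the entire class of symmetric algorithms. The plan has three steps: derive each of the two terms separately and then concatenate the two adversarial schedules.

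The logarithmic term is immediate. Lemma~\ref{lem:lowerbound} already states that for any linearizable wait-free symmetric algorithm using read, write, and CAS there is a schedule of $m$ operations forcing $\Omega(m\log(np/m+1))$ expected work, so no additional work is needed here.

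For the inverse-Ackermann term, I would observe that the proof of Lemma~\ref{lem:concurrentAckermannBound} uses only two facts about the algorithms of Section~5: (i) when driven by a single process, each is a valid randomized sequential set-union algorithm, so the Fredman--Saks lower bound (Lemma~\ref{lem:sequentialLowerBound}) supplies a bad sequence $\sigma_M$ forcing $\Omega(M\,\alpha(n,M/n))$ expected work, and (ii) in a lock-step shadowing schedule, several processes running $\sigma_M$ simultaneously each still incur the full sequential work because no process benefits from another's compaction attempts or successful CAS operations. Both facts hold for any symmetric $\mathcal{A}$: (i) is immediate because ``symmetric'' is a purely syntactic condition compatible with sequential execution; (ii) is precisely what the symmetry definition buys us---since the code uses neither process ids nor CAS return values, when several identically-initialised processes execute in lock-step, each reads the same shared-memory values at the same logical step, and its own behaviour is unaffected by whether a peer's concurrent CAS succeeded or failed. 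Applying the same case analysis as in Lemma~\ref{lem:concurrentAckermannBound}---setting $M=m/p$ when $m\ge np$ and $M=n$ otherwise, and using the fact that $\alpha(n,1)$ and $\alpha(n,m/(np))$ agree up to a constant when $m/(np)<1$---then delivers $\Omega\bigl(m\cdot\alpha(n,m/(np))\bigr)$ expected work.

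Finally, to combine, I would concatenate the two adversarial schedules so that each occupies half of the $m$ operations. Because expected work is additive across disjoint phases, the total is $\Omega\bigl(m\bigl(\log(np/m+1)+\alpha(n,m/(np))\bigr)\bigr)$, as required; equivalently, one can simply pick whichever of the two schedules is larger for the given parameters, which is at least half the sum. The main obstacle is making the lock-step shadowing argument of step~(ii) rigorous for arbitrary symmetric $\mathcal{A}$: one needs an induction on the global step count showing that the private view of each shadowing process coincides with that of a solo execution against the same shared memory, which is exactly the property that the two clauses of the symmetry definition are designed to guarantee. Once this shadowing lemma is in place, the remainder of the theorem is a routine assembly of Lemmas~\ref{lem:concurrentAckermannBound} and~\ref{lem:lowerbound}.
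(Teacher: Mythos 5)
Your proposal is correct and takes essentially the same route as the paper, which proves the theorem by combining Lemma~\ref{lem:lowerbound} for the logarithmic term with Lemma~\ref{lem:concurrentAckermannBound}, noting that the latter's shadowing-schedule argument applies to all symmetric algorithms, and taking the better of the two adversarial schedules. Your extra discussion of making the lock-step shadowing argument rigorous merely spells out what the paper leaves implicit.
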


\begin{proof}
    Combine the results of Lemma \ref{lem:concurrentAckermannBound}---whose argument applies to all symmetric algorithms---with Lemma \ref{lem:lowerbound}.
\end{proof}

\vspace{\baselineskip}
\begin{remark}
    Theorem~\ref{thm:lowerbound} shows that the set union algorithm obtained by combining randomized linking with two-try splitting has optimal amortized work efficiency amongst all symmetric algorithms $($up to a constant factor$)$.
\end{remark}

\vspace{\baselineskip}
The ideas behind our collection of upper and lower bounds lead us to make the following conjecture about the expected work complexity of concurrent disjoint set union.

\vspace{\baselineskip}
\begin{conjecture}
The expected work complexity of the concurrent set union problem is
$$\Theta\left(m \cdot \left(\log \left(\frac{np}{m} + 1\right) + \alpha\left(n, \frac{m}{np}\right)\right)\right).$$
\end{conjecture}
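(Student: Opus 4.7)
The plan is to prove the conjecture by establishing matching upper and lower bounds. The upper bound is essentially free: Theorem~6 combined with any linking method of Section~5 and two-try splitting gives expected total work $O(m(\alpha(n,d) + \log(1 + 1/d)))$ with $d = m/(np)$, which is exactly the expression in the conjecture. So the real task is the matching expected-work lower bound, and specifically removing the symmetry restriction from Theorem~\ref{thm:lowerbound}.

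For the lower bound I would first observe that the per-operation bound $\Omega(\log\min\{n,p\})$ of Theorem~\ref{thm:worstCaseWorkLowerBound} already applies to all linearizable wait-free algorithms, not just symmetric ones, because Jayanti's wake-up lower bound used via Lemma~\ref{lem:wakeup} makes no symmetry assumption. What is missing for general algorithms is the amortized extension in Lemma~\ref{lem:lowerbound}, whose shadowing construction assigns the expensive operation sequence of one process in set $A$ to every process in set $B$ and runs them in lockstep, arguing that each process in $B$ must pay the wake-up cost because it executes identical code. Without symmetry, a process in $B$ could dispatch differently based on its identifier, or exploit the boolean returned by CAS to short-circuit work, so the lockstep extension does not directly apply.

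My approach would be two-pronged. The first prong is a symmetrization reduction: show that any linearizable wait-free concurrent disjoint set union algorithm with expected work $W$ can be transformed into a symmetric one with expected work $O(W)$, for instance by having each process draw a uniformly random identity at the start and by rewriting any instruction whose CAS return value steers control flow so the branch is taken on re-read of shared state instead, making structural decisions invariant under identity permutations. If such a reduction succeeds, the conjecture follows immediately from Theorem~\ref{thm:lowerbound}. The second prong is more direct: strengthen Lemma~\ref{lem:lowerbound} by combining the binomial-tree scheduling of Lemma~\ref{lem:treeBuilder} with an information-theoretic argument in the spirit of Jayanti's $k$-valence technique, so that across any adversarial schedule, even an asymmetric algorithm whose processes exploit IDs and CAS feedback must collectively explore enough distinct execution paths to incur $\Omega(m \log(np/m + 1))$ total work.

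The main obstacle will be ruling out clever asymmetric designs that use helping or announcement arrays (as in Section~6) to concentrate the expensive traversal work in a few processes while the rest piggyback on posted results. This is precisely the power that symmetry forbids, and it is why the authors state this as a conjecture: a symmetrization reduction must preserve nontrivial helping structure without blowing up the work, while a direct adversarial argument must somehow prevent the cross-process aggregation of work that a well-engineered asymmetric algorithm could in principle exploit. I expect any successful proof to combine both ideas, using a careful account of what information about the execution the scheduler can deny to the algorithm.
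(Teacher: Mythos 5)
This statement is a \emph{conjecture}: the paper deliberately leaves it open, noting that the upper bound half follows from Theorem 6 (randomized linking with two-try splitting), that the matching lower bound is known only for \emph{symmetric} algorithms (Theorem \ref{thm:lowerbound}), and that the only universal amortized lower bound known is the exponentially weaker $\Omega\bigl(m(\log\log(np/m+1)+\alpha(n,m/n))\bigr)$ bound of \cite{JayantiTarjanBoix}. Your proposal correctly diagnoses exactly this state of affairs --- upper bound free, per-operation $\Omega(\log\min\{n,p\})$ bound universal, amortized bound blocked by the symmetry hypothesis --- but it does not close the gap; it only names two possible attacks and concedes that neither is carried out.

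Concretely, the first prong (a symmetrization reduction with only constant-factor overhead) is precisely the open difficulty, not a step you can assume. The paper's own symmetrization argument works only for its own algorithms, because there the CAS return value is used solely to detect link success, and even that costs the paired-lockstep factor of two; for an arbitrary algorithm, replacing ``branch on CAS return value'' by ``branch on a re-read of shared state'' is not a faithful simulation (a re-read cannot distinguish a successful CAS from interference followed by an ABA-style restoration, so correctness and linearizability can break), and discarding process ids destroys helping/announcement schemes that concentrate traversal work in a few processes --- the very designs you identify as the obstacle. No argument is offered that such algorithms cannot beat the conjectured bound, so the reduction is unsubstantiated. The second prong likewise only gestures at Jayanti's valence-style technique without constructing an adversary or an information-theoretic potential, and the fact that the best known universal bound has a $\log\log$ rather than a $\log$ term indicates that a genuinely new idea is needed there. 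In short, your write-up is a reasonable research plan and an accurate account of why the statement is only a conjecture, but it is not a proof of either direction beyond what the paper already establishes.
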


\vspace{\baselineskip}
In light of Theorem 6, which shows that randomized linking with two-try splitting satisfies the conjectured upper bound, a refutation of Conjecture 1 would imply a more efficient algorithm than randomized linking with two-try splitting.
On the other hand, a demonstration of the conjecture would involve proving a universal lower bound; namely, showing that Theorem 9 holds for {\em all} algorithms (as opposed to only symmetric ones).
While this paper proves a universal lower bound on the worst-case complexity of a single operation, it does not prove any universal lower bounds on the total work complexity of $m$ operations.
The only such lower bound that is known for the problem is exponentially weaker than the conjectured one.
We state this bound, by Jayanti et al., below.

\vspace{\baselineskip}
\begin{thm}[\cite{JayantiTarjanBoix} ]
Let $\mathcal{A}$ be any linearizable wait-free concurrent disjoint set union algorithm using read, write, and CAS.
There is a schedule of $m$ operations on $n$ nodes by $p$ processes that forces $\mathcal{A}$ to perform
$\Omega\left(m \left( \log\log{\left(\frac{np}{m} + 1 \right)} + \alpha{\left(n, \frac{m}{n} \right)} \right) \right)$ work in expectation.
\end{thm}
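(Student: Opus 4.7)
The plan is to obtain the two summands of the lower bound independently and combine them, since $\Omega(\max\{A,B\}) \ge \Omega((A+B)/2)$.

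For the inverse-Ackermann term $\alpha(n, m/n)$, I would invoke Lemma \ref{lem:concurrentAckermannBound} directly. Its proof never uses symmetry: it simply runs the Fredman--Saks sequential worst-case instance $\sigma_m$ on a single process of $\mathcal{A}$ while the remaining $p-1$ processes stay idle. Since any linearizable wait-free concurrent algorithm is, when executed by one process, a valid sequential algorithm, Lemma \ref{lem:sequentialLowerBound} forces that process to perform $\Omega(m \alpha(n, m/n))$ expected work, and this accounts for the entire concurrent execution's cost.

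For the $\log\log(np/m+1)$ term, the plan is to build on the wake-up reduction of Lemma \ref{lem:wakeup} and Jayanti's worst-case wake-up lower bound (Lemma \ref{lem:wulowerbound}). I would partition the $p$ processes into cohorts and the $n$ nodes into matching disjoint groups of $k+1$ nodes, where $k$ is chosen small—on the order of $\log(np/m+1)$, so that $\log k = \Theta(\log\log(np/m+1))$. Each cohort runs the three-operation wake-up gadget of Lemma \ref{lem:wakeup} on its group, contributing $3k$ set-union operations; by Lemma \ref{lem:wulowerbound}, at least one participant of each cohort is forced to do $\Omega(\log k)$ expected work under an adversarial schedule. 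Summing these contributions across the full $m$-operation schedule is meant to yield the $\Omega(m \log\log(np/m+1))$ bound.

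The main obstacle—and the very reason the bound in this universal setting degrades from $\log$ (as in Theorem \ref{thm:lowerbound}) to $\log\log$—is the loss of shadowing. In the symmetric proof of Lemma \ref{lem:lowerbound}, extra processes are assigned in lock-step with the slow participant of each wake-up instance, so that all $k$ members of a cohort do $\log k$ work each; this is what made the per-operation accounting give $\log k$. Without symmetry, a participant can use its process identifier or the return value of a CAS to decide that it has ``nothing to do'' and halt in constant time, so only one slow participant per cohort is pinned down by Lemma \ref{lem:wulowerbound}. The delicate technical step is to arrange the cohorts, node groupings, and the scheduler's adversarial choices so that, even with only one slow process charged per wake-up instance, the contributions accumulate to $\Omega(m \log\log(np/m+1))$; this careful orchestration, which exploits the restricted information a CAS-only non-symmetric algorithm can use, is the core contribution of \cite{JayantiTarjanBoix}. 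Combining the resulting $\log\log(np/m+1)$ bound with the $\alpha(n, m/n)$ bound from the first paragraph completes the proof.
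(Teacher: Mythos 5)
Your first step is sound in substance: running the Fredman--Saks sequence on a single process of $\mathcal{A}$ uses only Lemma~\ref{lem:sequentialLowerBound} and no symmetry assumption, and it yields exactly the weaker term $\alpha(n, m/n)$ appearing in the statement. (Be careful with the citation, though: Lemma~\ref{lem:concurrentAckermannBound} is stated only for ``the algorithms we have described'' and its proof uses lock-step shadowing, which is an algorithm-specific device; for an arbitrary algorithm only the single-process argument you actually describe survives, which is precisely why the universal theorem has $\alpha(n,m/n)$ rather than $\alpha(n,m/(np))$.) Note also that this paper does not prove the theorem at all --- it is imported from \cite{JayantiTarjanBoix} --- so there is no in-paper argument to match; your proposal must stand on its own.

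It does not, because the accounting for the $\log\log$ term fails. With cohorts of $k = \Theta(\log(np/m+1))$ processes, Lemma~\ref{lem:wulowerbound} pins down only \emph{one} process per wake-up instance at $\Omega(\log k)$ expected steps, while each instance of the gadget of Lemma~\ref{lem:wakeup} contributes $\Theta(k)$ set-union operations. Summing over the $\Theta(m/k)$ instances gives total work $\Omega\left((m/k)\log k\right) = \Omega\left(m \,\log\log\left(\frac{np}{m}+1\right)/\log\left(\frac{np}{m}+1\right)\right)$, short of the target by a full $\log(np/m+1)$ factor (indeed it is $o(m)$). The step you defer --- ``arranging the cohorts and the scheduler so that the contributions accumulate'' --- is exactly the missing theorem: without symmetry you cannot shadow, so you need a lower bound on the \emph{total} work of all $k$ participants of a wake-up-type task (a bound of the form $\Omega(k\log\log k)$ for a generalized wake-up problem), applied to groups of size $\Theta(np/m)$ as in Lemma~\ref{lem:lowerbound}, not groups of size $\Theta(\log(np/m))$ with the single-process bound of Lemma~\ref{lem:wulowerbound}. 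Proving that total-work generalized wake-up bound is the actual content of \cite{JayantiTarjanBoix}; invoking it as the ``delicate technical step'' makes the proposal circular rather than a proof.
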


\label{sec:lowerbound}

\section{Remarks and Open Problems}

\vspace{\baselineskip}
We have presented three linking methods and two splitting methods for concurrent\ disjoint set union.  With any of the linking methods, with or without compaction, the number of steps per operation is $O(\log n)$, worst-case if linking is deterministic, high-probability if randomized.
With any of the linking methods and either of the splitting methods, the total work is $O(m(\alpha(n,d) + \log(1 + 1/d)))$, worst-case if linking is deterministic, average-case if randomized, where the problem density $d$ is $m/(np^2)$ if splitting is one-try, $m/(np)$ if splitting is two-try.
No matter what the density, the cost of concurrency is at most a factor of $\log p$, making our algorithms truly scalable.
The proofs of the bounds for linking by random index require assuming that the scheduler is non-adversarial, as discussed in Section 5.3.
The bounds differ for the two splitting methods only for a narrow range of densities: if $m/n = O(1)$ or $m/n = \Omega(p^2)$, the bounds are the same; if $m/n = \omega(1)$ and $m/n = o(p^2)$, the bounds differ by a factor of at most $\log p$.\  \par

\vspace{\baselineskip}
The $O(\log n)$ step bound is tight for all our algorithms. 
The work bounds for splitting are almost tight: any symmetric algorithm (as defined in Section 8) has a work bound of \linebreak
$\Omega\biggl( m \cdot \left( \log{\left(\frac{np}{m} + 1 \right)} + \alpha{\left(n, \frac{m}{np} \right)} \right) \biggr)$. 
We conjecture that the same lower bound can be shown for asymmetric algorithms also (Conjecture 1), but leave the proof or refutation of this statement as an open problem.
We also leave as an open problem finding an efficient {\em deterministic} algorithm that uses only CAS, not DCAS. 
We think the deterministic coin tossing method of Cole et al. \cite{CV86} can be used, but the resulting algorithm will be significantly more complicated than the one we have proposed, and the number of steps per operation and the total work bound will contain an additional factor of $\log^* p$.\par

\vspace{\baselineskip}
In some applications of disjoint set union, such as computing flow graph information \cite{tarjan1974finding, tarjan1973testing} each set has a name or some other associated value, such as the number of elements in the set.  We can extend the compressed tree date structure to support set values by storing these in the set roots.   In the sequential setting, it is easy to update set value information in $O(1)$ time during a link.  
But in the concurrent setting, updating the value in the new root during a link requires a DCAS or some more-complicated implementation using CAS.  
Updating root values using\ DCAS invalidates our analysis.  Consider $n$ singleton sets $\{1\}, \{2\},\ldots, \{n\}$.  
Suppose $p = n$, and unite(1, \textit{n}), unite(2, \textit{n}),$\ldots$, unite(\textit{n} – 1, \textit{n})\ are\ performed concurrently using linking by rank via DCAS.  Assume the tie-breaking total order is numeric.  At most one link will succeed initially, say the link of 1 and \textit{n}.\  After this link, all nodes except \textit{n} will still have rank 0, and \textit{n}\ will have rank 1.  The algorithm of Section 5.1 does all the remaining links concurrently using CAS, since none affects node \textit{n}.\  But if each such link needs to update the value in node \textit{n}, the remaining links must be done one-at-a-time, resulting in overall work $\Omega(np)$.\par

\vspace{\baselineskip}
We think this problem can be overcome, and that the concurrent set union problem with set values can be solved in a work bound that is quasi-linear in \textit{m} and poly-logarithmic in \textit{p}.\  But doing so may well require relaxing the linearization requirement: instead of continuing to try to link each node \textit{i} and \textit{n},\ suppose\ the\ algorithm does a different set of links that reduce the contention. 
For example, the algorithm could link roots in pairs, then the remaining roots in pairs, and so on.  The set resulting from all the links would be the same, but the intermediate sets would not correspond to any linearization of the original unites.  Even though it violates linearization, such an algorithm might suffice in many if not all applications.\par

\vspace{\baselineskip}
An\ algorithm of this kind needs a mechanism to restructure the links.  We think some sort of binary tree structure, like the one used by Ellen and Woelfel \cite{EW} in their fetch-and-increment algorithm but more dynamic, may suffice.  
We leave open the development of this idea or some other idea to solve the problem of concurrent sets with values.\par

\vspace{\baselineskip}
Finally, we think our methods will fruitfully extend to a distributed-memory, message-passing setting.

\end{document}